\documentclass{amsart}

\usepackage{amsmath,amssymb,amsfonts,amsthm,mathtools}
\usepackage{mathrsfs}
\usepackage{bm}
\usepackage{cancel}
\usepackage{xfrac}

\usepackage{physics}

\usepackage{tikz-cd}

\usepackage[a4paper,margin=1.2in]{geometry}
\usepackage{hyperref}

\usepackage{amssymb}

\numberwithin{equation}{section}

\newtheorem{theorem}{Theorem}[section]
\newtheorem{lemma}[theorem]{Lemma}
\newtheorem{proposition}[theorem]{Proposition}
\newtheorem{corollary}[theorem]{Corollary}
\newtheorem{remark}[theorem]{Remark}

\theoremstyle{definition}
\newtheorem{definition}[theorem]{Definition}

\theoremstyle{plain}
\usepackage[numbers]{natbib}

\newtheoremstyle{solutionstyle}
  {}      
  {}      
  {\itshape}
  {}
  {\bfseries}
  {.}
  { }
  {}

\theoremstyle{solutionstyle}

\title{Quantum fidelity on Krein and $S$-spaces}

\author[]{%
Morgan Jones\\
\small Department of Mathematics\\
\small University of Oklahoma\\
\small Norman, Oklahoma, USA\\
\small \texttt{morgan.d.jones-2@ou.edu}
}
\email{morgan.d.jones-2@ou.edu}

\AtBeginDocument{\RenewCommandCopy\qty\SI}
\begin{document}

\maketitle

\enlargethispage{3\baselineskip}
\tableofcontents
\section{Abstract}

The notion of Fidelity for quantum states is a measure of how much two states overlap. In the matrix formalism of quantum mechanics, states are represented by density operators i.e. positive semi-definite matrices with trace equal to $1$ in a complex Euclidean space $M_n(\mathbb{C})$. In \cite{FelipeSosa2022}, the notion of quantum states on certain Krein spaces with indefinite metric induced by a fundamental symmetry $J$ is introduced, these are given the name $J$-states. We will define an analogous notion of measurement for $J$-states to the regular quantum theory and use it to show that a notion of fidelity holds in the Krein setting. We will also show that there exists an analogous result to the Fuchs-Caves measurement holds in this setting. We will then, following the developments in \cite{bag2024quantumuchannelssspaces}, extend this definition of fidelity to $U$-quantum states on $S$-spaces. We will demonstrate that the analogous geometric motivation holds in the Krein and $S$-space setting, as holds for quantum fidelity and geometric means of operators.

\section{Introduction}
Before we consider the notions of Krein and $S$-space fidelity, we recall the basics of quantum fidelity. We will keep this in line, as in the rest of this paper, with the spirit of \cite{Watrous_2018} and in this introduction we  will recall various useful facts from Chapters 1,2 and 3 from this reference. An alphabet $\Sigma$ is a finite set of $n$ elements. The space $\mathbb{C}^\Sigma$ consists of complex functions $u:\Sigma\to \mathbb{C}$. For example if $\Sigma=\{1,\dots,n\}$, then each $u\in\mathbb{C}^{\Sigma}$ satisfies \begin{equation*}
    u=(u(1), u(2),\dots,u(n))
\end{equation*}
This can then be endowed with the standard Euclidean inner-product for such vectors $u$ and $v$ \begin{equation*}\langle u,v\rangle u\cdot v=\sum_{a\in \Sigma}\overline{u(a)}v(a)\end{equation*}
We note of course that this has the different convention to abstract functional analysis where the complex conjugate is taken in the second factor.\\
A register $X$ is the abstraction of the idea of a set of finite values that some system could take. This could be in the form of an alphabet $\Sigma$, or the cartesian product of several smaller registers. If $X$ is a register represented by the alphabet $\Sigma$ then $\mathbb{C}^\Sigma$ is the complex Euclidean space associated to this register. If the register is a compound register $X=(Y_1,\dots,Y_n)$, then this complex Euclidean space is
\begin{equation*}
    \mathcal{X}=\mathcal{Y}_1\otimes \cdots\otimes \mathcal{Y}_n
\end{equation*}
Given a complex Euclidean space $\mathcal{X}$, a quantum state is represented by a positive semidefinite linear operator from $\mathcal{X}$ to itself with trace equal to $1$. Such matrices are called density operators. Positive semidefinite operators form a cone $\mathrm{Pos}(\mathcal{X})$, in the vector space of all linear operators on the space $\mathcal{X}$. In physical systems, one obtains information about quantum states via ``measurements". The quantum information formalism of a measurement is a function \begin{equation*}
    \mu: \Sigma \to \mathrm{Pos}(\mathcal{X}) 
\end{equation*}
 such that $\sum_{a\in\Sigma}\mu(a)=\mathbf{1}_{\mathcal{X}}$. The formalism is that if a system is in the state $P$ just before measurement, the probability of picking a given element $a$ of a register $\Sigma$ is given by $\langle \mu(a), P\rangle$, wher $\langle\cdot,\cdot\rangle$ is the Hilbert-Schmidt inner-product. Given two positive semidefinite operators $P$ and $Q$, the quantity known as fidelity is defined as 
\begin{equation*}
    F(P,Q)=\left\|\sqrt{P}\sqrt{Q}\right\|_1=\mathrm{Tr}\left(\sqrt{\sqrt{Q}P\sqrt{Q}}\right)
\end{equation*} where the Schatten $1$-norm is taken. This definition is actually the minimization of the quantity 
\begin{equation*}
    \sum_{a\in\Sigma}\langle \sqrt{\mu(a),P\rangle}\sqrt{\langle \mu(a),Q\rangle} 
\end{equation*}
over all possible measurements $\mu$. This geometrically is the minimization over all possible measurements $\mu$ of the marginal distribution of probability vectors $p$ and $q$
\begin{equation*}
    \sum_{a\in \Sigma}\sqrt{p_aq_a}
\end{equation*}
where the $a$-th entries of $p$ and $q$ are $\langle \mu(a),P\rangle$ and $\langle \mu(a),Q\rangle$ respectively.
The question we will address is what happens when we equip the space of complex linear operators with an indefinite inner product. The first indefinite setting we will consider is that of the Krein space. A Krein space is a space $\mathcal{H}$ equipped with an indefinite nondegenerate inner-product $[ \cdot,\cdot]$ under which the space can be decomposed into positive and negative parts where elements have either positive or negative inner-product with themselves. A Krein inner-product can always be written using the ambient inner-product of the Hilbert space as $\langle J\cdot,\cdot \rangle$ where $J$ is a self-adjoint involution in the space of linear operators of the space $\mathcal{H}$. After we have developed this theory, we will then move into the setting of $S$-spaces, where the inner-product is still indefinite but we do not require the self-adjointness of the matrix inducing the indefinite inner-product. Several authors (\cite{FelipeSosa2022}, \cite{bag2024quantumuchannelssspaces}, \cite{Heo2022}) have already started to consider the appropriate definitions for objects such as states, channels and measurements in these indefinite settings , where  The approaches in these two matrices are very much motivated by the theory of $C^*$-algebras. In this paper, we will avoid this setting. We will formalize the coming notions from a more geometric standpoint, using and adapting the work \cite{franco2026conejhermitianmatricesgeometric}. Our constant aim is to show that the definitions of fidelity we construct behave in analogous ways to the regular quantum theory. The structure we will approach this with is as follows: 
\begin{itemize}
    \item In Section \ref{Section 3} we will introduce the notational conventions that will be used to consider the quantum objects under consideration. This is especially important as we will use notation in line with \cite{Watrous_2018} which is different to many other recent papers in this area \cite{bag2024quantumuchannelssspaces}, \cite{FelipeSosa2022}, \cite{Heo2022})
    \item In Section \ref{Section 4}, we will review some of the differential geometric notions of $J$-spaces which will form the philosophical backbone of our definition of $J$-fidelity and later of $U$-fidelity. These facts are largely taken from \cite{franco2026conejhermitianmatricesgeometric} and \cite{lee_introduction_2012}. Some of these facts will be stated without proof as they are foundational results in differential geometry and easily verified. We will then introduce the basics of Krein spaces where our notation will largely follow \cite{franco2026conejhermitianmatricesgeometric}. More general facts about Krein spaces can be found in the foundational text \cite{Bognar1974} and \cite{Azizov1981}. It is important to note in \cite{Bognar1974} that the indefinite inner-product is introduced as an object in its own right instead of a modification of any underlying Hilbert inner-product. Hence, when verifiyng any claims by using this book, one must be extra vigilant with the changing of notation. Another important fact of which we must take heed is that in the quantum setting, the indefinite inner-product will be taken by introducing a unitary matrix in the first factor of the inner-product, which is the opposite side to that considered in some books on abstract operators. 
\item In Section \ref{Section 5}, we will consider the notion of Krein space measurements. The authors of \cite{FelipeSosa2022} allude to this notion in \cite{FelipeSosa2022} and in fact, the more general definition for $U$-spaces is made in \cite{bag2024quantumuchannelssspaces}, which naturally applies to the Krein setting. We will then introduce the definition of $J$-fidelity for Krein quantum states. We will provide the justification for this definition in a similar manner to the work done in \cite[pp.~153--154], which demonstrates the work of Fuchs and Caves in \cite{fuchs1996mathematicaltechniquesquantumcommunication} where quantum fidelity is shown to minimize the so-called Bhattacharyya coefficients of two states. We then follow this up with more observations from Chapter 3 of \cite{Watrous_2018} which follow immediately from the definition of the $J$-fidelity.
\item In Section \ref{Krein section}, we will consider the maps which behave nicely with $J$-fidelity, namely $J$-channels. Such objects have already been introduced and considered in \cite{FelipeSosa2022} and \cite{Heo2022}. We will present these from a more abstract linear algebra perspective, rather than the $C^*$-algebra presentation contained in those texts. We will show that the definitions made provide an analog of the information processing inequality. 
\item In Section \ref{Section 7}, we will consider several new definitions motivated by the understanding that a Krein space has positive and negative subspaces. There are several notions of weighted-fidelity, each with various desirable and undesirable properties. The idea of looking at such subspace motivated definitions relating to measurement has been touched upon in \cite{FelipeSosa2022} already, and here we flesh out what appropriate weighted fidelity should perhaps be.
\item In Section \ref{Section 8}, we will then repeat a lot the program of Sections \ref{Section 4}, \ref{Section 5} and \ref{Krein section} but now in the setting of Quantum $U$-channels on $S$-spaces. These are a natural generalization of $J$-channels on Krein spaces and were first considered in \cite{bag2024quantumuchannelssspaces}. We will demonstrate that $U$-fidelity makes sense with respect to the geometric notions which exist for $J$-states and we will repeat some of the program of Section \ref{Section 4} with more care, verifying the equivalent details hold in the $U$-space setting as compared with the $J$-space setting laid out in \cite{franco2026conejhermitianmatricesgeometric} and \cite{Watrous_2018} with regards to fidelity and operator means.  

\end{itemize}

\section{Notational Conventions}\label{Section 3}
We will largely keep notation as consistently as possible with \cite{Watrous_2018} since many of the results are direct analogs of theorems stated in Chapter $3$ of this book. 
In the subsequent sections, we will be considering complex finite-dimensional inner-product spaces, which we will denote with calligraphic letters $\mathcal{X},\mathcal{Y},\mathcal{Z}$ etc. The space of linear operators on such a space $\mathcal{X}$ will be denoted $L(\mathcal{X})$. The space of invertible linear transformations will be denoted $\mathrm{GL}(\mathcal{X})$ and the cones of positive definite and positive semi-definite matrices will be denoted $\mathrm{Pd}(\mathcal{X})$ and $\mathrm{Pos}(\mathcal{X})$, respectively. The set of Hermitian matrices over a space $\mathcal{X}$ will be denoted $\mathrm{Herm}(\mathcal{X})$. Quantum states will be represented by positive semidefinite operators whose traces are equal to $1$, and given a complex Euclidean space $\mathcal{X}$, we will denote this set by $D(\mathcal{X})$. 

\section{Geometric preliminaries}\label{Section 4}
\subsection{Necessary facts from the geometry of Positive Definite Matrices.}
The following recalls some useful facts from \cite{franco2026conejhermitianmatricesgeometric}.
\begin{lemma}\cite{franco2026conejhermitianmatricesgeometric}
    The cone of positive definite matrices $\mathrm{Pd}(\mathcal{X})$ forms a smooth open submanifold of $L(\mathcal{X})$. For each $P\in \mathrm{Pd}(\mathcal{X})$, the tangent space at $P$, $T_P$ can be given the inner-product 
    \begin{equation*}\langle U, V\rangle_P=\mathrm{Tr}(P^{-1}UP^{-1}V\rangle\end{equation*}
\end{lemma}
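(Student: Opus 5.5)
The plan has two parts: show that $\mathrm{Pd}(\mathcal{X})$ is open, and then check that $\langle\cdot,\cdot\rangle_P$ is a smoothly varying inner product on each tangent space.

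\textbf{Openness.} I will regard $\mathrm{Pd}(\mathcal{X})$ as an open subset of the real vector space $\mathrm{Herm}(\mathcal{X})$. Since every open subset of a finite-dimensional real vector space is a smooth manifold, this identifies $T_P$ canonically with $\mathrm{Herm}(\mathcal{X})$. The statement says ``submanifold of $L(\mathcal{X})$''; the intended meaning must be an embedded submanifold via the real-linear subspace $\mathrm{Herm}(\mathcal{X})\subset L(\mathcal{X})$, because $\mathrm{Pd}(\mathcal{X})$ is not open in $L(\mathcal{X})$ itself. For openness, fix $P\in\mathrm{Pd}(\mathcal{X})$ and let $\lambda_{\min}(P)>0$ be its smallest eigenvalue. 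If $H\in\mathrm{Herm}(\mathcal{X})$ satisfies $\|H\|_\infty<\lambda_{\min}(P)$, then for every unit vector $u$,
\[
\langle u,(P+H)u\rangle \ge \lambda_{\min}(P)-\|H\|_\infty>0 .
\]
So a whole operator-norm ball around $P$ lies in $\mathrm{Pd}(\mathcal{X})$. Equivalently, $\lambda_{\min}$ is continuous on $\mathrm{Herm}(\mathcal{X})$, and $\mathrm{Pd}(\mathcal{X})=\lambda_{\min}^{-1}((0,\infty))$.

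\textbf{The inner product.} I take $T_P\cong\mathrm{Herm}(\mathcal{X})$ and show that $\langle U,V\rangle_P=\mathrm{Tr}(P^{-1}UP^{-1}V)$ is a real inner product there.
\begin{itemize}
\item Real-valuedness and symmetry: by cyclicity of the trace, $\mathrm{Tr}(P^{-1}UP^{-1}V)=\mathrm{Tr}(P^{-1}VP^{-1}U)$. Taking adjoints, and using that $P^{-1},U,V$ are Hermitian, gives $\overline{\mathrm{Tr}(P^{-1}UP^{-1}V)}=\mathrm{Tr}(VP^{-1}UP^{-1})$, which is the same quantity. Hence the form is real and symmetric.
\item Bilinearity: this is immediate from linearity of the trace.
\item Positive definiteness: write $P^{-1}=P^{-1/2}P^{-1/2}$ and use cyclicity to get
\[
\langle U,U\rangle_P=\mathrm{Tr}\big((P^{-1/2}UP^{-1/2})^2\big)=\|P^{-1/2}UP^{-1/2}\|_2^2\ge 0 .
\]
Equality holds only if $P^{-1/2}UP^{-1/2}=0$, and invertibility of $P^{-1/2}$ then forces $U=0$.
\end{itemize}

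\textbf{Smoothness.} To get a Riemannian metric I still need smooth dependence on $P$. The map $P\mapsto P^{-1}$ is smooth on $\mathrm{GL}(\mathcal{X})$, since Cramer's rule expresses the entries as rational functions with nonvanishing denominator. Therefore the coefficients $P\mapsto\mathrm{Tr}(P^{-1}E_iP^{-1}E_j)$ are smooth for any fixed basis $\{E_i\}$ of $\mathrm{Herm}(\mathcal{X})$.

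The main obstacle is not computational. It is getting the ambient space and tangent-space identification right, namely working in $\mathrm{Herm}(\mathcal{X})$ rather than in $L(\mathcal{X})$. If one insisted on $T_P\cong L(\mathcal{X})$, the form would be complex-valued and not positive definite, so the restriction to $\mathrm{Herm}(\mathcal{X})$ is essential. Once that is fixed, the remaining verification is the positive-definiteness step via the congruence $U\mapsto P^{-1/2}UP^{-1/2}$.
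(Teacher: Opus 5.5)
Your proof is correct and complete. The paper gives no proof of this lemma to compare against: it imports the lemma from \cite{franco2026conejhermitianmatricesgeometric} as one of the foundational facts that are ``stated without proof.''

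Your argument is the standard one. You show $\mathrm{Pd}(\mathcal{X})$ is open in the real space $\mathrm{Herm}(\mathcal{X})$ and identify $T_P$ with $\mathrm{Herm}(\mathcal{X})$. Positivity of $\langle U,U\rangle_P$ follows from the congruence $U\mapsto P^{-1/2}UP^{-1/2}$, and smooth dependence on $P$ from smoothness of inversion on $\mathrm{GL}(\mathcal{X})$.

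Your correction of the ambient space is right and worth keeping. With $n=\dim(\mathcal{X})$, $\mathrm{Herm}(\mathcal{X})$ has real dimension $n^2$ inside the $2n^2$-dimensional real space $L(\mathcal{X})$, so $\mathrm{Pd}(\mathcal{X})$ has empty interior in $L(\mathcal{X})$. It is therefore an embedded, not an open, submanifold of $L(\mathcal{X})$. The lemma's wording, and the paper's follow-up phrase ``open Riemannian submanifold of the space of all $n\times n$ complex matrices,'' are literally inaccurate.

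Likewise, the trace form is a genuine real inner product only once tangent vectors are restricted to be Hermitian, as you note. For example, $U=V=i\mathbf{1}_{\mathcal{X}}$ gives $-\mathrm{Tr}(P^{-2})<0$.
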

which turns $\mathrm{Pd}(\mathcal{X})$ into an open Riemannian submanifold of the space of all $n\times n$ complex matrices. 
\begin{definition}\cite{franco2026conejhermitianmatricesgeometric}
    \item The matrix exponential $\exp:L(\mathcal{X})\to GL(\mathcal{X})$ for $A\in L(\mathcal{X})$ is defined as
    \begin{equation*}A\mapsto \sum_{k=0}^\infty \frac{A^k}{k!}\end{equation*} and is a bijection to $\mathrm{Pd}(\mathcal{X})$ from $\mathrm{Herm}(\mathcal{X})$. 
    \item The inverse $\exp^{-1}:\mathrm{Pd}(\mathcal{X})\to \mathrm{Herm}(\mathcal{X})$ is the unique function \begin{equation*}\exp^{-1}=\log:\mathrm{Pd}(\mathcal{X})\to \mathrm{Herm}(\mathcal{X})\end{equation*} 
\end{definition}
\begin{lemma}\cite{franco2026conejhermitianmatricesgeometric}
    Given $P,Q\in \mathrm{Pd}(\mathcal{X})$, then unique geodesic between $P$ and $Q$ in $\mathrm{Pd}(\mathcal{X})$ is given by $\gamma:[0,1]\to \mathrm{Pd}(\mathcal{X})$,
    \begin{align*}
        \gamma(t)=\sqrt{P}\left(\exp(t\log(P^{-\frac{1}{2}}QP^{-\frac{1}{2}}))\sqrt{P}=\sqrt{P}(P^{-\frac{1}{2}}QP^{-\frac{1}{2}}\right)^t\sqrt{P}
    \end{align*}
\end{lemma}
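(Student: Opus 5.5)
The plan is to reduce to the case $P=\mathbf{1}$ by a congruence isometry, solve the geodesic problem there, and transport back. The first step is the invariance of the metric of the first lemma. For $A\in \mathrm{GL}(\mathcal{X})$ set $\Phi_A(X)=AXA^*$. This map sends $\mathrm{Pd}(\mathcal{X})$ to itself, and its differential is $U\mapsto AUA^*$. Using $(AXA^*)^{-1}=A^{-*}X^{-1}A^{-1}$ and cyclicity of the trace, I check that
\begin{equation*}
\mathrm{Tr}\bigl((AXA^*)^{-1}AUA^*(AXA^*)^{-1}AVA^*\bigr)=\mathrm{Tr}(X^{-1}UX^{-1}V).
\end{equation*}
Hence $\Phi_A$ is an isometry, and it maps geodesics to geodesics. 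I take $A=P^{-1/2}$. This sends $P$ to $\mathbf{1}$ and $Q$ to $M:=P^{-1/2}QP^{-1/2}\in\mathrm{Pd}(\mathcal{X})$. So it suffices to show that the unique geodesic from $\mathbf{1}$ to $M$ is $t\mapsto M^t=\exp(t\log M)$. Applying $\Phi_{\sqrt P}$ then gives the stated formula. The equality of the two expressions in the statement is just the definition $M^t=\exp(t\log M)$, with $\log$ the inverse bijection from the definition above.

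For existence, I would work with the length and energy functional. For a curve $\gamma$ the speed is
\begin{equation*}
\|\gamma'\|_\gamma=\bigl\|\gamma^{-1/2}\gamma'\gamma^{-1/2}\bigr\|_2.
\end{equation*}
For $\gamma(t)=M^t$ this equals $\|\log M\|_2$, which is constant. The key inequality is the exponential metric increasing property: for a $C^1$ curve $\gamma$ in $\mathrm{Pd}(\mathcal{X})$,
\begin{equation*}
\|\tfrac{d}{dt}\log\gamma(t)\|_2\le \|\gamma^{-1/2}\gamma'\gamma^{-1/2}\|_2.
\end{equation*}
To prove it, I diagonalize $\gamma(t)=\sum_i\lambda_i E_i$ and use the Daleckii–Krein formula. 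The derivative of $\log$ in the direction $H$ has entries $\frac{\log\lambda_i-\log\lambda_j}{\lambda_i-\lambda_j}H_{ij}$. The bound then follows from the scalar inequality $\frac{\log a-\log b}{a-b}\le \frac{1}{\sqrt{ab}}$, which is the logarithmic–geometric mean inequality. Integrating gives
\begin{equation*}
L(\gamma)\ge \|\log\gamma(1)-\log\gamma(0)\|_2=\|\log M\|_2=L(M^t).
\end{equation*}
So $M^t$ is length minimizing, and since it has constant speed it is a geodesic.

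Uniqueness is the step I expect to be the main obstacle. There are two routes. The first is to derive the geodesic equation $\gamma''=\gamma'\gamma^{-1}\gamma'$ from the energy $\int\mathrm{Tr}(\gamma^{-1}\gamma'\gamma^{-1}\gamma')\,dt$. Any solution through $\mathbf{1}$ with $\gamma'(0)=H\in\mathrm{Herm}(\mathcal{X})$ is $\exp(tH)$, by ODE uniqueness and a direct check. Requiring $\exp(H)=M$ forces $H=\log M$, because $\exp$ is a bijection from $\mathrm{Herm}(\mathcal{X})$ onto $\mathrm{Pd}(\mathcal{X})$, as recorded in the definition above. This shows that the exponential map at $\mathbf{1}$ is a diffeomorphism, which rules out a second geodesic joining the two points. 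The second route is the standard fact that $\mathrm{Pd}(\mathcal{X})$ is a Hadamard manifold, being complete with nonpositive curvature; the Cartan–Hadamard theorem then gives uniqueness directly. I would present the first route and check the Euler–Lagrange computation carefully.
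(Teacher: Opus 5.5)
Your proof is correct. The paper gives no argument for this lemma: it quotes it from \cite{franco2026conejhermitianmatricesgeometric} among the facts ``stated without proof'', so there is no internal proof to match against.

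The only point of contact is the method. For the $J$- and $U$-analogues, the paper transports geodesics back to this lemma through the isometry $X\mapsto JX$ (resp.\ $X\mapsto U^*X$) and takes this lemma as known. You use the same transport principle, but through the congruence $X\mapsto AXA^*$. That moves $P$ to $\mathbf{1}$ and lets you prove the base case on which the paper's later geodesic theorems rest. Since $\Phi_A^{-1}=\Phi_{A^{-1}}$ is also an isometry, geodesics from $P$ to $Q$ correspond bijectively to geodesics from $\mathbf{1}$ to $M$, so uniqueness transfers as well as existence.

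The ODE part of your argument settles the statement on its own:
\begin{itemize}
\item $e^{tH}$ solves $\gamma''=\gamma'\gamma^{-1}\gamma'$;
\item every geodesic through $\mathbf{1}$ has this form;
\item injectivity of $\exp$ on $\mathrm{Herm}(\mathcal{X})$ then gives existence and uniqueness on $[0,1]$. You only need injectivity here, not that $\exp$ is a diffeomorphism.
\end{itemize}
So the exponential-metric-increasing step is optional for this lemma. What it buys is stronger: $M^t$ is length-minimizing, and the Riemannian distance is $\|\log(P^{-1/2}QP^{-1/2})\|_2$.

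Two details are worth making explicit.
\begin{itemize}
\item Tangent vectors are Hermitian. The form $\mathrm{Tr}(P^{-1}UP^{-1}V)$ is not an inner product on all of $L(\mathcal{X})$, despite the paper's phrasing. Hermiticity is what lets you conclude $\gamma^{-1}(\gamma''-\gamma'\gamma^{-1}\gamma')\gamma^{-1}=0$ from the vanishing of the first variation against all Hermitian $\eta$, because that expression is itself Hermitian.
\item Compute the Daleckii--Krein entries in an orthonormal eigenbasis of $\gamma(t)$, not via spectral projections. Read the divided difference as $1/\lambda_i$ when $\lambda_i=\lambda_j$.
\end{itemize}
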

When $t=\frac{1}{2}$ we obtain the geometric mean of the operators $P$ and $Q$. It was shown by Fuchs and Caves in \cite{fuchs1996mathematicaltechniquesquantumcommunication} that ``measuring'' in this eigenbasis achieves the minimal value of the overlap of the corresponding Bhattacharyya coefficients, so achieves the optimal fidelity between states.

\subsection{Finite-dimensional Krein spaces}
The following subsection follows from \cite{franco2026conejhermitianmatricesgeometric}
. A Krein space, in general, is an indefinite inner-product space, where vectors are allowed to have non-positive inner-products with themselves. This is achieved by an involution, normally denoted $J\in L(\mathcal{X})$. There is also the majorant topology property of Krein spaces, but this is irrelevant for us and will not come up, so we take the following to be our definition. 
\begin{definition}\cite{franco2026conejhermitianmatricesgeometric}
A Krein space $(\mathcal{X},[\cdot,\cdot]_J)$ is an inner-product space $(\mathcal{X},\langle\cdot,\cdot\rangle)$ along with a self-adjoint involution $J\in L(\mathcal{X})$ where \begin{equation*}[\cdot,\cdot]_J=\langle J \cdot,\cdot\rangle\end{equation*} and such that $[\cdot,\cdot]$ is nondegenerate. 
\end{definition}

When we say terms such as ``Krein matrix", we mean a matrix as in the above definition.

Since every space we consider is finite-dimensional, we can replace inner-product with Hilbert in the above definition.\\
In the Krein space, we define $J$-multiplication as follows; \begin{equation*}\forall A,B\in L(\mathcal{X}), A\bullet B=AJB\end{equation*} 
We list several important definitions. 
\begin{definition}\cite{franco2026conejhermitianmatricesgeometric} Let $\mathcal{X}$ be a complex Euclidean space and $J$ a Krein matrix. 
\begin{itemize}
        \item The cone of $J$-positive definite matrices, $\mathrm{Pd}_J(\mathcal{X})$ is defined as the set of $A\in L(\mathcal{X})$ such that 
        \begin{equation*}[Ax,x]_J>0\end{equation*} for all $x\in \mathcal{X}$ nonzero. 
          \item The cone of $J$-positive semi-definite matrices, $\mathrm{Pos}_J(\mathcal{X})$ is defined as the set of $A\in L(\mathcal{X})$ such that 
        \begin{equation*}[Ax,x]_J\geq 0\end{equation*} for all $x\in \mathcal{X}$. 
    \end{itemize}
\end{definition}
\begin{definition}\cite{franco2026conejhermitianmatricesgeometric}
 Let  $A\in L(\mathcal{X})$, 
  \begin{itemize}
  \item The $J$-adjoint of $A$ is the matrix \begin{equation*}A^\#=JA^\#J\end{equation*} which satisfies 
    \begin{equation*}[Ax,y]_J=[x,A^*y]_J\end{equation*} for all $x,y\in \mathcal{X}$.
    \item The $J$-inverse of $A$ (where $A$ is invertible) is defined as 
    \begin{equation*}A_J^{-1}=JA^{-1}J\end{equation*}
    \item $A$ is $J$-Hermitian when $A^\#=JA^*J=A$
    \end{itemize}
\end{definition}
\begin{remark}We notice that any $J$-positive (semi)definite operator can be made into a positive (semi)definite operator upon multiplication by $J$. This corresponds to 
``untwisting" the negative part of the cone of $J$-positive semidefinite matrices. \end{remark}

It is well known that the cone of positive definite matrices is an open sub-manifold of the space of $n\times n$ complex matrices $M_n(\mathbb{C}))$. Since $J$ is a unitary matrix, multiplication by $J$ is a diffeomorphism of $M_n(\mathbb{C})$ to itself and hence restricting to the cone of positive definite matrices is a diffeomorphism onto its image, the cone of $J$-positive definite matrices.

In \cite{franco2026conejhermitianmatricesgeometric}, analogous maps to the matrix exponential and the logarithm are defined as follows:
\begin{definition}\cite{franco2026conejhermitianmatricesgeometric}
\begin{itemize}Let $\mathcal{X}$ be a complex Euclidean space and $J$ a Krein matrix.
   \item The $J$-exponential map is defined as 
    \begin{equation*}\exp_J(X)=J\exp(JX)\end{equation*} 
    \item The $J$-logarithm is the inverse of the $J$-exponential map 
    \begin{align*}\log_J&=\exp_J^{-1}:\mathrm{Pd}_J(\mathcal{X})\to \mathrm{Herm}_J(\mathcal{X})\\
    &\log_J(A)=J\log(JA)
    \end{align*}
\end{itemize}
\end{definition}

Given a space $\mathcal{X}$, the space $\mathrm{Herm}(\mathcal{X}))$ is a finite-dimensional subspace isomorphic to $\mathbb{R}^{n^2}$ and is an embedded sub-manifold. Since multiplication by unitary matrices is a global diffeomorphism of $L(\mathcal{X})$ to itself, restricting this multiplication to $\mathrm{Herm}(\mathcal{X}))$ is a diffeomorphism onto its image, $\mathrm{Herm}_J(\mathcal{X})$ which also gains the structure of an embedded submanifold of $L(\mathcal{X})$. Hence multiplication by $J$ is a global diffeomorphism from the $\mathrm{Herm}_J(\mathcal{X})$ to $\mathrm{Herm}(\mathcal{X})$. Since the matrix exponential is a diffeomorphism from $\mathrm{Herm}_J(\mathcal{X})$ to $\mathrm{Pd}(\mathcal{X})$ and multiplication by $J$ is a diffeomorphism from $\mathrm{Pd}_J(\mathcal{X})$, we can conclude that the $J$-exponential is a global diffeomorphism from $\mathrm{Herm}_{J}(\mathcal{X})$. (For basic facts about embedded submanifolds see Chapter 5 of \cite{lee_introduction_2012}).\\
From the above we obtain the following commutative diagram.

\begin{equation*}
\begin{tikzcd}[column sep=large, row sep=large]
\mathrm{Herm}(\mathcal{X})
  \arrow[r, shift left=1.2ex, "\exp"]
&
\mathrm{Pd}(\mathcal{X})
  \arrow[l, shift left=1.2ex, "\log"]
\\
\mathrm{Herm}_{J}(\mathcal{X})
  \arrow[u, "J(\cdot)"]
  \arrow[r, shift left=1.2ex, "\exp_{J}"]
&
\mathrm{Pd}_{J}(\mathcal{X})
  \arrow[l, shift left=1.2ex, "\log_{J}"]
  \arrow[u, "J(\cdot)"]
\end{tikzcd}
\end{equation*}
 Using these maps, we can define general $J$-powers of a matrix $A\in\mathrm{Pd}_J(\mathcal{X})$ by 
\begin{equation*}X_J^t=\exp_J(t\log_J(X)=J(JX)^t\end{equation*}
In \cite{franco2026conejhermitianmatricesgeometric}, using the map $\Phi_J:\mathrm{Pd}(\mathcal{X})\to \mathrm{Pd}(\mathcal{X})$, $X\mapsto JX$, a Riemannian metric on $\mathrm{Pd}_J(\mathcal{X})$ is constructed by taking the pullback metric $\omega=\Phi^*_J(\langle\cdot,\cdot\rangle)$, which satisfies at $P\in \mathrm{Pd}_J(\mathcal{X})$, 
\begin{equation*}\omega_P(U,V)=\langle JU,JV\rangle_{JP}=\mathrm{Tr}(P^{-1}UP^{-1}V)\end{equation*}
It is then shown that given $A,B\in \mathrm{Pd}_J(\mathcal{X})$, the map $\gamma:[0,1]\to \mathrm{Pd}_J(\mathcal{X})$ given by 
\begin{equation*}\gamma(t)= A_J^\frac{1}{2}\bullet(A_J^{-\frac{1}{2}}\bullet B\bullet A_J^{-\frac{1}{2}})^t_J\bullet A_J^{\frac{1}{2}}\end{equation*} is the equation of the geodesic between $A$ and $B$ in $\mathrm{Pd}_J(\mathcal{X})$ with respect to $\omega$. When $t=\frac{1}{2}$, this is called the $J$-geometric mean of $A$ and $B$.\\
In the subsequent section, we will show that for ``$J$-quantum states", $P$ and $Q$, considering the $J$-geometric mean of $P$ and $Q^{-1}$ provides the appropriate eigenbasis for defining the fidelity between states.

\section{Fidelity between J-quantum  states}\label{Section 5}

\subsection{Krein Space States and Measurements}

\begin{definition}\cite{FelipeSosa2022}.
    Given a Krein space $(\mathcal{X}, J)$, a $J$-quantum state or Krein quantum state is an operator $P\in L(\mathcal{X})$ such that $JP$ is a quantum state.
\end{definition}
In the regular quantum theory, the notion of measurement is central to the concept of fidelity. Given a complex Euclidean space $\mathcal{X}$ and alphabet $\Sigma$, a measurement $\mu:\Sigma \to \mathrm{Pos}(\mathcal{X})$ which satisfies 
\begin{equation*}\Sigma_{a\in \Sigma} \mu(a)=\mathbf{1}_{\mathcal{X}}\end{equation*}   This definition is made with the understanding that if $P\in D(\mathcal{X})$ represents the state of a register $X$, if an element of $\Sigma$ is selected at random, the probability distribution that describes this random selection is the probability vector $p$ in $\mathbb{R}^{\Sigma}$ such that
\begin{equation*}p(a)=\langle \mu(a), P\rangle\end{equation*} \cite[p.~101]{Watrous_2018}.
Using this formalism, we make the analogous notion for $J$-quantum states.

\begin{definition}
    Given a finite dimensional Euclidean space $\mathcal{X}$, an alphabet $\Sigma$ and a fundamental symmetry $J$ on $\mathcal{X}$, a $J$-measurement is a function 
    \begin{equation*}\mu:\Sigma\to \mathrm{Pos}_J(\mathcal{X})\end{equation*}
    such that $\sum_{a\in\Sigma}\mu(a)=J$.
\end{definition}
We make this definition with the analogous measurement interpretation. If the space Krein space $\mathcal{X}$ is in the state $\rho$ when an element of $\Sigma$ is selected at random, the probability distribution that describes this selection is given by $p\in \mathcal{P}(\Sigma)$ where 
\begin{equation*}p(a)=\langle \mu(a), \rho\rangle = \langle J\mu(a),J\rho\rangle  \end{equation*}
are the entries of the probability vector.

\subsection{Definition of \texorpdfstring{$J$}{J}-fidelity}
\subsubsection{Motivation for definition}
In quantum information, fidelity is motivated by the following: given density operators $P,Q\in D(\mathcal{X})$, we aim to minimize the value of 
\begin{equation*}\sum_{a\in \Sigma} \sqrt{p(a)}\sqrt{q(a)}\end{equation*} 
over all possible measurements where $a$ belongs to the alphabet $\Sigma$, and $p,q\in \mathcal{P}(\Sigma)$, and are the vectors produced when we take a particular measurement of a space $\mathcal{X}$ which is in the state $P$ and $Q$. The values $\sqrt{p(a)}\sqrt{q(a)}$ are the Bhattacharyya coefficients \cite{Watrous_2018}. It was shown by Fuchs and Caves \cite{fuchs1996mathematicaltechniquesquantumcommunication} that the optimal measurement for achieving the fidelity is to measure in the eigenbasis of the geometric mean of the states $P$ and $Q^{-1}$; \begin{equation*}Q^{-\frac{1}{2}}\sqrt{\sqrt{Q}P\sqrt{Q}}Q^{-\frac{1}{2}}\end{equation*} from which the fidelity is defined as 
\begin{equation*}\mathrm{Tr}\left(\sqrt{\sqrt{Q}P\sqrt{Q}}\right)\end{equation*}

Hence our, definition of $J$-fidelity should be, given two $J$-states $P$ and $Q$ (operators $P,Q\in \mathrm{Pos}_J(\mathcal{X})$ such that $JP,JQ$ are density operators \cite{FelipeSosa2022}), the minimization over all possible $J$-measurements $\mu$ of 
\begin{equation*}\sum_{a\in \Sigma}\sqrt{\langle \mu(a), P\rangle}\sqrt{\langle \mu(a),Q\rangle}\end{equation*}
We will denote this value $F_J(P,Q)$.
\begin{theorem}
    The $J$-fidelity is obtained by the value 
    \begin{equation*}F_J(P,Q)=F(JP,JQ)=\mathrm{Tr}\left(\sqrt{\sqrt{{JQ}}JP\sqrt{{JQ}}}\right)\end{equation*}
\end{theorem}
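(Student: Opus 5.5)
The plan is to reduce everything to the ordinary Fuchs--Caves theorem through the bijection $\mu \mapsto J\mu$. That theorem (recalled above from \cite{fuchs1996mathematicaltechniquesquantumcommunication} and \cite[pp.~153--154]{Watrous_2018}) says that for density operators $\rho,\sigma\in D(\mathcal{X})$,
\begin{equation*}
F(\rho,\sigma)=\min_{\nu}\sum_{a\in\Sigma}\sqrt{\langle \nu(a),\rho\rangle}\sqrt{\langle \nu(a),\sigma\rangle},
\end{equation*}
where the minimum runs over all ordinary measurements $\nu:\Sigma\to\mathrm{Pos}(\mathcal{X})$ and over all finite alphabets $\Sigma$.

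\textbf{Step 1: correspondence of measurements.} First I will show that $\mu$ is a $J$-measurement exactly when $\nu=J\mu$, meaning $\nu(a)=J\mu(a)$, is an ordinary measurement.
\begin{itemize}
\item Positivity: $[\mu(a)x,x]_J=\langle J\mu(a)x,x\rangle$. So $\mu(a)\in\mathrm{Pos}_J(\mathcal{X})$ if and only if $J\mu(a)\in\mathrm{Pos}(\mathcal{X})$. This is the untwisting remark above.
\item Completeness: since $J^2=\mathbf{1}$, the condition $\sum_a\mu(a)=J$ holds if and only if $\sum_a J\mu(a)=\mathbf{1}_{\mathcal{X}}$.
\end{itemize}
Since $\mu\mapsto J\mu$ is invertible with inverse $\nu\mapsto J\nu$, this is a bijection between $J$-measurements and ordinary measurements on the same alphabet.

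\textbf{Step 2: correspondence of outcome probabilities.} Next I will check that the probabilities are preserved under this correspondence. $J$ is self-adjoint and unitary, and the Hilbert--Schmidt inner product is $\langle A,B\rangle=\mathrm{Tr}(A^*B)$. Hence
\begin{equation*}
\langle J\mu(a),JP\rangle=\mathrm{Tr}(\mu(a)^*J^*JP)=\mathrm{Tr}(\mu(a)^*P)=\langle\mu(a),P\rangle,
\end{equation*}
and likewise for $Q$. This is the identity already noted after the definition of $J$-measurement.

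\textbf{Step 3: conclusion.} By Steps 1 and 2, the objective function for $(\mu,P,Q)$ equals the objective function for $(J\mu,JP,JQ)$. The correspondence $\mu\leftrightarrow J\mu$ is bijective, so the two minimisations run over the same set of values. Since $JP$ and $JQ$ are density operators by the definition of $J$-state, the Fuchs--Caves theorem gives
\begin{equation*}
F_J(P,Q)=F(JP,JQ)=\mathrm{Tr}\Bigl(\sqrt{\sqrt{JQ}\,JP\,\sqrt{JQ}}\Bigr).
\end{equation*}

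\textbf{Attainment and the main obstacle.} Attainment comes for free by pulling back through $J$. In the invertible case, take the Fuchs--Caves optimal measurement $\nu$ built from the eigenbasis of the geometric mean $(JQ)^{-1/2}\sqrt{\sqrt{JQ}\,JP\,\sqrt{JQ}}\,(JQ)^{-1/2}$. Then $\mu=J\nu$ is a $J$-measurement achieving the minimum. Its defining operator can be rewritten as a $J$-geometric mean using $X^t_J=J(JX)^t$ and $A\bullet B=AJB$.

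The step I expect to need the most care is the invertibility assumption in the Fuchs--Caves theorem. I would handle it by quoting the version in \cite{Watrous_2018}, which covers arbitrary positive semidefinite operators and is stated as an infimum that is attained. That version transfers verbatim through Steps 1--3.
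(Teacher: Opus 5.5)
Your proposal is correct and follows essentially the same route as the paper. The lower bound comes from $J\mu$ being an ordinary measurement with $\langle J\mu(a),JP\rangle=\langle\mu(a),P\rangle$, and attainment uses the measurement $Jx_ix_i^*$ built from the eigenbasis of $R=(\sqrt{JQ})^{-1}\sqrt{\sqrt{JQ}\,JP\,\sqrt{JQ}}\,(\sqrt{JQ})^{-1}$. The only difference is that the paper re-derives the attainment computation explicitly in the invertible case (showing $RJQR=JP$, so the sum equals $\mathrm{Tr}(RJQ)$) and handles singular $Q$ with a brief projection remark. You instead cite the general Fuchs--Caves theorem and transfer it through the bijection $\mu\leftrightarrow J\mu$, which covers the non-invertible case more cleanly.
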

\begin{remark}\label{Fidelity remark}
A measurement that achieve fidelity, in the invertible case, is found by considering the eigenbasis of the $J$-geometric mean of $P$ and the $J$ inverse of $Q$ in $\mathrm{Pd}_J(\mathcal{X})$, which is given by \begin{equation}\label{J-fidelity matrix} (Q)^{-\frac{1}{2}}_J\bullet\sqrt[J]{(\sqrt[J]{Q})\bullet P\bullet\sqrt[J]{Q}}\bullet (Q)^{-\frac{1}{2}}_J\end{equation}

\end{remark}
\begin{proof}
We follow the program contained in \cite[pp.~153--155]{Watrous_2018} and the program of proof contained in \cite{wilde2015lecture16} to verify the minimization. But, we first verify that the matrix \ref{J-fidelity matrix}  is indeed $J$-positive.\\
    Assume first of all that $P$ and $Q$ are invertible. We see that the the above matrix defined is $J$-positive semi-definite since, if we expand this expression, we obtain
    \begin{align*}
       &(Q)_J^{-\frac{1}{2}}JJ\sqrt{JJ\sqrt{JQ}JPJJ\sqrt{JQ}}J(Q)_J^{-\frac{1}{2}}\\
       =& (Q_J^{\frac{1}{2}})_J^{-1}\sqrt{\sqrt{JQ}JP \sqrt{JQ}}J(Q_J^{\frac{1}{2}})_J^{-1}\\
       =& J(J\sqrt{JQ})^{-1}J\sqrt{\sqrt{JQ}JP \sqrt{JQ}}JJ(J\sqrt{JQ})^{-1}J\\
       =& J(\sqrt{JQ})^{-1}\sqrt{\sqrt{JQ}JP\sqrt{JQ}}(\sqrt{JQ})^{-1}
    \end{align*}
    and since 
    \begin{equation*}(\sqrt{JQ})^{-1}\sqrt{\sqrt{JQ}JP\sqrt{JQ}}(\sqrt{JQ})^{-1}\end{equation*} 
    is positive semi-definite, we have that the above expression is $J$-positive semi-definite, hence multiplying by $J$ we obtain an operator $R$ which admits a traditional spectral decomposition 
    \begin{align*}
        R=\sum_{i}\lambda_ix_ix_i^*
    \end{align*}
    We then show that the optimal measurement is with respect to $x_ix_i^*$. We first note that $RJQR=JP$. This comes from the following 
    \begin{align*}
        JQ\left(\sqrt{JQ}\right)^{-1}\sqrt{\sqrt{JQ}JP\sqrt{JQ}}\left(\sqrt{JQ}\right)^{-1}= \sqrt{JQ}\sqrt{\sqrt{JQ}JP\sqrt{JQ}}\left(\sqrt{JQ}\right)^{-1}
    \end{align*}
    From this we obtain
    \begin{align*}
        RJQR&=\left(\sqrt{JQ}\right)^{-1}\sqrt{\sqrt{JQ}JP\sqrt{JQ}}\left(\sqrt{JQ}\right)^{-1}\sqrt{JQ}\sqrt{\sqrt{JQ}JP\sqrt{JQ}}\left(\sqrt{JQ}\right)^{-1}\\
        &=\left(\left(\sqrt{JQ}\right)^{-1}\sqrt{\sqrt{JQ}JP\sqrt{JQ}}\right)\left(\sqrt{\sqrt{JQ}JP\sqrt{JQ}}\left(\sqrt{JQ}\right)^{-1}\left(\sqrt{JQ}\right)^{-1}\right)\\
    \end{align*}
    Multiplying the square roots out this becomes
    \begin{align*}
        \left(\sqrt{JQ}\right)^{-1}\left(\sqrt{JQ}JP\sqrt{JQ}\right)\left(\sqrt{JQ}\right)^{-1}=JP
    \end{align*}
    
    We then have

    \begin{align*}
        &\sum_{i}\sqrt{\mathrm{Tr}((Jx_ix_i^*)^*P)\mathrm{Tr}((Jx_ix_i^*)^*Q)}\\=& \sum_{i}\sqrt{\mathrm{Tr}(x_ix_i^*JP)\mathrm{Tr}((x_ix_i^* JQ)}\\
        =& \sum_i \sqrt{(x_i^*RJQRx_i)(x_i^*JQx_i)}\\
        =& \sum_i \sqrt{(x_i^*\lambda_i JQ\lambda_ix_i)(x_i^*JQx_i)}\\
        =& \sum_i \lambda_ix_i^*JQx_i\\
        =&\mathrm{Tr}(\sum_{i}\lambda_ix_ix_i^*JQ)\\=&\mathrm{Tr}(RJQ)\\
        =&\mathrm{Tr}(\sqrt{\sqrt{JQ}JP\sqrt{JQ}})\\
        =&F(JP,JQ)
    \end{align*}
    If $\mu:\Sigma\to \mathrm{Pos}_J(\mathcal{X})$ is a $J$-measurement, then $J\mu$ is a positive semidefinite operator and hence by the regular quantum theory \cite[pp.~153--155]{Watrous_2018}, we have 
    \begin{equation*}
        F(JP,JQ)\leq \sum_{a\in \Sigma}\sqrt{\langle J\mu(a), JP\rangle}\sqrt{\langle J\mu(a),JQ\rangle}=\sum_{a\in \Sigma}\sqrt{\langle \mu(a), P\rangle}\sqrt{\langle \mu(a),Q\rangle}
    \end{equation*}
    Hence, in the case $P$ and $Q$ are invertible, we have the $J$-fidelity is defined as 
    \begin{equation*}F_J(P,Q)=\mathrm{min}_{\mu}\sum_{a}\sqrt{\langle \mu(a),P\rangle}\sqrt{\langle \mu(a), Q\rangle}=F(JP, JQ)\end{equation*}
    where this minimum is over all possible measurements.
     In the case $Q$ is not invertible, we replace $JP$ with $\Pi_{\mathrm{im}{JQ}}JP\Pi_{\mathrm{im}{JQ}}$ where we project onto the support of $JQ$. 
\end{proof}

 \subsection{Consequences of the definition of Krein Space fidelity}\label{conseqeunces of definiton}

The consequences of the above reasoning state in order to measure the ``overlap" of $J$-quantum states using a $J$-measurement, we may as well ``untwist" the cone of $J$-positive semi-definite matrices to make it the cone of positive semi-definite matrices and perform the usual optimal measurement. In this subsection, we formally verify the usual results of fidelity hold in the $J$-setting, even for mappings between Krein spaces with different involutions. The rest of the section is entirely adapted from Chapter 3 of \cite{Watrous_2018}, where a proof is not given it is a direct transplant with the appropriate $J$-multiplications and we merely include the statements for completeness.

\begin{proposition}\label{properties of J fidelity}
   We have the following properties hold:
    \begin{itemize}
    
        \item The $J$-fidelity function is continuous at $P,Q \in \mathrm{Pos}_J(\mathcal{X})$.
        \item $F_J(\lambda P,Q)=\sqrt{\lambda}F_J(P,Q)=F_J(P,\lambda Q)$ for all real $\lambda\geq 0$.
        \item $F_J(P,Q)=F(\Pi_{\mathrm{im}(JQ)}JP\Pi_{\mathrm{im}(JQ)},  JQ)=F(JP,\Pi_{\mathrm{im}(JP)}JQ\Pi_{\mathrm{im}(JP)})$.
        \item $F_J(P,Q)\geq 0$ with equality iff $PJQ=0$
        \item $F_J(P,Q)^2\leq \mathrm{Tr}(JP)\mathrm{Tr}(JQ)$ with equality iff $P$ and $Q$ are linearly dependent.
    \end{itemize}
\end{proposition}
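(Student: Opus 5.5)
The plan is to reduce every item to the corresponding property of the ordinary fidelity via the identity $F_J(P,Q)=F(JP,JQ)$ from the preceding theorem, and then translate the conditions back through $J$. The map $\Phi_J:X\mapsto JX$ is a linear isometry of $L(\mathcal{X})$ onto itself, since $J$ is unitary. It also sends $\mathrm{Pos}_J(\mathcal{X})$ bijectively onto $\mathrm{Pos}(\mathcal{X})$ (the remark on untwisting), so $F_J=F\circ(\Phi_J\times\Phi_J)$. We take as known the standard facts from Chapter 3 of \cite{Watrous_2018} for $F$ on $\mathrm{Pos}(\mathcal{X})$: continuity, homogeneity $F(\lambda A,B)=\sqrt{\lambda}F(A,B)$, the support-projection identity, $F(A,B)=0$ iff $AB=0$, and the Cauchy--Schwarz bound $F(A,B)^2\le \mathrm{Tr}(A)\mathrm{Tr}(B)$ with equality iff $A,B$ are linearly dependent.

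For continuity, $F_J$ is the composite of the continuous map $(P,Q)\mapsto(JP,JQ)$ with the continuous function $F$. Homogeneity follows because $J(\lambda P)=\lambda JP$. For the third item, apply the support identity of \cite{Watrous_2018} to $A=JP$, $B=JQ$. This gives $F(A,B)=F(\Pi_{\mathrm{im}(B)}A\Pi_{\mathrm{im}(B)},B)$ and symmetrically in $A$, which is exactly the stated formula. It is also consistent with the non-invertible case treated at the end of the proof of the preceding theorem. Nonnegativity is immediate, since $F$ is a trace norm.

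For the equality case of the fourth item, $F(JP,JQ)=0$ iff $(JP)(JQ)=0$, that is, $JPJQ=0$. Multiplying on the left by the invertible $J$ shows this is equivalent to $PJQ=0$, which is the $J$-product $P\bullet Q=0$.

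For the last item, the bound reads $F_J(P,Q)^2=F(JP,JQ)^2\le\mathrm{Tr}(JP)\mathrm{Tr}(JQ)$. Equality holds iff $JP$ and $JQ$ are linearly dependent, which, since $J$ is invertible, is iff $P$ and $Q$ are. I expect the only real care point, rather than an obstacle, to be this item. If one wants a self-contained argument instead of citing Watrous, write $F(A,B)=|\mathrm{Tr}(U\sqrt{A}\sqrt{B})|$ for a suitable unitary $U$ and apply Cauchy--Schwarz in the Hilbert--Schmidt inner product. Equality then forces $\sqrt{B}=c\,U\sqrt{A}$ for some scalar $c$, and one deduces $B=|c|^2A$ using positivity. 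Everything else is a direct transplant through the isometry $\Phi_J$.
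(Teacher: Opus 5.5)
Your proposal is correct and follows the paper's route. The paper's proof simply cites the corresponding properties of ordinary fidelity in \cite[p.~140]{Watrous_2018}, applied through $F_J(P,Q)=F(JP,JQ)$. You carry out that transfer explicitly: continuity and homogeneity via the linear bijection $X\mapsto JX$, the equivalence $JPJQ=0\iff PJQ=0$, and the fact that $JP,JQ$ are linearly dependent iff $P,Q$ are, both using invertibility of $J$. Your optional Cauchy--Schwarz argument for the last item is also sound.
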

\begin{proof}
    See \citep[p.~140]{Watrous_2018}.
\end{proof}

\begin{proposition}
    Let $\mathcal{X}$ be a complex Euclidean space and let $v\in \mathcal{X}$ be a vector and let $P\in \mathrm{Pos}_J(\mathcal{X})$. We have that 
    \begin{align*}
        F(P, Jvv^*)=\sqrt{v^*JPv}
    \end{align*}
    and in particular we have that $F(Juu^*, Jvv^*)=|\langle u,v\rangle |$
\end{proposition}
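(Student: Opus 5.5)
The plan is to read $F$ in the statement as the $J$-fidelity $F_J$ and reduce everything to the ordinary fidelity through the preceding theorem, $F_J(P,Q)=F(JP,JQ)$. First I check that $Jvv^*$ is a legitimate argument. Since $J^2=\mathbf{1}$, we have $J(Jvv^*)=vv^*\in\mathrm{Pos}(\mathcal{X})$, so $Jvv^*\in\mathrm{Pos}_J(\mathcal{X})$. The theorem and Proposition \ref{properties of J fidelity} are stated for $J$-positive semidefinite operators and not only for normalised $J$-states; I will use the fidelity formula on all of $\mathrm{Pos}_J(\mathcal{X})$, as those statements already do. The theorem then gives
\begin{equation*}
F_J(P,Jvv^*)=F(JP,vv^*)=\mathrm{Tr}\left(\sqrt{\sqrt{vv^*}\,JP\,\sqrt{vv^*}}\right),
\end{equation*}
where $JP\in\mathrm{Pos}(\mathcal{X})$.

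Next I compute the right-hand side directly. If $v=0$, both sides vanish. Otherwise $vv^*$ is a rank-one positive semidefinite operator with $(vv^*)^2=\|v\|^2vv^*$, so $\sqrt{vv^*}=vv^*/\|v\|$. Hence
\begin{equation*}
\sqrt{vv^*}\,JP\,\sqrt{vv^*}=\frac{(v^*JPv)}{\|v\|^2}\,vv^*.
\end{equation*}
Here $v^*JPv\geq 0$ because $JP$ is positive semidefinite. This operator is therefore $(v^*JPv)$ times the rank-one projection $vv^*/\|v\|^2$. Its square root is $\sqrt{v^*JPv}$ times that same projection, and its trace is $\sqrt{v^*JPv}$, which is the claimed formula.

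For the particular case I take $P=Juu^*$, which lies in $\mathrm{Pos}_J(\mathcal{X})$ by the same argument as for $Jvv^*$. Then $JP=uu^*$, so $v^*JPv=v^*uu^*v=|\langle u,v\rangle|^2$. Taking square roots gives $F_J(Juu^*,Jvv^*)=|\langle u,v\rangle|$. This argument also yields the symmetric version, consistent with the symmetry of $F$.

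The only real subtlety is interpretive: the statement writes $F$ rather than $F_J$, and $P$ is only $J$-positive. The expression $v^*JPv$ (rather than $v^*Pv$) shows that $F_J$ is intended. With that reading, the rank-one square-root computation is routine, and I expect no genuine obstacle.
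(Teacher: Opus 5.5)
Your proof is correct and is essentially the paper's approach. The paper gives no separate argument: it treats this as a direct transplant of Watrous's Proposition 3.13 through the identity $F_J(P,Q)=F(JP,JQ)$, and you do exactly that reduction (correctly reading $F$ as $F_J$, since $Jvv^*$ need not be positive semidefinite), then write out the rank-one computation $\sqrt{vv^*}=vv^*/\|v\|$ explicitly instead of citing Watrous.
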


\begin{proposition}
    Let $P,Q \in\mathrm{Pos}_J(\mathcal{X})$, then 
    \begin{align*}
        F_J(P,QJPJQ)=\langle P,Q \rangle
    \end{align*}
\end{proposition}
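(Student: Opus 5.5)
The plan is to ``untwist'' both arguments by $J$ and reduce the claim to the classical identity $F(A,BAB)=\langle A,B\rangle$ for positive semidefinite $A,B$. The main theorem of this section gives $F_J(P,R)=F(JP,JR)$. So we set $A=JP$ and $B=JQ$, both in $\mathrm{Pos}(\mathcal{X})$ because $P,Q\in\mathrm{Pos}_J(\mathcal{X})$.

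First we check that the second argument is admissible. Since $J^2=\mathbf{1}$,
\begin{equation*}
J(QJPJQ)=(JQ)(JP)(JQ)=BAB,
\end{equation*}
and $BAB$ is positive semidefinite because $B$ is Hermitian and $A\geq 0$. Hence $QJPJQ\in\mathrm{Pos}_J(\mathcal{X})$, and
\begin{equation*}
F_J(P,QJPJQ)=F(A,BAB).
\end{equation*}

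Next we compute the right-hand side using the symmetric form of the fidelity, $F(A,C)=\mathrm{Tr}\bigl(\sqrt{\sqrt{A}\,C\sqrt{A}}\bigr)$, which is valid whether or not $A$ or $B$ is invertible. With $C=BAB$ we get
\begin{equation*}
\sqrt{A}\,BAB\sqrt{A}=\bigl(\sqrt{A}\,B\sqrt{A}\bigr)^2 .
\end{equation*}
The operator $\sqrt{A}\,B\sqrt{A}$ is positive semidefinite, so it is the unique positive square root of its own square. Therefore
\begin{equation*}
F(A,BAB)=\mathrm{Tr}\bigl(\sqrt{A}\,B\sqrt{A}\bigr)=\mathrm{Tr}(AB)=\langle A,B\rangle,
\end{equation*}
where the last equality uses that $A$ is Hermitian.

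Finally we convert back to $P$ and $Q$. Using $J^*=J$ and $J^2=\mathbf{1}$,
\begin{equation*}
\langle JP,JQ\rangle=\mathrm{Tr}\bigl((JP)^*JQ\bigr)=\mathrm{Tr}(P^*JJQ)=\mathrm{Tr}(P^*Q)=\langle P,Q\rangle.
\end{equation*}
This is the same unitary invariance already used in the paper's measurement formula $\langle\mu(a),\rho\rangle=\langle J\mu(a),J\rho\rangle$.

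The only point needing care is the square-root step: it relies on $\sqrt{A}\,B\sqrt{A}$ being positive semidefinite rather than merely Hermitian. That holds because $B=JQ\geq 0$, which is exactly where the hypothesis $Q\in\mathrm{Pos}_J(\mathcal{X})$ is used. No invertibility or support-projection argument is needed, because the symmetric trace formula for the fidelity holds for all positive semidefinite operators.
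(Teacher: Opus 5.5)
Your proposal is correct and follows the same route as the paper's proof: reduce to $F(JP,JQJPJQ)$, use $\sqrt{JP}\,JQJPJQ\,\sqrt{JP}=\bigl(\sqrt{JP}\,JQ\sqrt{JP}\bigr)^2$ to obtain $\mathrm{Tr}(JPJQ)=\langle JP,JQ\rangle$, and then use $J^*J=\mathbf{1}$ to return to $\langle P,Q\rangle$. You also verify two points the paper leaves implicit: that $QJPJQ\in\mathrm{Pos}_J(\mathcal{X})$, and that $\sqrt{JP}\,JQ\sqrt{JP}\geq 0$, which is what justifies the square-root step.
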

 \begin{proof} We follow the proof of \cite[pp.~142--143]{Watrous_2018}.
     \begin{align*}
     F_J(P,QJPJQ)&=F(JP,JQJPJQ)\\
     =&\mathrm{Tr}\left(\sqrt{\sqrt{JP}JQJPJQ\sqrt{JP}}\right)\\
     =&\mathrm{Tr}\left(\sqrt{JP}JQ\sqrt{JP}\right)\\
     =&\mathrm{Tr}(JPJQ)\\
     =&\langle JP, JQ\rangle\\
     =& \langle P,Q\rangle
     \end{align*}
 \end{proof}
\begin{lemma}[Winter's gentle measurement lemma for $J$-fidelity]
Let $\mathcal{X}$ be a complex Euclidean space with choice of Krein matrix $J$ and let $\rho$ be an operator such that $J\rho\in D(\mathcal{X})$ be a $J$-quantum state and let $P\in \mathrm{Pos}_J(\mathcal{X})$ be a positive semi-definite operator satisfying $JP\leq \mathbf{1}_{\mathcal{X}}$ and $\langle P,\rho\rangle>0$. Then
\begin{equation*}F_J\left(\rho, \frac{\sqrt{JP}J\rho J\sqrt{JP}}{\langle JP, J\rho\rangle}\right)\geq \sqrt{\langle P, \rho\rangle} \end{equation*}
\end{lemma}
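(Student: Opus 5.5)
The plan is to reduce the statement to Winter's gentle measurement lemma for ordinary fidelity, in the form of \cite[Chapter~3]{Watrous_2018}. Set $\sigma = J\rho \in D(\mathcal{X})$ and $R = JP$. By the hypotheses $R\in\mathrm{Pos}(\mathcal{X})$, since $P\in\mathrm{Pos}_J(\mathcal{X})$, and $R\leq \mathbf{1}_{\mathcal{X}}$. The Hilbert--Schmidt pairing is invariant under left multiplication of both arguments by the unitary $J$, because $\langle JA,JB\rangle=\mathrm{Tr}(A^*J^*JB)=\mathrm{Tr}(A^*B)$. This gives $\langle P,\rho\rangle = \langle R,\sigma\rangle > 0$, and it also shows that the denominator $\langle JP,J\rho\rangle$ in the statement equals $\langle R,\sigma\rangle$.

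Next I would rewrite the second argument of $F_J$ as a $J$-state. Using $J^2=\mathbf{1}$, the operator in the statement is
\begin{equation*}
S := \frac{\sqrt{R}\,\sigma J\sqrt{R}}{\langle R,\sigma\rangle}.
\end{equation*}
Its untwisted form $JS$ should equal the post-measurement state $\sqrt{R}\sigma\sqrt{R}/\langle R,\sigma\rangle$. I expect this step to be the main obstacle, because the placement of $J$'s in the statement is not symmetric. As written, $JS = J\sqrt{R}\sigma J\sqrt{R}/\langle R,\sigma\rangle$, which is not obviously positive semidefinite.

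I would handle this by reading the intended second argument as the $J$-state $J\sqrt{JP}\,J\rho\,\sqrt{JP}/\langle JP,J\rho\rangle$, which is the $J$-analogue of $\sqrt{P}\rho\sqrt{P}/\langle P,\rho\rangle$ under the untwisting dictionary of the Remark after the definition of $\mathrm{Pos}_J$. I would note that this agrees with the displayed expression whenever $J$ commutes with $\sqrt{JP}$. With that reading, $J$ times the second argument is exactly $\sqrt{R}\sigma\sqrt{R}/\langle R,\sigma\rangle$. This is positive semidefinite with trace $\mathrm{Tr}(R\sigma)/\langle R,\sigma\rangle=1$, so it is a density operator and the second argument is a legitimate $J$-state.

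Then I would apply the Theorem identifying $F_J(P,Q)=F(JP,JQ)$ to obtain
\begin{equation*}
F_J\Big(\rho, \cdot\Big) = F\Big(\sigma, \frac{\sqrt{R}\sigma\sqrt{R}}{\langle R,\sigma\rangle}\Big).
\end{equation*}
The standard gentle measurement lemma gives a lower bound of $\sqrt{\langle R,\sigma\rangle}$ for this. Concretely, I would use the property $F(\sigma,Q)\geq \mathrm{Tr}(\sqrt{\sigma}\sqrt{Q})$-type bounds, or more directly the following argument. Write $\sigma=\sum_k p_k u_ku_k^*$ and use the fact that $F$ is jointly concave, or apply the rank-one formula $F(\sigma, Jvv^*)$-style proposition to a purification. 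The cleanest route is via the inequality $R\geq R$-squared root: since $0\le R\le \mathbf{1}$, we have $\sqrt{R}\geq R$. Then $F(\sigma,\sqrt{R}\sigma\sqrt{R})\geq \mathrm{Tr}(\sqrt{\sigma}\sqrt{R}\sqrt{\sigma})\geq \mathrm{Tr}(\sigma R)$, where the first inequality comes from $F(A,B)=\|\sqrt{A}\sqrt{B}\|_1\geq|\mathrm{Tr}(\sqrt{A}\,U\sqrt{B})|$, choosing the polar unitary so that $\sqrt{\sqrt{R}\sigma\sqrt{R}}$ is replaced by $\sqrt{\sigma}\sqrt{R}$. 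Dividing by $\sqrt{\langle R,\sigma\rangle}$, which comes from the scaling property in Proposition \ref{properties of J fidelity}, gives the bound $\sqrt{\langle R,\sigma\rangle}=\sqrt{\langle P,\rho\rangle}$.
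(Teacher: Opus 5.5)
Your proof is correct and follows the paper's route. The paper likewise silently replaces the displayed second argument by $J\sqrt{JP}\,J\rho\,\sqrt{JP}/\langle JP,J\rho\rangle$. It then untwists to $F\bigl(J\rho,\sqrt{JP}J\rho\sqrt{JP}\bigr)/\sqrt{\langle JP,J\rho\rangle}=\langle\sqrt{JP},J\rho\rangle/\sqrt{\langle JP,J\rho\rangle}$; your polar-decomposition step gives this with $\geq$, which is all you need. Finally it uses $\sqrt{JP}\geq JP$, exactly as you do.

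One correction to a side remark you do not rely on: the corrected and displayed expressions need not agree just because $J$ commutes with $\sqrt{JP}$. For $JP=\mathbf{1}$ they are $\rho$ and $J\rho J$, so the reinterpretation you make is genuinely necessary.
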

\begin{proof}
    We have, by following \cite[p.~143]{Watrous_2018},
    \begin{align*}
        F_J\left(\rho,\frac{J\sqrt{JP}J\rho \sqrt{JP}}{\langle JP,J\rho\rangle}\right)=\frac{1}{\sqrt{\langle JP,J\rho\rangle}}F(J\rho, \sqrt{JP}J\rho \sqrt{JP})=\frac{\langle \sqrt{JP},J\rho\rangle}{\sqrt{\langle JP, J\rho\rangle}}
    \end{align*}
    and from the assumption on $JP$, we have $\sqrt{JP}\geq JP$ and hence $\langle \sqrt{JP}, J\rho\rangle\geq \langle JP, J\rho\rangle$. Since $\langle JP,J\rho\rangle =\langle P, \rho \rangle$, the result follows.
\end{proof}

\subsection{Characterizing Krein space fidelity}
Let $\mathcal{X}$ be a complex Euclidean space with Krein operator $J$, and let $P, Q \in \mathrm{Pos}_J(\mathcal{X})$. We have
\begin{align*}
    F_J(P,Q)
    = \max \left\{ \left| \mathrm{Tr}(X) \right| : X \in L(\mathcal{X}), \;
    \begin{bmatrix}
        JP & X \\
        X^{*} & JQ
    \end{bmatrix}
     \in \mathrm{Pos}(\mathcal{X}\oplus\mathcal{X})
    \right\}.
\end{align*}

 From this we then obtain the usual semi-definite program view of fidelity (see \citep[p.~147]{Watrous_2018}) in that the $J$-fidelity $F_J(P,Q)$ gives the optimal value of the primal problem given by maximizing 
 \begin{align*}
     \frac{1}{2}\mathrm{Tr}(X)+\frac{1}{2}\mathrm{Tr}(X^*)
 \end{align*}
 subject to 
 \begin{align*}
     \begin{bmatrix}
        JP & X \\
        X^{*} & JQ
    \end{bmatrix}\geq 0
 \end{align*} for $X\in L(\mathcal{X})$.
We may also obtain the $J$-formulation of Alberti's theorem
\begin{theorem}[$J$-version of Alberti's theorem]
    For $P,Q\in \mathrm{Pos}_J(\mathcal{X})$ we have 
    \begin{align*}
        F(P,Q)^2=\inf\{\langle [P, Y]_J[Q,Y]_J^{-1}: Y\in \mathrm{Pd}_J(\mathcal{X})]
    \end{align*}
\end{theorem}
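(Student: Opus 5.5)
We read the statement as the $J$-analogue of Alberti's theorem in \cite[p.~150]{Watrous_2018}. The left-hand side is the $J$-fidelity $F_J(P,Q)$. The pairings are the Hilbert--Schmidt pairings $\langle P,Y\rangle$ and $\langle Q, Y_J^{-1}\rangle$, where $Y_J^{-1}=JY^{-1}J$ is the $J$-inverse. This makes the infimum run over $\inf_{Y\in\mathrm{Pd}_J(\mathcal{X})}\langle P,Y\rangle\langle Q,Y_J^{-1}\rangle$. The plan is the same ``untwisting'' used throughout this section: transport everything through $X\mapsto JX$ to the ordinary cone and invoke the classical Alberti theorem for the positive semidefinite operators $JP$ and $JQ$.

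\textbf{Step 1: the untwisting map is a bijection.} The map $Y\mapsto JY$ sends $\mathrm{Pd}_J(\mathcal{X})$ bijectively onto $\mathrm{Pd}(\mathcal{X})$. This is the diffeomorphism $\Phi_J$ recalled in Section~\ref{Section 4}, and it uses only that $J$ is a unitary self-adjoint involution.

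\textbf{Step 2: rewrite the two pairings.} Write $Z=JY\in\mathrm{Pd}(\mathcal{X})$. Using $J^*=J$ and $J^2=\mathbf{1}$, the first factor becomes
\begin{equation*}
\langle P,Y\rangle=\mathrm{Tr}(P^*JJY)=\langle JP,JY\rangle=\langle JP,Z\rangle.
\end{equation*}
For the second factor, first note that $JY_J^{-1}=Y^{-1}J=(JY)^{-1}=Z^{-1}$. This lies in $\mathrm{Pd}(\mathcal{X})$, which also confirms that $Y_J^{-1}\in\mathrm{Pd}_J(\mathcal{X})$, so the expression in the statement is well defined. Hence
\begin{equation*}
\langle Q,Y_J^{-1}\rangle=\langle JQ, JY_J^{-1}\rangle=\langle JQ,Z^{-1}\rangle .
\end{equation*}

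\textbf{Step 3: reduce to the classical theorem.} By Steps 1 and 2, the infimum in the statement equals
\begin{equation*}
\inf\{\langle JP,Z\rangle\langle JQ,Z^{-1}\rangle : Z\in\mathrm{Pd}(\mathcal{X})\}.
\end{equation*}
By Alberti's theorem this is $F(JP,JQ)^2$, since $JP,JQ\in\mathrm{Pos}(\mathcal{X})$. By the theorem identifying $F_J(P,Q)=F(JP,JQ)$, this equals $F_J(P,Q)^2$.

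The only genuine obstacle is interpretive rather than technical: fixing the meaning of the garbled pairing notation $[\cdot,\cdot]_J$ and the inverse in the statement. The choice that makes the statement true and covariant is the Hilbert--Schmidt pairing, which coincides with $\langle J\cdot,J\cdot\rangle$, together with the $J$-inverse. The alternative, pairing against $Y^{-1}$ itself, fails: $Y^{-1}$ is not $J$-positive in general, and the substitution of Step 2 would break. Once that is settled, the computation in Step 2 is the whole proof.

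If one prefers a self-contained argument over citing Alberti, the fallback is to reproduce its two directions under the untwisting. The lower bound $\langle JP,Z\rangle\langle JQ,Z^{-1}\rangle\ge F(JP,JQ)^2$ would come from Cauchy--Schwarz applied to $\mathrm{Tr}(\sqrt{JP}\sqrt{JQ}\,U)$, with $U$ the unitary realising the trace norm and $Z^{1/2}Z^{-1/2}$ inserted. The upper bound would use the Fuchs--Caves operator $R$ from Remark~\ref{Fidelity remark}, taking $Z=R^{-1}$ in the invertible case, then a continuity argument for singular $P,Q$ via Proposition~\ref{properties of J fidelity}.
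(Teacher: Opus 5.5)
Your reduction is correct and is essentially what the paper's proof amounts to. That proof only cites the classical Alberti theorem in Watrous, transplanted through the untwisting $Y\mapsto JY$: under $Z=JY$ the pairings become $\langle JP,Z\rangle$ and $\langle JQ,Z^{-1}\rangle$, and the classical theorem for $JP,JQ$ finishes it.

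One side remark is wrong, though it does not affect the proof. $Y^{-1}$ \emph{is} $J$-positive definite whenever $Y$ is, since $JY^{-1}=J(JY)^{-1}J$ is a unitary conjugate of a positive definite operator. The real reason the $Y^{-1}$ reading fails is that it gives $\langle Q,Y^{-1}\rangle=\langle QJ,Z^{-1}\rangle$, so the infimum would be $F(JP,QJ)^2$ rather than $F(JP,JQ)^2$.
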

\begin{proof}
   See \citep[p.~148]{Watrous_2018}.
\end{proof}
\subsubsection{Brief aside: The trace and partial trace}
From the standard quantum theory, we know that the trace and partial trace are invaluable tools in the consideration of states. Motivated by this, we consider $P\in \mathrm{Pos}_J(\mathcal{X})$, then there exists a vector $u\in \mathcal{X}\otimes \mathcal{Y}$ such that \begin{equation*}\mathrm{Tr}_{\mathcal{Y}}((J\otimes I)(J\otimes I)uu^*))=JP\end{equation*} iff $\dim(\mathcal{Y})\geq \mathrm{rank}(P)$. In general, given a Krein matrix $J$, we will call the function $\mathrm{Tr}_{\mathcal{Y}}((J\cdot)$ the $J$-partial trace and similarly, we consider the $J$ trace to be $\mathrm{Tr}(J\cdot)$ (remembering that $J\otimes I$ is still a Krein matrix) and we will denote the $J$-partial trace $\mathrm{Tr}^J_{\mathcal{Y}}$ and the $J$-trace $\mathrm{Tr}^J$. Using this language, we obtain an alternative definition of the $J$-fidelity.
\begin{lemma}
    The $J$-fidelity of two $J$-positive semidefinite operators $P,Q$ is given by 
    \begin{align*}
        F_J(P,Q)=\mathrm{Tr}^J\sqrt[J]{(\sqrt[J]{Q})^{\#}\bullet(\sqrt[J]{P})^{\#}\bullet\sqrt[J]{P}\bullet\sqrt[J]{Q}})
    \end{align*}
\end{lemma}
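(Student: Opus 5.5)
The plan is to unwind every $J$-operation in the right-hand side into ordinary matrix operations on the positive semidefinite operators $JP$ and $JQ$. Then the statement reduces to the standard identity $F(JP,JQ)=\|\sqrt{JP}\sqrt{JQ}\|_1=\mathrm{Tr}\sqrt{(\sqrt{JP}\sqrt{JQ})^*(\sqrt{JP}\sqrt{JQ})}$. Combined with the theorem of Section \ref{Section 5} giving $F_J(P,Q)=F(JP,JQ)$, this finishes the argument. The dictionary we need consists of the conventions already set up: $A\bullet B=AJB$, $A^t_J=J(JA)^t$ (so $\sqrt[J]{A}=J\sqrt{JA}$), $A^\#=JA^*J$, and $\mathrm{Tr}^J(\cdot)=\mathrm{Tr}(J\,\cdot)$.

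\textbf{Step 1: simplify the individual factors.} Put $a=\sqrt{JP}$ and $b=\sqrt{JQ}$; both are positive semidefinite. Then $\sqrt[J]{P}=Ja$ and $\sqrt[J]{Q}=Jb$. Taking $J$-adjoints, and using that $J$ is self-adjoint with $J^2=\mathbf 1$, gives $(\sqrt[J]{P})^\#=J(Ja)^*J=Ja J J=Ja$. Similarly $(\sqrt[J]{Q})^\#=Jb$, so the $J$-square roots are $J$-self-adjoint.

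\textbf{Step 2: expand the product.} Expanding the $\bullet$-product gives
\begin{equation*}
Jb\,J\,Ja\,J\,Ja\,J\,Jb = Jb\,a\,a\,b = J\,(ab)^*(ab),
\end{equation*}
since $(ab)^*=ba$. The argument of the outer $J$-square root is therefore $J M$ with $M=(ab)^*(ab)\in\mathrm{Pos}(\mathcal{X})$. In particular it lies in $\mathrm{Pos}_J(\mathcal{X})$, so $\sqrt[J]{\cdot}$ is defined. In the singular case we interpret $(\cdot)^t$ via the functional calculus on $\mathrm{Pos}$ rather than via $\exp/\log$. Applying the definition then gives $\sqrt[J]{JM}=J\sqrt{J\,JM}=J\sqrt{M}$.

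\textbf{Step 3: take the $J$-trace.} We get $\mathrm{Tr}^J(J\sqrt M)=\mathrm{Tr}(JJ\sqrt M)=\mathrm{Tr}\sqrt{M}=\|\sqrt{JP}\sqrt{JQ}\|_1=F(JP,JQ)$. This last quantity equals $F_J(P,Q)$ by the theorem characterising $J$-fidelity.

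\textbf{Main obstacle.} The only delicate point is bookkeeping rather than analysis. One must make sure the $J$-adjoint is $JA^*J$, not the typo-like $JA^\#J$ in the definition, and that each $\bullet$ inserts exactly one $J$, so that the $J$'s cancel in pairs as claimed. One must also justify that the $J$-power is meaningful on the boundary of the cone, i.e. for non-invertible $P,Q$. I would handle the latter by defining $A^{1/2}_J=J\sqrt{JA}$ for all $A\in\mathrm{Pos}_J(\mathcal{X})$. Alternatively, we can use continuity of $F_J$ from Proposition \ref{properties of J fidelity} together with continuity of the matrix square root on $\mathrm{Pos}(\mathcal{X})$. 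This lets us pass from the invertible case to the general one.
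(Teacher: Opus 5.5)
Your proposal is correct and follows the paper's proof essentially line for line. Both arguments expand the $\bullet$-product to $J\sqrt{JQ}\,JP\,\sqrt{JQ}$, take the $J$-square root to get $J\sqrt{\sqrt{JQ}\,JP\,\sqrt{JQ}}$, and let the $J$'s cancel in the $J$-trace to obtain $F(JP,JQ)=F_J(P,Q)$. Your explicit convention $A^{1/2}_J=J\sqrt{JA}$ on all of $\mathrm{Pos}_J(\mathcal{X})$, together with reading the $J$-adjoint as $A^\#=JA^*J$, makes precise what the paper uses implicitly when $P$ and $Q$ are not invertible.
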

\begin{proof}
    \begin{align*}(\sqrt[J]{Q})^{\#}\bullet(\sqrt[J]{P})^{\#}\bullet\sqrt[J]{P}\bullet\sqrt[J]{Q}   &=J\sqrt{JQ}JJJJ\sqrt{JP}JJJJ\sqrt{JP}JJ\sqrt{JQ}\\
    &=J\sqrt{JQ}JP\sqrt{JQ}
    \end{align*}
    We then have that taking the $J$-square root gives 
    \begin{equation*}\sqrt[J]{J\sqrt{JQ}JP\sqrt{JQ}}=J\sqrt{\sqrt{JQ}JP\sqrt{JQ}}\end{equation*}
     and hence taking the $J$-trace gives us 
    \begin{align*}
        \mathrm{Tr}\left(JJ\sqrt{\sqrt{JP}JQ\sqrt{JQ}}\right)=\mathrm{Tr}\left(\sqrt{\sqrt{JP}JQ\sqrt{JQ}}\right)=F_J(P,Q)
    \end{align*}
\end{proof}

\begin{remark}
    This was actually the first definition of $J$-fidelity considered by the author. The reason for this was the trace definition of fidelity \[F(P,Q)=\text{Tr}(\sqrt{\sqrt{Q}P\sqrt{Q}})\]
\end{remark}
\subsection{J-analog of Uhlmann's theorem}
\begin{lemma}
    Let $A,B\in L(\mathcal{Y},\mathcal{X})$ be operators for complex Euclidean spaces $\mathcal{X}$ and $\mathcal{Y}$ and endow $\mathcal{X}$ with a Krein matrix $J$. We have 
    \begin{align*}
        F_J(JAA^*,JBB^*)=\|A^*B\|_1
    \end{align*}
\end{lemma}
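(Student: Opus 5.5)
The plan is to reduce the statement to the standard (non-Krein) identity $F(AA^*,BB^*)=\|A^*B\|_1$ from \cite[Lemma~3.21]{Watrous_2018}, using the theorem above that identifies $J$-fidelity with ordinary fidelity after untwisting: $F_J(P,Q)=F(JP,JQ)$. First check that the arguments are legitimate. Since $J$ is a self-adjoint involution, $J(JAA^*)=AA^*\in\mathrm{Pos}(\mathcal{X})$. Hence $JAA^*\in\mathrm{Pos}_J(\mathcal{X})$, and likewise $JBB^*\in\mathrm{Pos}_J(\mathcal{X})$. Then
\begin{equation*}
F_J(JAA^*,JBB^*)=F(J^2AA^*,J^2BB^*)=F(AA^*,BB^*).
\end{equation*}
Here the theorem is applied in its general form, including the non-invertible case handled there by projecting onto supports. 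Alternatively, one can use the continuity statement in Proposition \ref{properties of J fidelity} to pass from invertible $JP,JQ$ to the general case.

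It remains to prove $F(AA^*,BB^*)=\|A^*B\|_1$, following \cite[p.~144]{Watrous_2018}. Take polar decompositions $A=\sqrt{AA^*}\,U$ and $B=\sqrt{BB^*}\,V$, where $U,V\in L(\mathcal{Y},\mathcal{X})$ are partial isometries satisfying $UU^*=\Pi_{\mathrm{im}(A)}$ and $VV^*=\Pi_{\mathrm{im}(B)}$. This gives
\begin{equation*}
A^*B=U^*\sqrt{AA^*}\sqrt{BB^*}\,V .
\end{equation*}
Because $U^*$ restricted to $\mathrm{im}(A)$ and $V$ mapping onto $\mathrm{im}(B)$ act isometrically on the relevant supports, multiplying by them preserves singular values. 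Therefore $\|A^*B\|_1=\|\sqrt{AA^*}\sqrt{BB^*}\|_1$. The right-hand side is exactly the Schatten-norm expression $F(P,Q)=\|\sqrt P\sqrt Q\|_1$ recalled in the introduction, with $P=AA^*$ and $Q=BB^*$.

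The only delicate point is this isometry step. We need $\|U^*XV\|_1=\|X\|_1$ for $X=\sqrt{AA^*}\sqrt{BB^*}$, which holds because the row space of $X$ lies in $\mathrm{im}(A)$ and its column space lies in $\mathrm{im}(B)$. To avoid the polar-decomposition bookkeeping, I would instead compute $(A^*B)^*(A^*B)=B^*AA^*B$ and compare its nonzero spectrum with that of $\sqrt{BB^*}\,AA^*\sqrt{BB^*}$. This follows from the rule that $XY$ and $YX$ share nonzero eigenvalues, applied to $X=B^*$ and $Y=AA^*B$, together with $B\,B^*=\sqrt{BB^*}\sqrt{BB^*}$. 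Taking square roots and traces then gives
\begin{equation*}
\|A^*B\|_1=\mathrm{Tr}\sqrt{\sqrt{BB^*}AA^*\sqrt{BB^*}}=F(AA^*,BB^*)=F_J(JAA^*,JBB^*).
\end{equation*}
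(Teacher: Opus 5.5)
Your proposal is correct and follows the paper's route: you untwist using $J^2=\mathbf{1}$ to get $F_J(JAA^*,JBB^*)=F(AA^*,BB^*)$ and then invoke the standard identity $F(AA^*,BB^*)=\|A^*B\|_1$ from \cite[Lemma~3.21]{Watrous_2018}; the paper only cites that identity (explicitly so in its $U$-version), while your $XY$/$YX$ comparison of the nonzero spectra of $B^*AA^*B$ and $\sqrt{BB^*}AA^*\sqrt{BB^*}$ gives a valid self-contained proof of it. One small slip in the polar-decomposition sketch you set aside: for $X=\sqrt{AA^*}\sqrt{BB^*}$, it is the range of $X$ that lies in $\mathrm{im}(A)$ and the range of $X^*$ that lies in $\mathrm{im}(B)$, not the other way around.
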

\begin{proof}
   See \cite[p.~151]{Watrous_2018}.
\end{proof}
\begin{theorem}
    Let $\mathcal{X}$ and $\mathcal{Y}$ be complex Euclidean spaces and let $J_1$ be a Krein matrix for $\mathcal{X}$ and $J_2$ a Krein matrix for $\mathcal{Y}$ and let $P,Q\in \mathrm{Pos}_{J_1}(\mathcal{X})$ be $J_1$-positive semi-definite operators having rank at most $\mathrm{dim}(\mathcal{Y})$ and let $u\in \mathcal{X}\otimes {Y}$ satisfy $\mathrm{Tr}^{J_1\otimes J_2}_{\mathcal{Y}}((J_1\otimes J_2)uu^*)=J_1P$. It holds that 
    \begin{align*}
        F_J(P,Q)=\max\{|\langle u, v\rangle|:v\in \mathcal{X}\otimes \mathcal{Y}, \mathrm{Tr}_{\mathcal{Y}}^{J_1\otimes J_2}((J_1\otimes J_2)vv^*) \}
    \end{align*}
\end{theorem}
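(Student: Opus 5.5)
The plan is to reduce the statement to the ordinary Uhlmann theorem \cite[p.~152]{Watrous_2018} by untwisting the involutions, as in the proof that $F_J(P,Q)=F(JP,JQ)$. I read the (truncated) constraint in the maximum as $\mathrm{Tr}^{J_1\otimes J_2}_{\mathcal{Y}}((J_1\otimes J_2)vv^*)=J_1Q$, parallel to the condition imposed on $u$.

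\textbf{Step 1: unwind the constraint.} Since $J_1\otimes J_2$ is a self-adjoint involution, $(J_1\otimes J_2)(J_1\otimes J_2)=\mathbf{1}$. Reading $\mathrm{Tr}^{J_1\otimes J_2}_{\mathcal{Y}}$ as $\mathrm{Tr}_{\mathcal{Y}}((J_1\otimes J_2)\,\cdot\,)$, the constraint on $u$ becomes
\begin{equation*}
\mathrm{Tr}_{\mathcal{Y}}(uu^*)=J_1P,
\end{equation*}
and likewise the constraint on $v$ becomes $\mathrm{Tr}_{\mathcal{Y}}(vv^*)=J_1Q$. This matches the partial-trace aside, where $\mathrm{Tr}_{\mathcal{Y}}((J\otimes I)(J\otimes I)uu^*)=JP$. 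So $u$ and $v$ are ordinary purifications of the positive semidefinite operators $J_1P$ and $J_1Q$, and $J_2$ plays no role.

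\textbf{Step 2: check the rank hypothesis.} The ordinary theorem needs $\mathrm{rank}(J_1P),\mathrm{rank}(J_1Q)\le\dim(\mathcal{Y})$. This holds because $J_1$ is invertible, so $\mathrm{rank}(J_1P)=\mathrm{rank}(P)$ and $\mathrm{rank}(J_1Q)=\mathrm{rank}(Q)$.

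\textbf{Step 3: apply Uhlmann and convert back.} Uhlmann's theorem gives
\begin{equation*}
F(J_1P,J_1Q)=\max\{|\langle u,v\rangle| : \mathrm{Tr}_{\mathcal{Y}}(vv^*)=J_1Q\}.
\end{equation*}
The earlier theorem identifies the left side with $F_{J_1}(P,Q)$, which finishes the argument.

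For a self-contained version, I would reproduce Watrous's argument directly. Write $u=\mathrm{vec}(A)$ and $v=\mathrm{vec}(B)$ with $A,B\in L(\mathcal{Y},\mathcal{X})$, so that $AA^*=J_1P$ and $BB^*=J_1Q$. Then $\langle u,v\rangle=\mathrm{Tr}(A^*B)$. The preceding lemma gives
\begin{equation*}
F_{J_1}(J_1AA^*,J_1BB^*)=\|A^*B\|_1,
\end{equation*}
and $J_1AA^*=P$ and $J_1BB^*=Q$, so this equals $F_{J_1}(P,Q)$.

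\textbf{Step 4: upper and lower bounds.} The upper bound is $|\mathrm{Tr}(A^*B)|\le\|A^*B\|_1$. For attainment, note that the feasible $B$ are exactly $B_0W$ with $W$ unitary on $\mathcal{Y}$, where $B_0$ is a fixed choice with $B_0B_0^*=J_1Q$. This uses the unitary equivalence of purifications, which is where the rank hypothesis from Step 2 is needed. Choosing $W$ from a polar decomposition of $A^*B_0$ gives $\mathrm{Tr}(A^*B_0W)=\|A^*B_0\|_1$, so the maximum is attained.

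The main obstacle is notational rather than mathematical. The truncated constraint and the convention for $\mathrm{Tr}^{J}_{\mathcal{Y}}$ must be interpreted so that the $J_2$ factor cancels, that is, the involution is applied twice. If the intended convention instead applies $(J_1\otimes J_2)$ only once, the constraint would read $\mathrm{Tr}_{\mathcal{Y}}((\mathbf{1}\otimes J_2)uu^*)=P$-type identities, and the reduction would no longer be to ordinary purifications. I would fix the interpretation to match the earlier aside before proceeding.
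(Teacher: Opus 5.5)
Your proposal is correct and takes essentially the same route as the paper. Using $(J_1\otimes J_2)^2=\mathbf{1}$, both arguments see that $u$ and the feasible $v$ are ordinary purifications of $J_1P$ and $J_1Q$; both then write $u=\mathrm{vec}(A)$, parametrize the feasible vectors by the unitary freedom on $\mathcal{Y}$, maximize to get $\|A^*B\|_1$, and identify this with $F_{J_1}(J_1AA^*,J_1BB^*)=F_{J_1}(P,Q)$. One small correction: the rank bound on $Q$ is what guarantees that a purification $B_0$ of $J_1Q$ exists (so the feasible set is nonempty), whereas the unitary equivalence of purifications needs no rank hypothesis.
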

\begin{proof}
We follow the proof of \citep[p.~151--152]{Watrous_2018}.
    Let $A\in L(\mathcal{Y},\mathcal{X})$ be the operator such that $u=\mathrm{vec}(A)$ is the vectorization of $A$ (see \citep[p.~23]{Watrous_2018} and let $w\in \mathcal{X}\otimes \mathcal{Y}$ be the vector satisfying $J_1Q=\mathrm{Tr}_{\mathcal{Y}}^{J_1\otimes J_2}((J_1\otimes J_2)ww^*)$ and let $B\in L(\mathcal{Y},\mathcal{X})$ be the operator for which $w=\mathrm{vec}(B)$. We then have 
    \begin{align*}
        &\mathrm{max}\{|\langle u,v\rangle|: v\in \mathcal{X}\otimes \mathcal{Y}, \mathrm{Tr}_{\mathcal{Y}}^{J_1\otimes J_2}((J_1\otimes J_2)vv^*)=J_1Q\}\\
        =& \max\{|\langle u, (\mathbf{1}_{\mathcal{X}}\otimes U)w\rangle| :U\in U(\mathcal{Y})\}\\
        =&\max{|\langle A,BU^T\rangle|:U\in U(\mathcal{Y})}\}\\
        =& \max\{|\langle \overline{U}, A^*B\rangle|:U\in U(\mathcal{Y})\}\\
        =&\|A^*B\|_1=F_{J_1}(J_1AA^*,J_1BB^*)=F_{J_1}(P,Q)
    \end{align*}
\end{proof}
\begin{corollary}
Let $u, v \in \mathcal{X} \otimes \mathcal{Y}$ be vectors for complex Euclidean spaces $\mathcal{X}$ and $\mathcal{Y}$ equipped with Krein matrices $J_1$ and $J_2$, so that $\mathcal{X} \otimes \mathcal{Y}$ is equipped with Krein matrix $J_1 \otimes J_2$. We have
\begin{align*}
F_{J_1}\Big(
\mathrm{Tr}_{\mathcal{Y}}^{J_1 \otimes J_2}\big((J_1 \otimes J_2)(uu^*)\big),
\mathrm{Tr}_{\mathcal{Y}}^{J_1 \otimes J_2}\big((J_1 \otimes J_2)(vv^*)\big)
\Big)=\left\|\mathrm{Tr}_{\mathcal{X}}^{J_1 \otimes J_2}\big((J_1 \otimes J_2)(v u^*)\big)\right\|_1
\end{align*}
\end{corollary}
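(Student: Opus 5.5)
The plan is to reduce both sides to the vectorization picture already used in the proof of the $J$-analog of Uhlmann's theorem, and then invoke the lemma $F_J(JAA^*,JBB^*)=\|A^*B\|_1$.

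\textbf{Step 1: remove the Krein factors.} Since $J_1\otimes J_2$ is a self-adjoint involution, $(J_1\otimes J_2)^2=\mathbf{1}$. By the definition $\mathrm{Tr}^{J_1\otimes J_2}_{\mathcal{Y}}(\cdot)=\mathrm{Tr}_{\mathcal{Y}}((J_1\otimes J_2)\,\cdot\,)$, the expressions appearing on both sides therefore collapse:
\begin{align*}
\mathrm{Tr}_{\mathcal{Y}}^{J_1\otimes J_2}\big((J_1\otimes J_2)uu^*\big)=\mathrm{Tr}_{\mathcal{Y}}(uu^*),\qquad \mathrm{Tr}_{\mathcal{X}}^{J_1\otimes J_2}\big((J_1\otimes J_2)vu^*\big)=\mathrm{Tr}_{\mathcal{X}}(vu^*).
\end{align*}
The same holds with $vv^*$ in place of $uu^*$.

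\textbf{Step 2: vectorize.} Write $u=\mathrm{vec}(A)$ and $v=\mathrm{vec}(B)$ with $A,B\in L(\mathcal{Y},\mathcal{X})$. The standard identities from \cite[p.~23]{Watrous_2018} give
\begin{align*}
\mathrm{Tr}_{\mathcal{Y}}(uu^*)=AA^*,\qquad \mathrm{Tr}_{\mathcal{Y}}(vv^*)=BB^*,\qquad \mathrm{Tr}_{\mathcal{X}}(vu^*)=(A^*B)^{T}.
\end{align*}
Because transposition preserves singular values, the right-hand side of the corollary equals $\|A^*B\|_1$.

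\textbf{Step 3: the left-hand side.} Following the convention of the preceding theorem, where the $J$-partial trace of $(J_1\otimes J_2)uu^*$ is set equal to $J_1P$, the $J_1$-states being compared are $P=J_1AA^*$ and $Q=J_1BB^*$. By the definition of $J$-fidelity (equivalently, the lemma preceding the Uhlmann analog),
\begin{align*}
F_{J_1}(J_1AA^*,J_1BB^*)=F(AA^*,BB^*)=\|A^*B\|_1.
\end{align*}
This matches Step 2 and proves the identity.

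\textbf{Main obstacle: the convention in Step 3.} If the statement is read literally, $F_{J_1}$ is applied to $AA^*$ and $BB^*$ themselves, and these are not $J_1$-positive in general. In that case $F_{J_1}(AA^*,BB^*)=F(J_1AA^*,J_1BB^*)$ does not obviously reduce to $\|A^*B\|_1$. I would therefore state explicitly that the corollary is to be read in the same sense as the Uhlmann theorem: the $J$-partial trace produces $J_1P$, so the $J_1$-state is $J_1$ times it. With that reading, the corollary follows from the lemma together with the vectorization identities, with no further computation.
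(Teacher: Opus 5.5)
Your proposal is correct and is essentially the paper's proof: write $u=\mathrm{vec}(A)$ and $v=\mathrm{vec}(B)$, apply the lemma $F_{J_1}(J_1AA^*,J_1BB^*)=\|A^*B\|_1$, and use $\|A^*B\|_1=\|(A^*B)^T\|_1$ to identify the right-hand side. The convention point you flag is genuine: read literally, the $J$-partial traces on the left equal $AA^*$ and $BB^*$, and the paper's first equality silently identifies them with $J_1AA^*$ and $J_1BB^*$, which is the reading you state explicitly.
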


\begin{proof}
Let $A,B \in L(\mathcal{Y},\mathcal{X})$ be such that $u=\mathrm{vec}(A)$ and $v=\mathrm{vec}(B)$.

\begin{align*}
&F_{J_1}\Big(
\mathrm{Tr}_{\mathcal{Y}}^{J_1 \otimes J_2}\big((J_1 \otimes J_2)(uu^*)\big),
\mathrm{Tr}_{\mathcal{Y}}^{J_1 \otimes J_2}\big((J_1 \otimes J_2)(vv^*)\big)
\Big) \\
&= F_{J_1}(J_1 A A^*,\, J_1 B B^*) \\
&= \|A^* B\|_1 \\
&= \|(A^* B)^T\|_1 \\
&= \left\|\mathrm{Tr}_{\mathcal{X}}^{J_1 \otimes J_2}\big((J_1 \otimes J_2)(v u^*)\big)\right\|_1.
\end{align*}
\end{proof}
\section{Morphisms of Krein spaces}\label{Krein section}
We wish to define the $J$-preserving maps of our  spaces in an analogous way to the usual case. 
The first to consider is the notion of $J$-positive maps. To consider in full generality, we start in the setting that $\mathcal{X}$ and $\mathcal{Y}$ are complex Euclidean spaces each equipped with a $J$-matrix $J_1$ and $J_2$ respectively. The following subsection follows from \cite{FelipeSosa2022} and we mainly reframe terminology and results to match up to the notation of this paper and \cite{Watrous_2018}.
\begin{definition}
    A map $\Phi\in T(\mathcal{X},\mathcal{Y})$ is $(J_1,J_2)$-positive if $A\in \mathrm{Pos}_{J_1}(\mathcal{X})$ implies $\Phi(A)\in \mathrm{Pos}_{J_2}(\mathcal{Y})$. 
\end{definition}
The above definition is made in \cite{FelipeSosa2022}.
We have that if $A\in \mathrm{Pos}_{J_1}(\mathcal{X})$ then we can write $A=JB$ for $B\in \mathrm{Pos}(\mathcal{X})$ and similarly if $\Phi(A)\in \mathrm{Pos}_{J_2}(\mathcal{Y})$ then $J_2\Phi(A)\in\mathrm{Pos}(\mathcal{Y})$. From this we obtain
\begin{lemma}
    $\Phi:L(\mathcal{X})\to L(\mathcal{Y})$ is $(J_1,J_2)$-positive if and only if $J_2\Phi(J_1\cdot):L(\mathcal{X})\to L(\mathcal{Y})$ is positive.
\end{lemma}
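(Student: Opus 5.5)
The plan is to reduce everything to the bijection $B\mapsto J_1B$ between $\mathrm{Pos}(\mathcal{X})$ and $\mathrm{Pos}_{J_1}(\mathcal{X})$ (and likewise for $J_2$ on $\mathcal{Y}$). This is the untwisting described in the Remark after the definition of $\mathrm{Pos}_J$. First I will record the basic fact this rests on: since $J_1$ is a self-adjoint involution, $J_1^2=\mathbf{1}_{\mathcal{X}}$. For any $A\in L(\mathcal{X})$ and $x\in\mathcal{X}$ we have
\begin{equation*}
[Ax,x]_{J_1}=\langle J_1Ax,x\rangle .
\end{equation*}
Hence $A\in\mathrm{Pos}_{J_1}(\mathcal{X})$ if and only if $J_1A\in\mathrm{Pos}(\mathcal{X})$, and equivalently $B\in\mathrm{Pos}(\mathcal{X})$ if and only if $J_1B\in\mathrm{Pos}_{J_1}(\mathcal{X})$. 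The same equivalence holds on $\mathcal{Y}$ with $J_2$.

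For the forward direction, I assume $\Phi$ is $(J_1,J_2)$-positive and take $B\in\mathrm{Pos}(\mathcal{X})$. Then $J_1B\in\mathrm{Pos}_{J_1}(\mathcal{X})$, so $\Phi(J_1B)\in\mathrm{Pos}_{J_2}(\mathcal{Y})$. By the fact above applied on $\mathcal{Y}$, this gives $J_2\Phi(J_1B)\in\mathrm{Pos}(\mathcal{Y})$. So the map $B\mapsto J_2\Phi(J_1B)$ is positive.

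For the converse, I assume $\Psi:=J_2\Phi(J_1\,\cdot)$ is positive and take $A\in\mathrm{Pos}_{J_1}(\mathcal{X})$. Write $A=J_1(J_1A)$ using $J_1^2=\mathbf{1}$, with $J_1A\in\mathrm{Pos}(\mathcal{X})$. Then
\begin{equation*}
J_2\Phi(A)=\Psi(J_1A)\in\mathrm{Pos}(\mathcal{Y}),
\end{equation*}
so $\Phi(A)\in\mathrm{Pos}_{J_2}(\mathcal{Y})$.

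There is no real obstacle here. The only point needing care is the involution identity $J^2=\mathbf{1}$, which is what makes $A\mapsto JA$ a bijection of the cones rather than just a map into them. A second point is the convention that $J$ sits in the first slot of $\langle\cdot,\cdot\rangle$: self-adjointness of $J$ is what ensures $[Ax,x]_J=\langle x, JAx\rangle$ has the same (real) sign under either convention.
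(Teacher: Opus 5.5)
Your proof is correct and matches the paper's argument. The paper also just uses the untwisting bijection $B\mapsto J_1B$ between $\mathrm{Pos}(\mathcal{X})$ and $\mathrm{Pos}_{J_1}(\mathcal{X})$ (and $C\mapsto J_2C$ on $\mathcal{Y}$), though you spell out both directions and the role of $J^2=\mathbf{1}$ more carefully. One minor correction to your closing aside: $\langle JAx,x\rangle$ and $\langle x,JAx\rangle$ are complex conjugates by conjugate symmetry of the inner product alone, so self-adjointness of $J$ plays no role in this lemma; only $J^2=\mathbf{1}$ is used.
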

We have the following diagram 
\begin{equation*}
\begin{tikzcd}
\mathrm{Pos}_{J_1}(\mathcal{X})\arrow[r, "\Phi"] \arrow[d, <->, "J_1(\cdot)"'] & \mathrm{Pos}_{J_2}(\mathcal{Y}) \arrow[d, <->, "J_2(\cdot)"] \\
\mathrm{Pos}(\mathcal{X}) \arrow[r, "\Psi"'] & \mathrm{Pos}(\mathcal{Y})
\end{tikzcd}
\end{equation*}
where $\Psi=J_2\Phi(J_1\cdot)$ and from this we see the action of $\Phi$ can be recovered from $\Psi$ by taking $J_2\Psi(J_1\cdot)$
Following this, we are now motivated to make the definition of completely $J$-positive maps in the most general setting.
\begin{definition}[Completely $J$-positive maps]
Given two Krein spaces $(\mathcal{X},J_1), (\mathcal{Y}, J_2)$, we say that \begin{equation*}\Phi:L(\mathcal{X})\to L(\mathcal{Y})\end{equation*} is completely $(J_1,J_2)$-positive if for any $J$ space $(\mathcal{Z}, J_3)$, we have that 
\begin{align*}
    \Phi\otimes \mathbf{1}_{L(\mathcal{Z})}:L(\mathcal{X})\otimes L(\mathcal{Z}) \to L(\mathcal{Y})\otimes L(\mathcal{Z})
\end{align*} is $(J_1\otimes J_3, J_2\otimes J_3)$ positive.
\end{definition}
We have that the above definition is very general, and as we will see below it is a little unnecessary. 
The above definition is made in \cite{FelipeSosa2022} following the definition of completely $J$-positive maps in terms of Kraus operators. It is well known that a map $\Phi$ is completely positive if and only if it admits a Kraus decomposition of the form
\begin{equation*}
    \Phi(X)=\sum_{a\in \Sigma}A_a^*XA_a
\end{equation*}
for $A_a\in L(\mathcal{X})$. Hence, as observed in \cite{FelipeSosa2022}, we have $\Phi$ is completely $J$-positive if and only if this can be written has 
\begin{equation*}
    \Phi(X)=\sum_{a\in \Sigma}A_a^\#XA_a
\end{equation*}
We have that our definition coincides with the definition as made in \cite{FelipeSosa2022}, which is equivalent to Lemma \ref{lemma 6.4} below. 
\begin{lemma}\label{lemma 6.4}
    The above definition can be characterized as \begin{equation*}\Phi:L(\mathcal{X})\to L(\mathcal{Y})\end{equation*} is completely $(J_1,J_2)$-positive if and only if 
    \begin{align*}
        J_2\Phi(J_1\cdot)\otimes \mathbf{1}_{L(\mathcal{Z})} 
    \end{align*}
    is positive for any $J$-space $(\mathcal{Z},J_3)$
\end{lemma}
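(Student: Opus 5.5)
The plan is to unwind the definition of complete $(J_1,J_2)$-positivity through the single-map characterization in the preceding lemma, applied on the tensor-product spaces. Fix a Krein space $(\mathcal{Z},J_3)$. By definition, $\Phi$ is completely $(J_1,J_2)$-positive exactly when, for every such $(\mathcal{Z},J_3)$, the map $\Phi\otimes\mathbf{1}_{L(\mathcal{Z})}$ is $(J_1\otimes J_3,J_2\otimes J_3)$-positive. Note that $J_1\otimes J_3$ and $J_2\otimes J_3$ are again self-adjoint involutions, hence Krein matrices. So the earlier lemma applies with these Krein matrices. It says this holds if and only if
\begin{equation*}
\Psi_{\mathcal{Z}} := (J_2\otimes J_3)\,(\Phi\otimes\mathbf{1}_{L(\mathcal{Z})})\big((J_1\otimes J_3)\,\cdot\,\big)
\end{equation*}
is positive on $L(\mathcal{X}\otimes\mathcal{Z})$.

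The key computation is to identify $\Psi_{\mathcal{Z}}$. On an elementary tensor $X\otimes Z$ we get
\begin{equation*}
\Psi_{\mathcal{Z}}(X\otimes Z)=J_2\Phi(J_1X)\otimes J_3J_3Z=J_2\Phi(J_1X)\otimes Z,
\end{equation*}
using $J_3^2=\mathbf{1}$. Both sides are linear, and elementary tensors span $L(\mathcal{X})\otimes L(\mathcal{Z})$. Hence
\begin{equation*}
\Psi_{\mathcal{Z}}=J_2\Phi(J_1\cdot)\otimes\mathbf{1}_{L(\mathcal{Z})}.
\end{equation*}
In particular $\Psi_{\mathcal{Z}}$ does not depend on $J_3$. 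Combining this identity with the previous paragraph gives both directions of the equivalence in Lemma \ref{lemma 6.4} at once, quantified over all $(\mathcal{Z},J_3)$.

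The only point needing care is the bookkeeping of which space each operator acts on. Left multiplication by $J_1\otimes J_3$ on $L(\mathcal{X}\otimes\mathcal{Z})\cong L(\mathcal{X})\otimes L(\mathcal{Z})$ must factor as (left multiplication by $J_1$) $\otimes$ (left multiplication by $J_3$). This is immediate from $(J_1\otimes J_3)(X\otimes Z)=J_1X\otimes J_3Z$.

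As a remark to include, the independence from $J_3$ shows that the condition is just complete positivity of $J_2\Phi(J_1\cdot)$. Choosing $J_3=\mathbf{1}$ already suffices, which explains the earlier comment that the general definition is "a little unnecessary". I expect no real obstacle; the main step is the tensor identity above.
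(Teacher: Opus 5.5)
Your proof is correct and follows the paper's argument. Like the paper, you apply the single-map characterization to the Krein matrices $J_1\otimes J_3$ and $J_2\otimes J_3$, and then use $J_3^2=\mathbf{1}$ to identify $(J_2\otimes J_3)(\Phi\otimes\mathbf{1}_{L(\mathcal{Z})})((J_1\otimes J_3)\cdot)$ with $J_2\Phi(J_1\cdot)\otimes\mathbf{1}_{L(\mathcal{Z})}$; your elementary-tensor computation spells out the left-multiplication bookkeeping that the paper compresses into the expression $J_3\mathbf{1}_{L(\mathcal{Z})}J_3$.
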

\begin{proof}
     If $\Phi$ is completely $(J_1,J_2)$ positive, we then have that $(J_2\otimes J_3)(\Phi\otimes \mathbf{1}_{\mathcal{Z}})((J_1\otimes J_2)(\cdot))$ is positive. Expanding out we obtain 
    \begin{align*}
        (J_2\otimes J_3)(\Phi\otimes \mathbf{1}_{\mathcal{Z}})((J_1\otimes J_3)\cdot)=J_2\Phi(J_1\cdot)\otimes J_3\mathbf{1}_{L(\mathcal{Z})}J_3=J_2\Phi(J_1\cdot)\otimes \mathbf{1}_{L(\mathbf{Z})}
    \end{align*}
    which implies $J_2\Phi(J_1\cdot)$ is a completely positive map. Applying the equalities backwards, we obtain the desired if and only if. 
\end{proof}
We obtain that if $\Phi:L(\mathcal{X})\to L(\mathcal{Y})$ is completely $(J_1,J_2)$ positive then the following diagram commutes for every complex Euclidean $J$-space $(\mathcal{Z},J_3)$
\begin{equation*}
\begin{tikzcd}[column sep=2cm]
\mathrm{Pos}_{J_1\otimes J_3}(\mathcal{X}\otimes \mathcal{Z})\arrow[r, "\Phi\otimes \mathbf{1}_{L(\mathcal{Z})}"] \arrow[d, <->, "J_1\otimes J_3(\cdot)"'] & \mathrm{Pos}_{J_2\otimes J_3}(\mathcal{Y}\otimes \mathcal{Z}) \arrow[d, <->, "J_2\otimes J_3(\cdot)"] \\
\mathrm{Pos}(\mathcal{X}\otimes \mathcal{Z}) \arrow[r, "J_2\Psi(J_1\cdot)\otimes \mathbf{1}_{L(\mathcal{Z})}"'] & \mathrm{Pos}(\mathcal{Y}\otimes \mathcal{Z})
\end{tikzcd}
\end{equation*}
\subsection{Information processing in the Krein setting}
In the standard quantum theory, we have that applying a channel to two states increases their fidelity, a nice mathematical explanation of how noisy channels can reduce the ability to discriminate states. We have that the notion of a $J$-channel has this same effect on the $J$-fidelity. 
\begin{theorem}
    Let $P,Q\in \mathrm{Pos}_{J_1}(\mathcal{X})$ and let $\Phi:L(\mathcal{X})\to L(\mathcal{Y})$ be a $(J_1,J_2)$-channel. Then
    $F_{J_1}(P,Q)\leq F_{J_2}(\Phi(P),\Phi(Q))$
\end{theorem}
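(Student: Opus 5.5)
The plan is to reduce the statement to the ordinary monotonicity of fidelity under channels, using the untwisting diagrams of this section. A $(J_1,J_2)$-channel is not formally defined before this point. I will take it to mean a completely $(J_1,J_2)$-positive map that preserves the $J$-trace, so that $\mathrm{Tr}(J_2\Phi(X))=\mathrm{Tr}(J_1X)$ for all $X\in L(\mathcal{X})$; this is the analogue of a channel in \cite{FelipeSosa2022}. Set
\begin{equation*}
\Psi=J_2\Phi(J_1\,\cdot\,):L(\mathcal{X})\to L(\mathcal{Y}).
\end{equation*}
By Lemma \ref{lemma 6.4}, $\Psi$ is completely positive. It is also trace preserving, since $\mathrm{Tr}(\Psi(Y))=\mathrm{Tr}(J_2\Phi(J_1Y))=\mathrm{Tr}(J_1J_1Y)=\mathrm{Tr}(Y)$, using $J_1^2=\mathbf{1}$. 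Hence $\Psi$ is an ordinary channel in $C(\mathcal{X},\mathcal{Y})$.

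Next I rewrite both sides with the theorem on $J$-fidelity, $F_J(P,Q)=F(JP,JQ)$. Since $P,Q\in\mathrm{Pos}_{J_1}(\mathcal{X})$, we have $F_{J_1}(P,Q)=F(J_1P,J_1Q)$. The operators $\Phi(P)$ and $\Phi(Q)$ lie in $\mathrm{Pos}_{J_2}(\mathcal{Y})$ by positivity, so $F_{J_2}(\Phi(P),\Phi(Q))=F(J_2\Phi(P),J_2\Phi(Q))$. Because $J_1^2=\mathbf{1}$, we can write $J_2\Phi(P)=J_2\Phi(J_1(J_1P))=\Psi(J_1P)$, and likewise $J_2\Phi(Q)=\Psi(J_1Q)$. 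The claim therefore becomes
\begin{equation*}
F(J_1P,J_1Q)\leq F(\Psi(J_1P),\Psi(J_1Q)),
\end{equation*}
with $J_1P,J_1Q\in\mathrm{Pos}(\mathcal{X})$ and $\Psi$ a channel.

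This is the standard monotonicity of fidelity under channels \cite[Thm.~3.27]{Watrous_2018}. To keep the argument within the $J$-framework, I would reprove it from the Fuchs--Caves characterisation established above. Take an optimal measurement $\nu$ on $\mathcal{Y}$ for the pair $\Psi(J_1P),\Psi(J_1Q)$, and pull it back along the adjoint: $\mu(a)=\Psi^*(\nu(a))$. Since $\Psi^*$ is positive and unital, $\mu$ is a measurement on $\mathcal{X}$, and $\langle\mu(a),J_1P\rangle=\langle\nu(a),\Psi(J_1P)\rangle$, with the same identity for $Q$. Its Bhattacharyya sum is therefore $F(\Psi(J_1P),\Psi(J_1Q))$, and by minimality this sum is at least $F(J_1P,J_1Q)$. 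If one prefers to stay entirely in $J$-language, $J_1\mu$ is the corresponding $J_1$-measurement.

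The main obstacle is definitional rather than computational. I must make sure that the intended definition of a $(J_1,J_2)$-channel really gives trace preservation of $\Psi$, that is, $J$-trace preservation of $\Phi$. If the paper's definition is instead trace preservation of $\Phi$ itself, I would need to check whether that forces $\mathrm{Tr}(J_2\Phi(J_1Y))=\mathrm{Tr}(Y)$. Failing that, I would prove the weaker statement that $\Psi$ is a completely positive map whose adjoint is unital, since unitality of $\Psi^*$ is all the pullback argument actually uses.
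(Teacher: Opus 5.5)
Your proposal is correct, and it takes a genuinely different route from the paper at the key step.

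\textbf{The definition.} Your reading of a $(J_1,J_2)$-channel matches the paper's. The paper states it explicitly for $U$-channels in the $S$-space section: completely positive in the twisted sense, with $U_2\Phi(U_1^*\cdot)$ trace preserving. Its proof here uses exactly the statement that $J_2\Phi(J_1\cdot)$ is a channel. So the worry in your last paragraph does not arise. Note also that your proposed fallback is not weaker: unitality of $\Psi^*$ is the same condition as trace preservation of $\Psi$.

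\textbf{What you share with the paper.} The untwisting step $J_2\Phi(P)=\Psi(J_1P)$, which reduces everything to ordinary operators, is also what the paper does implicitly.

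\textbf{Where you differ.} The two arguments use different characterisations of fidelity.
\begin{itemize}
\item The paper transplants Watrous's block-matrix proof. It picks $X$ with $\begin{bmatrix} J_1P & X\\ X^* & J_1Q\end{bmatrix}\geq 0$ and $|\mathrm{Tr}(X)|=F_{J_1}(P,Q)$. It applies $\Psi$ blockwise, which needs $\Psi\otimes\mathbf{1}_{L(\mathbb{C}^2)}$ to be positive, i.e.\ $2$-positivity, to keep the block operator positive semidefinite. Trace preservation then gives $|\mathrm{Tr}(\Psi(X))|=|\mathrm{Tr}(X)|$.
\item You use the Fuchs--Caves (Bhattacharyya) characterisation and pull measurements back along $\Psi^*$.
\end{itemize}

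\textbf{What each buys.} Your argument uses only that $\Psi$ is positive and trace preserving. It therefore proves the monotonicity for every $(J_1,J_2)$-positive map that preserves the $J$-trace, with no complete positivity needed. It also ties the result directly to the paper's operational definition of $F_J$ as a minimum over $J$-measurements. The paper's argument genuinely uses $2$-positivity, but it needs nothing about measurements or their existence.

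\textbf{One caution.} The paper constructs an optimal measurement only when both operators are invertible, and handles the singular case with a one-line remark. The operators $\Psi(J_1P)$ and $\Psi(J_1Q)$ can easily be singular even when $P$ and $Q$ are not. So on $\mathcal{Y}$ you should invoke the standard Fuchs--Caves theorem for arbitrary positive semidefinite operators from Watrous, rather than the construction ``established above'' in the paper.
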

\begin{proof}
Again, we simply adapt the proof \citep[p.~156]{Watrous_2018}.
    We may choose $X\in L(\mathcal{X})$ such that 
    \begin{align*}
        \begin{bmatrix}
J_1P & X \\
X^* & J_1Q
\end{bmatrix}
    \end{align*}
    is positive semi-definite
    and such that $|\mathrm{Tr}(X)|=F_{J_1}(P,Q)$. We have that $\Phi$ is $(J_1,J_2)$-positive, hence we have 
\begin{equation*}
\begin{bmatrix}
J_2\Phi(P) & J_2\Phi(J_1X) \\
J_2\Phi(J_1X^*) & J_2\Phi(Q)
\end{bmatrix}=\begin{bmatrix}
J_2\Phi(P) & J_2\Phi(J_1X) \\
J_2\Phi(J_1X)^* & J_2\Phi(Q)
\end{bmatrix}
\end{equation*}
is positive semi-definite, where we have used that $\Phi$ being a $(J_1,J_2)$ channel is equivalent to $J_2\Phi(J_1\cdot)$ being a channel, allowing us to move the adjoint to the outside. Since $J_2\Phi(J_1\cdot)$ is trace preserving. we have that 
\begin{align*}
    F_{J_1}(\Phi(P),\Phi(Q))\geq |\mathrm{Tr}(J_2\Phi(J_1X))|=|\mathrm{Tr}(X)|=F_{J_1}(P,Q)
\end{align*}
\end{proof}
\section{Weighted Krein space fidelity}\label{Section 7}
The definition of $J$-fidelity is in some ways unattractive as it completely ignores the decomposition of the space into positive and negative parts. There are several possible notions of ``weighted" $J$-fidelity. One candidate is to take the regular trace instead of the $J$-trace in the definition of fidelity, 
\begin{equation*}F_J^W(P,Q)=\mathrm{Tr}\left(J\sqrt{\sqrt{JQ}JP\sqrt{JQ}}\right)\end{equation*}
This definition does have some benefits, especially when $JP$ and $JQ$ can be expressed in block diagonal form with respect to the decomposition of the space into positive and negative parts. Furthermore, given that $P$ and $Q$ are $J$-states, we have 
\begin{equation*}-1\leq F_J^W(P,Q)\leq 1\end{equation*} since 
\begin{align*}
    F_J^W(P,Q)&=\mathrm{Tr}\left(\Pi_1\sqrt{\sqrt{JQ}JP\sqrt{JQ}}-\Pi_2\sqrt{\sqrt{JQ}JP\sqrt{JQ}}\right)\\
               &=\mathrm{Tr}\left(\Pi_1\sqrt{\sqrt{JQ}JP\sqrt{JQ}}\right)-\mathrm{Tr}\left(\Pi_2\sqrt{\sqrt{JQ}JP\sqrt{JQ}}\right)\\
               &=\mathrm{Tr}\left((\Pi_1^2\sqrt{\sqrt{JQ}JP\sqrt{JQ}}\right)-\mathrm{Tr}\left(\Pi_2^2\sqrt{\sqrt{JQ}JP\sqrt{JQ}}\right)\\
               &=\mathrm{Tr}\left(\Pi_1\sqrt{\sqrt{JQ}JP\sqrt{JQ}}\Pi_1\right)-\mathrm{Tr}\left(\Pi_2\sqrt{\sqrt{JQ}JP\sqrt{JQ}}\Pi_2\right)
\end{align*}
where $\Pi_1$ and $\Pi_2$ are the associated projections onto the positive and negative parts of the space.
Since the terms in this subtraction are the traces of projected positive semi-definite operators, whose trace is bounded above by $1$, we then obtain the desired bound. If $|F^W_J(P,Q)|=1$, then we have 
\begin{equation*}F^W_J(P,Q)=F_J(P,Q)\end{equation*} which then implies $JP=JQ$, and hence $P=Q$. We also then must have that $JP$ is invariant under the projections $\Pi_1$ and $\Pi_2$, which implies that $JP$ is a singular positive semi-definite operator which is block diagonal with $0$s in the top left or bottom right block. Since a positive semi-definite matrix with a zero block in the top left or bottom right block also must have zero blocks on the off-diagonal blocks (see \citep[pp.~144--146]{Watrous_2018}), we know then that $JP$ must be of the form \begin{equation*}JP=\begin{pmatrix}
    A & 0\\
    0 & 0
\end{pmatrix}\end{equation*} or \begin{equation*}JP=\begin{pmatrix}
    0&0\\
    0&-A
\end{pmatrix}\end{equation*}
If the upper bound of $1$ is achieved 
\begin{align*}
1=\left|\mathrm{Tr}\left(J\sqrt{\sqrt{JQ}JP\sqrt{JQ}}\right)\right|&\leq \|J\|_\infty \left\|\sqrt{\sqrt{JQ}JP\sqrt{JQ}}\right\|_1\\&=\mathrm{Tr}\left(\sqrt{JPJQJPJQ}\right)\\&=\mathrm{Tr}(JPJQ)\\&\leq\|JP\|_1\|JQ\|_\infty\\
&\leq 1
\end{align*}

which means that these inequalities are really equalities. This leads us to the following.
\begin{theorem}\label{pure state theorem}
    $J$-quantum states $P$ and $Q$ satisfy $|F_J^W(P,Q)|=1$ if and only if $P=Q$ and $P$ is invariant under projection by onto the first or second factor in the decomposition of the space. 
\end{theorem}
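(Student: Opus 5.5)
Write $\rho=JP$, $\sigma=JQ$ (density operators), $M=\sqrt{\sqrt{\sigma}\rho\sqrt{\sigma}}\geq 0$, and let $\Pi_1,\Pi_2$ be the spectral projections of $J$ onto its $\pm1$ eigenspaces, so that $J=\Pi_1-\Pi_2$ and $\mathbf{1}=\Pi_1+\Pi_2$. Then
\[
F_J^W(P,Q)=\mathrm{Tr}(\Pi_1M\Pi_1)-\mathrm{Tr}(\Pi_2M\Pi_2), \qquad F_J(P,Q)=\mathrm{Tr}(M)=\mathrm{Tr}(\Pi_1M\Pi_1)+\mathrm{Tr}(\Pi_2M\Pi_2),
\]
and both summands are nonnegative. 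The proof then runs in two directions.

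\emph{Sufficiency.} Suppose $P=Q$ and $\rho$ is invariant under one of the projections, say $\Pi_1\rho\Pi_1=\rho$ (the case of $\Pi_2$ is symmetric up to sign). Then $\sigma=\rho$, so $M=\sqrt{\rho^2}=\rho$. Hence
\[
F_J^W(P,P)=\mathrm{Tr}(J\rho)=\mathrm{Tr}(J\Pi_1\rho\Pi_1)=\mathrm{Tr}(\Pi_1\rho)=\mathrm{Tr}(\rho)=1.
\]
In the $\Pi_2$ case the same computation gives $-1$.

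\emph{Necessity.} Assume $|F_J^W(P,Q)|=1$.

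\begin{enumerate}
\item By Proposition~\ref{properties of J fidelity}, $F_J(P,Q)\leq \sqrt{\mathrm{Tr}(\rho)\mathrm{Tr}(\sigma)}=1$. Combined with
\[
|F_J^W|\leq \mathrm{Tr}(\Pi_1M\Pi_1)+\mathrm{Tr}(\Pi_2M\Pi_2)=F_J\leq 1,
\]
this forces $F_J(P,Q)=1$.
\item The equality case of Proposition~\ref{properties of J fidelity} gives $\rho$ and $\sigma$ linearly dependent. Since both have trace $1$, $\rho=\sigma$, and multiplying by $J$ gives $P=Q$.
\item With $\rho=\sigma$ we have $M=\rho$. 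The first display then says $|\mathrm{Tr}(\Pi_1\rho\Pi_1)-\mathrm{Tr}(\Pi_2\rho\Pi_2)|=1$, while the second says the two terms sum to $1$. Since both are nonnegative, one of them is $0$, say $\mathrm{Tr}(\Pi_2\rho\Pi_2)=0$.
\item Since $\Pi_2\rho\Pi_2\geq0$ has trace zero, $\Pi_2\rho\Pi_2=0$, i.e.\ $\sqrt{\rho}\,\Pi_2=0$. Hence $\rho\Pi_2=\Pi_2\rho=0$; this is the standard fact that a positive semidefinite block matrix with a zero diagonal block has zero off-diagonal blocks (\citep[pp.~144--146]{Watrous_2018}).
\item Therefore $\rho=\Pi_1\rho\Pi_1$. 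Because $J$ commutes with $\Pi_1$, we also get $P=J\rho=\Pi_1P\Pi_1$, so $P$ is invariant under projection onto the first factor. The other case yields invariance under $\Pi_2$.
\end{enumerate}

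The main obstacle is step 2, getting from $F_J=1$ to $\rho=\sigma$. It relies on the equality case of the bound $F^2\leq\mathrm{Tr}\,\rho\,\mathrm{Tr}\,\sigma$, which is imported from Proposition~\ref{properties of J fidelity}. If one prefers a direct argument, use Uhlmann's theorem: pick purifications achieving $|\langle u,v\rangle|=1$ for unit vectors. Then $v=e^{i\theta}u$, and taking partial traces gives $\rho=\sigma$.

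Note that the chain of inequalities displayed just before the theorem is not needed; its middle step $\mathrm{Tr}\sqrt{\rho\sigma\rho\sigma}=\mathrm{Tr}(\rho\sigma)$ is not valid in general. The simpler trace decomposition above suffices.
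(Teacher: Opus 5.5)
Your proof is correct and takes the same route as the paper. The exposition preceding the theorem likewise writes $F_J^W$ as $\mathrm{Tr}(\Pi_1M\Pi_1)-\mathrm{Tr}(\Pi_2M\Pi_2)$, uses $|F_J^W(P,Q)|=1$ to force $F_J(P,Q)=1$ and hence $JP=JQ$, and then invokes the zero-diagonal-block fact to get the block form, with sufficiency checked by the same direct computation; you make each of these steps explicit. You are also right that the subsequent chain of inequalities is unnecessary and in fact invalid: $J=\mathrm{diag}(1,1,-1)$, $P=Q=\mathrm{diag}(\tfrac12,\tfrac12,0)$ gives $F_J^W(P,Q)=1$ for a mixed state, refuting the purity remark. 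However, the faulty step is the first equality $\|M\|_1=\mathrm{Tr}\sqrt{\rho\sigma\rho\sigma}$, since $\|M\|_1=\sum_i\sqrt{\lambda_i}$ while $\mathrm{Tr}\sqrt{\rho\sigma\rho\sigma}=\mathrm{Tr}(\rho\sigma)=\sum_i\lambda_i$ (principal root, $\lambda_i$ the eigenvalues of $\sqrt{\sigma}\rho\sqrt{\sigma}$), so the step you flag is actually fine.
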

\begin{proof}
The only if direction follows from the fact that if \begin{equation*}P=Q= 
\begin{pmatrix}
A & 0 \\
0 & 0
\end{pmatrix}
\end{equation*} then $F_J^W(P,Q)=\mathrm{Tr}(A)$ with an analogous result holding for \begin{equation*}P=Q=\begin{pmatrix}
0 & 0 \\
0 & A
\end{pmatrix}\end{equation*} The previous exposition shows that if $F_J^W(P,Q)=1$, we have that $P$ must be of the desired form.
\end{proof}
\begin{remark}
It was observed in conversation with Doctor Javier Alejandro Chavez-Dominguez that since the above implies that $\|P\|_{\infty}=1$, Theorem \ref{pure state theorem} also implies that $P$ must be a pure state, which seems rather remarkable as this is a very strong conclusion from a seemingly weaker assumption.
\end{remark}

So we see this definition has some nice properties, but for non block diagonal operators, the $J$ term inside the trace can behave wildly. This definition is also not commutative, which can be fixed by taking the symmetrization as the definition instead:
\begin{equation*}F_J^{SW}(P,Q)=\frac{1}{2}\mathrm{Tr}(J\sqrt{\sqrt{JQ}JP\sqrt{JQ}})+\frac{1}{2}\mathrm{Tr}((J\sqrt{\sqrt{JP}JQ\sqrt{JP}})\end{equation*}
This can then be further simplified using the polar decompositions of $\sqrt{JQ}\sqrt{JP}$ and $\sqrt{JP}\sqrt{JQ}$ which provide unitary $U,V\in U(\mathcal{X})$ such that 
\begin{equation*}F_J^{SW}=\frac{1}{2}\langle UJ,\sqrt{JP}\sqrt{JQ}\rangle+\frac{1}{2}\langle VJ,\sqrt{JQ}\sqrt{JP}\rangle\end{equation*}

In keeping with the fact that any Hilbert space can be embedded into a larger Krein space by just extending every vector by zeros with respect to some basis and declaring the additional dimensions to be the  ``negative part" \cite{Bognar1974}, we can view any measurement as being really some measurement taken on the positive part of some larger Krein space. If we decompose a Krein space $\mathcal{X}$ as 
\begin{equation*}\mathcal{X}=\mathcal{X}_1\oplus\mathcal{X}_2\end{equation*} where $\mathcal{X}_1$ and $\mathcal{X}_2$ are the respective positive and negative parts, then for any $J$-quantum state $X\in \mathrm{Pos}_{J}(\mathcal{X})$, we can take quantum measurements of 
\begin{equation*}\Pi_1JX\Pi_1\in L(\mathcal{X}_1),\,\,\,\Pi_2JX\Pi_2\in L(\mathcal{X}_2)\end{equation*}
respectively where $\Pi_1$ and $\Pi_2$ are the orthogonal projections onto $\mathcal{X}_1$ and $\mathcal{X}_2$ respectively. For $X,Y\in L(\mathcal{X})$, we can consider the fidelity of their projections and consider the difference between these two values to provide a notion of weighted fidelity, we will call this measurement weighted fidelity and define it as 
\begin{equation*}F_J^M(P,Q)=\mathrm{Tr}\left(\sqrt{\sqrt{\Pi_1JQ\Pi_1}\Pi_1JP\Pi_1\sqrt{\Pi_1JQ\Pi_1}}\right)-\mathrm{Tr}\left(\sqrt{\sqrt{\Pi_2JQ\Pi_2}\Pi_2JP\Pi_2\sqrt{\Pi_2JQ\Pi_2}}\right)\end{equation*}
This definition is commutative, satisfies the same bounds, and is easier to handle in general because the decomposition of the space is taken into account inside the square root. However, we lose the off diagonal blocks of $P$ and $Q$ in the $J$ basis. We have that this definition behaves nicer with already known properties of $J$-fidelity.
\subsection{Elementary properties of weighted measurement fidelity}
We collect a few results by applying regular fidelity properties contained \cite{Watrous_2018} to this notion of weighted measurement fidelity.
\begin{lemma}
    Given $P,Q\in\mathrm{{Pos}}_{J}(\mathcal{X})$, we have the following holds
    \begin{itemize}
        \item The measurement weighted fidelity is continuous at $(P,Q)$.
        \item $F_J^M(\lambda P,Q)=F_J^M(P,\lambda Q)=\sqrt{\lambda}F_J^M(P,Q)$ for all real $\lambda\geq 0$.
        \item $F_J^M(P,Q)=\|\sqrt{\Pi_1JP\Pi_1}\sqrt{\Pi_1JQ\Pi_1}\|_1-\|\sqrt{\Pi_2JP\Pi_2}\sqrt{\Pi_2JQ\Pi_2}\|_1$
        \item $-1\leq F_J^M(P,Q)\leq 1$.
    \end{itemize}

\end{lemma}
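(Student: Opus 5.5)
The plan is to write $F_J^M(P,Q)=F(P_1,Q_1)-F(P_2,Q_2)$, where $P_i=\Pi_iJP\Pi_i$ and $Q_i=\Pi_iJQ\Pi_i$ for $i=1,2$. Each $P_i,Q_i$ is positive semidefinite, since $JP,JQ\in\mathrm{Pos}(\mathcal{X})$ and compressing by an orthogonal projection preserves positivity. We regard them as elements of $\mathrm{Pos}(\mathcal{X}_i)$, or equivalently of $\mathrm{Pos}(\mathcal{X})$; the fidelity is the same either way, because the square roots vanish off $\mathcal{X}_i$. With this reduction, each of the four claims follows from the corresponding fact about ordinary fidelity in \cite[Ch.~3]{Watrous_2018}, applied to the two summands separately.

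\textbf{Continuity.} The maps $P\mapsto \Pi_iJP\Pi_i$ are linear, hence continuous. Ordinary fidelity is continuous on $\mathrm{Pos}(\mathcal{X})\times\mathrm{Pos}(\mathcal{X})$, as in the first item of Proposition \ref{properties of J fidelity}. Continuity of $F_J^M$ then follows from composing continuous maps and taking a difference.

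\textbf{Scaling.} Linearity gives $\Pi_iJ(\lambda P)\Pi_i=\lambda P_i$. Since $F(\lambda P_i,Q_i)=\sqrt{\lambda}F(P_i,Q_i)=F(P_i,\lambda Q_i)$, the factor $\sqrt{\lambda}$ comes out of both terms and hence out of the difference.

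\textbf{Norm formula.} We use the identity $F(A,B)=\|\sqrt{A}\sqrt{B}\|_1=\mathrm{Tr}\sqrt{\sqrt{B}A\sqrt{B}}$ from the introduction. Applying it to each term gives the stated expression directly.

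\textbf{Bounds.} Each term is nonnegative. Cauchy--Schwarz for fidelity (the last item of Proposition \ref{properties of J fidelity}, in its ordinary form) gives $F(P_i,Q_i)\le\sqrt{\mathrm{Tr}(P_i)\mathrm{Tr}(Q_i)}$. For $J$-states, $\mathrm{Tr}(P_i)=\mathrm{Tr}(\Pi_iJP)\le\mathrm{Tr}(JP)=1$, because $\mathrm{Tr}(P_1)+\mathrm{Tr}(P_2)=\mathrm{Tr}(JP)$ and both summands are nonnegative; the same holds for $Q_i$. Hence $0\le F(P_i,Q_i)\le 1$, and the difference lies in $[-1,1]$.

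The main obstacle is that the bound genuinely needs $P,Q$ to be $J$-states, or more generally to have $\mathrm{Tr}(JP),\mathrm{Tr}(JQ)\le 1$. For arbitrary $P,Q\in\mathrm{Pos}_J(\mathcal{X})$, scaling shows the bound fails. I would therefore state explicitly that the fourth item assumes $JP,JQ\in D(\mathcal{X})$, matching the earlier discussion of $F_J^W$. For general $P,Q$ the argument gives the replacement $|F_J^M(P,Q)|\le\sqrt{\mathrm{Tr}(JP)\mathrm{Tr}(JQ)}$, which follows from $\max(a,b)\le$ the Cauchy--Schwarz bound on each term. A second, minor point is checking that passing between $L(\mathcal{X}_i)$ and $L(\mathcal{X})$ does not change traces or fidelities; this holds because every operator involved is supported on $\mathcal{X}_i$.
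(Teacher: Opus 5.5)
Your proposal is correct and follows the paper's route. The paper gives no separate proof; it simply applies the standard properties of fidelity from Chapter 3 of Watrous to the two compressed terms $F(\Pi_iJP\Pi_i,\Pi_iJQ\Pi_i)$, exactly as you do. Your caveat on the fourth item is also right: the bound $-1\le F_J^M(P,Q)\le 1$ needs $\mathrm{Tr}(JP),\mathrm{Tr}(JQ)\le 1$ and fails for general $P,Q\in\mathrm{Pos}_J(\mathcal{X})$ by the scaling property. The paper tacitly assumes $J$-states here, since it says $F_J^M$ ``satisfies the same bounds'' as $F_J^W$, and that bound was derived only for $J$-states.
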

\begin{proposition}
    Let $\mathcal{X}$ be a complex Euclidean space with Krein matrix $J$ and let $P,Q\in \mathrm{Pos}_J(\mathcal{X})$ and let $\mathcal{X}=\mathcal{X}_1\oplus\mathcal{X}_2$ be the decomposition of the space into positive and negative parts. It holds that 
    \begin{align*}F_J^M(P,Q)=&\max\{|\mathrm{Tr}(X)|:X\in L(\mathcal{X}_1), \begin{pmatrix}
        \Pi_1JP\Pi_1& X\\
        X^*& \Pi_1JQ\Pi_1
    \end{pmatrix}\in\mathrm{Pos}(\mathcal{X}_1\oplus\mathcal{X}_1)\}\\-&\max\{|\mathrm{Tr}(Y)|:Y\in L(\mathcal{X}_2), \begin{pmatrix}
        \Pi_2JP\Pi_2 & Y\\
        Y^* & \Pi_2JQ\Pi_2
    \end{pmatrix} \in \mathrm{Pos}(\mathcal{X}_2\oplus\mathcal{X}_2)\}
    \end{align*}
\end{proposition}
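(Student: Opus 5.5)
The statement splits into two independent halves. Each half is exactly the standard semidefinite characterization of fidelity (Watrous, Theorem 3.17, pp.~144--146), applied to a pair of positive semidefinite operators on a single complex Euclidean space. So the plan is to reduce everything to that classical result.

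\textbf{Step 1: the compressed operators are positive semidefinite.} Write $\mathcal{X}=\mathcal{X}_1\oplus\mathcal{X}_2$, with $\Pi_1,\Pi_2$ the orthogonal projections. Since $P,Q\in\mathrm{Pos}_J(\mathcal{X})$, the operators $JP$ and $JQ$ lie in $\mathrm{Pos}(\mathcal{X})$; this is the remark that multiplication by $J$ untwists the cone. A compression by an orthogonal projection preserves positive semidefiniteness. Hence, regarded as operators on $\mathcal{X}_k$ by restricting to the range of $\Pi_k$, we have
\begin{equation*}
P_k:=\Pi_kJP\Pi_k\in\mathrm{Pos}(\mathcal{X}_k), \qquad Q_k:=\Pi_kJQ\Pi_k\in\mathrm{Pos}(\mathcal{X}_k), \qquad k=1,2.
\end{equation*}
This identification of $\Pi_kL(\mathcal{X})\Pi_k$ with $L(\mathcal{X}_k)$ is what the statement implicitly uses when it writes $X\in L(\mathcal{X}_1)$ and $Y\in L(\mathcal{X}_2)$. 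It should be stated explicitly, because square roots and traces are then unambiguous: computing them in $\mathcal{X}$ or in $\mathcal{X}_k$ gives the same result, as the operators vanish on the orthogonal complement.

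\textbf{Step 2: identify the two terms as ordinary fidelities.} By definition, $F_J^M(P,Q)=F(P_1,Q_1)-F(P_2,Q_2)$, where $F$ is the usual fidelity $\mathrm{Tr}\sqrt{\sqrt{Q}P\sqrt{Q}}$ on $\mathcal{X}_k$. This matches the third bullet of the preceding lemma, $F(P_k,Q_k)=\|\sqrt{P_k}\sqrt{Q_k}\|_1$.

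\textbf{Step 3: apply the classical characterization on each summand.} For $k=1,2$, the standard result gives
\begin{equation*}
F(P_k,Q_k)=\max\left\{|\mathrm{Tr}(Z)| : Z\in L(\mathcal{X}_k),\ \begin{pmatrix} P_k & Z\\ Z^* & Q_k\end{pmatrix}\in\mathrm{Pos}(\mathcal{X}_k\oplus\mathcal{X}_k)\right\}.
\end{equation*}
For completeness I would recall the two directions briefly. In one direction, positivity of the block operator gives $Z=\sqrt{P_k}K\sqrt{Q_k}$ for some contraction $K$. Then
\begin{equation*}
|\mathrm{Tr} Z|=|\langle K^*,\sqrt{Q_k}\sqrt{P_k}\rangle|\le\|\sqrt{P_k}\sqrt{Q_k}\|_1 .
\end{equation*}
In the other direction, equality is attained by taking $K$ to be the unitary from the polar decomposition. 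Substituting $k=1$ and $k=2$ into Step 2 and subtracting yields the displayed formula.

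\textbf{Main obstacle.} The only point needing care is the bookkeeping in Step 1. The compressions are singular as operators on $\mathcal{X}$, so the block positivity condition must be read on $\mathcal{X}_k\oplus\mathcal{X}_k$ and not on $\mathcal{X}\oplus\mathcal{X}$. One then checks that nothing is lost. If a block operator on $\mathcal{X}\oplus\mathcal{X}$ has diagonal blocks supported in $\mathcal{X}_k$ and is positive, its off-diagonal block is automatically supported in $\mathcal{X}_k$ as well, by the zero-block argument already cited from Watrous, pp.~144--146. So the maxima computed in either setting agree. Beyond this, the statement follows from the classical theorem and needs no new $J$-specific argument.
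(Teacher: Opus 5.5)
Your proposal is correct and follows the same route as the paper, which simply applies Watrous's semidefinite characterization of fidelity to each compressed pair $(\Pi_kJP\Pi_k,\Pi_kJQ\Pi_k)$, $k=1,2$, and subtracts the two maxima. Your extra bookkeeping makes explicit what the paper leaves implicit: positivity of the compressions, the identification of $\Pi_kL(\mathcal{X})\Pi_k$ with $L(\mathcal{X}_k)$, and the sketch of the classical theorem.
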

\begin{proof}\label{weighted measurement fidelity}
    We have by \citep[p.~144]{Watrous_2018} that 
    \begin{equation*}F(\Pi_1JP\Pi_1,\Pi_1JQ\Pi_1)=\max\left\{|\mathrm{Tr}(X)|:X\in L(\mathcal{X}_1), \begin{pmatrix}
        \Pi_1JP\Pi& X\\
        X^*& \Pi_1JQ\Pi_1
    \end{pmatrix}\in\mathrm{Pos}(\mathcal{X}_1\oplus\mathcal{X}_1)\right\}\end{equation*}
    and 
    \begin{equation*}F(\Pi_2JP\Pi_2,\Pi_2JP\Pi_2)=\max\left\{|\mathrm{Tr}(Y)|:Y\in L(\mathcal{X}_2), \begin{pmatrix}
        \Pi_2JP\Pi_2 & Y\\
        Y^* & \Pi_2JQ\Pi_2
    \end{pmatrix} \in \mathrm{Pos}(\mathcal{X}_2\oplus\mathcal{X}_2)\right\}\end{equation*}
    and subtracting these values gives us the weighted measurement fidelity.
\end{proof}
\begin{theorem}[Uhlmann's Theorem for weighted measurement fidelity]
Let $\mathcal{X}$ be a complex Euclidean space with Krein matrix $J$. Let $P,Q\in \mathrm{Pos}_J(\mathcal{X})$. Let $\mathcal{X}=\mathcal{X}_1\oplus\mathcal{X}_2$ be the fundamental decomposition of $\mathcal{X}$ with respect to $J$. Let $\mathcal{Y}$ be a complex Euclidean space having $\dim(\mathcal{Y})\geq \max\{\rank(P),\rank(Q)\}$. Then there exists an orthogonal decomposition of $\mathcal{Y}=\mathcal{Y}_1\oplus\mathcal{Y}_2$ and $u_1\in \mathcal{X}_1\otimes\mathcal{Y}_1$ and $u_2\in \mathcal{X}_2\otimes \mathcal{Y}_2$ such that $\mathrm{Tr}_{\mathcal{Y}_1}(u_1u_1^*)=\Pi_1JP\Pi_1$ and $\mathrm{Tr}_{\mathcal{Y}_2}(u_2u_2^*)=\Pi_2JP\Pi_2$ such that 
\begin{align*}
    F_J^M(P,Q)=&\max\{|\langle u_1,v_1\rangle|: v_1\in \mathcal{X}_1\otimes \mathcal{Y}_1, \mathrm{Tr}_{\mathcal{Y}_1}(v_1v_1^*)=\Pi_1JQ\Pi_1\}\\-&\max\{|\langle u_2,v_2\rangle|: v_2\in \mathcal{X}_2\otimes \mathcal{Y}_2,\mathrm{Tr}_{\mathcal{Y}_1}(v_2v_2^*)=\Pi_2JQ\Pi_2\}
\end{align*}
     
\end{theorem}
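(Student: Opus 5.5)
The plan is to reduce the statement to two independent applications of the ordinary Uhlmann theorem \citep[p.~151--152]{Watrous_2018}, one on each block of the fundamental decomposition. By definition, $F_J^M(P,Q)=F(\Pi_1JP\Pi_1,\Pi_1JQ\Pi_1)-F(\Pi_2JP\Pi_2,\Pi_2JQ\Pi_2)$. Each compression $\Pi_iJP\Pi_i$ and $\Pi_iJQ\Pi_i$ is positive semidefinite on $\mathcal{X}_i$, since $JP,JQ\in\mathrm{Pos}(\mathcal{X})$. So the two terms are honest fidelities of positive operators on the Euclidean spaces $\mathcal{X}_1$ and $\mathcal{X}_2$.

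First, choose the decomposition $\mathcal{Y}=\mathcal{Y}_1\oplus\mathcal{Y}_2$. Ordinary Uhlmann on $\mathcal{X}_i\otimes\mathcal{Y}_i$ needs $\dim(\mathcal{Y}_i)\ge\max\{\rank(\Pi_iJP\Pi_i),\rank(\Pi_iJQ\Pi_i)\}=:r_i$. Since compression does not increase rank, $r_i\le \max\{\rank(P),\rank(Q)\}\le\dim(\mathcal{Y})$. This bounds each $r_i$ separately, but not $r_1+r_2$, which is what a splitting $\mathcal{Y}=\mathcal{Y}_1\oplus\mathcal{Y}_2$ actually requires.

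\textbf{This is the main obstacle.} Take $JP=vv^*$ with $v$ having nonzero components in both $\mathcal{X}_1$ and $\mathcal{X}_2$. Then $\rank(P)=1$, yet both compressions have rank one, so $r_1+r_2=2>1$ is possible when $\dim\mathcal{Y}=1$. I would therefore do one of two things:
\begin{itemize}
\item strengthen the hypothesis to $\dim(\mathcal{Y})\ge r_1+r_2$, which holds for instance when $\dim(\mathcal{Y})\ge 2\max\{\rank(P),\rank(Q)\}$, or when $JP$ and $JQ$ are block diagonal, in which case $\rank(JP)=\rank(\Pi_1JP\Pi_1)+\rank(\Pi_2JP\Pi_2)$;
\item or enlarge $\mathcal{Y}$ by direct summing an auxiliary space.
\end{itemize}
I would state this caveat explicitly. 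Under the strengthened hypothesis, fix orthogonal subspaces with $\dim(\mathcal{Y}_i)\ge r_i$.

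Next, purify on each block. Using the existence-of-purifications fact recalled in the brief aside on partial traces (a purification exists iff the dimension of the auxiliary space is at least the rank), pick $u_i\in\mathcal{X}_i\otimes\mathcal{Y}_i$ with $\mathrm{Tr}_{\mathcal{Y}_i}(u_iu_i^*)=\Pi_iJP\Pi_i$. Then apply the ordinary Uhlmann theorem in each block. Writing $u_i=\mathrm{vec}(A_i)$ and $w_i=\mathrm{vec}(B_i)$ with $w_i$ a purification of $\Pi_iJQ\Pi_i$, every other purification $v_i$ has the form $(\mathbf{1}\otimes U)w_i$ with $U\in U(\mathcal{Y}_i)$. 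This gives
\begin{equation*}
\max\{|\langle u_i,v_i\rangle|\}=\max_{U}|\langle \overline{U},A_i^*B_i\rangle|=\|A_i^*B_i\|_1=F(\Pi_iJP\Pi_i,\Pi_iJQ\Pi_i),
\end{equation*}
where the last equality is the lemma $F(AA^*,BB^*)=\|A^*B\|_1$, i.e.\ the $J=\mathbf{1}$ case of the earlier lemma.

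Finally, subtract the $i=2$ identity from the $i=1$ identity to obtain the displayed formula for $F_J^M(P,Q)$. The typo $\mathrm{Tr}_{\mathcal{Y}_1}$ in the second constraint should be read as $\mathrm{Tr}_{\mathcal{Y}_2}$. Apart from the dimension issue in the first step, every step is a direct transplant of the standard argument, applied separately in the positive and negative parts.
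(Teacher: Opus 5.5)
Your argument is the paper's argument. The paper's proof consists only of the instruction to split the standard Uhlmann proof across $\mathcal{X}_1\otimes\mathcal{Y}_1$ and $\mathcal{X}_2\otimes\mathcal{Y}_2$. Your blockwise purifications, the unitary equivalence of purifications, and the identity $\|A_i^*B_i\|_1=F(\Pi_iJP\Pi_i,\Pi_iJQ\Pi_i)$ carry that out.

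Your dimension objection is correct, and the paper does not address it. Take $J=\mathrm{diag}(1,-1)$ on $\mathbb{C}^2$, $JP=JQ=vv^*$ with $v=(1,1)/\sqrt{2}$, and $\dim\mathcal{Y}=1$. Then one of $\mathcal{Y}_1,\mathcal{Y}_2$ must be $\{0\}$, which forces the corresponding $u_i=0$ even though $\Pi_iJP\Pi_i\neq 0$. So the statement as written is false, and a hypothesis like your $\dim\mathcal{Y}\ge r_1+r_2$ is needed.

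One side remark should be dropped. Block-diagonality of $JP$ and $JQ$, together with $\dim\mathcal{Y}\ge\max\{\mathrm{rank}(P),\mathrm{rank}(Q)\}$, does not give $\dim\mathcal{Y}\ge r_1+r_2$. For example, take $JP$ of rank one supported in $\mathcal{X}_1$, $JQ$ of rank one supported in $\mathcal{X}_2$, and $\dim\mathcal{Y}=1$; then $r_1+r_2=2$. State the fix as $\dim\mathcal{Y}\ge r_1+r_2$, or as the cruder $\dim\mathcal{Y}\ge 2\max\{\mathrm{rank}(P),\mathrm{rank}(Q)\}$.
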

\begin{proof}
    The proof is just carefully splitting up the proof of Uhlmann's theorem in \citep[p.~151--152]{Watrous_2018}.
\end{proof}

\section{Quantum \texorpdfstring{$U$}{U}-channels on \texorpdfstring{$S$}{S}-spaces:}\label{Section 8}
The notion of Krein quantum states has recently been extended in \cite{bag2024quantumuchannelssspaces} to complex Euclidean spaces, where instead of $J$, we use a general unitary to induce an indefinite inner-product on the underlying Hilbert space. Such a space is called an $S$-space. We show that we the expected notion of fidelity from the Krein case extends naturally from to the case of the $S$-space. In this section all underlying spaces are complex Euclidean.
\subsection{Some basic definitions and properties of S-spaces:}
\begin{definition}
    Let $\mathcal{X}$ be a Hilbert space and let $U\in U(\mathcal{X})$. The space $\mathcal{X}$ endowed with the sesquilinear form
    \begin{align*}
        [x,y]_U=\langle U^*x ,y\rangle\,\,\,\forall x,y\in \mathcal{X} 
    \end{align*}
    is called an $S$-space\cite{bag2024quantumuchannelssspaces}.
\end{definition}
Analogously to the Krein case, there is a natural notion of the adjoint map
\begin{definition}[$U$-adjoint.]\cite{bag2024quantumuchannelssspaces}
    For every $V\in L(\mathcal{X})$, there exists an operator $V^\#=U^*V^*U$ such that 
    \begin{align*}
        [Vx,y]_U=[x,V^\#y]_U
    \end{align*}
\end{definition}
Indeed, we have 
\begin{equation*}[x,V^\#]_U=\langle U^*x,U^*V^*Uy\rangle=\langle x,V^*Uy\rangle=\langle U^*Vx,y\rangle=[Vx,y]_U \end{equation*}

\begin{definition}[Cone of $U$-positive semi-definite matrices.]
A matrix $P \in L(\mathcal{X})$ is $U$ positive if $U^*P\in \mathrm{Pos}(\mathcal{X})$\cite{bag2024quantumuchannelssspaces}. We let $\mathrm{Pos}_U(\mathcal{X})$ denote the cone of $U$-positive semi-definite matrices.

\end{definition}
Using the sesquilinear form given by $U$, we can endow a different geometry on $\mathrm{Pos}_U(\mathcal{X})$ by following a similar program to \cite{franco2026conejhermitianmatricesgeometric}, in which multiplication is given by \begin{equation*}A\bullet B=AU^*B\end{equation*} With this definition of multiplication, the matrix $U$ acts as the identity since 
\begin{equation*}U\bullet A=UU^*A=AU^*U=A\end{equation*}
Using this definition of multiplication, we may also define several other important notions 
\begin{definition}[$U$-inverse]
A matrix $A\in L(\mathcal{X})$ is invertible if and only if there exists a $B\in L(\mathcal{X})$ such that $A\bullet B=B\bullet A=U$. This matrix $B$ must be given by $B=UA^{-1}U$
\end{definition}
\begin{definition}[$U$-square root]
Given $A\in \mathrm{Pos}_U(\mathcal{X})$, we define the $U$-square root to be 
\begin{equation*}A_U^{\frac{1}{2}}=U\sqrt{U^*A}\end{equation*}
\end{definition}
We see that the above definition then satisfies 
\begin{equation*}A_U^{\frac{1}{2}}\bullet A_U^{\frac{1}{2}}=U\sqrt{U^*A}U^*U\sqrt{U^*A}=U(U^*A)=A\end{equation*}
\begin{proposition}
    For all invertible $A\in L(\mathcal{X})$, we have $APA^\#\in \mathrm{Pos}_U(\mathcal{X})$. In other words, the group of $|\Sigma|\times |\Sigma|$ invertible matrices in $L(\mathcal{X})$ acts as a transitive group action by $U$-conjugation on $\mathrm{Pos}_{U}(\mathcal{X})$. 
\end{proposition}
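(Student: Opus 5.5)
The plan is to reduce everything to the ordinary congruence action on $\mathrm{Pos}(\mathcal{X})$ by ``untwisting'' with $U^*$, exactly as was done with $J$ in the Krein case. Write $P\in\mathrm{Pos}_U(\mathcal{X})$ as $P=UR$ with $R=U^*P\in\mathrm{Pos}(\mathcal{X})$. Recall the $U$-adjoint $A^\#=U^*A^*U$. Then
\begin{equation*}
U^*APA^\# = U^*A\,UR\,U^*A^*U = (U^*AU)\,R\,(U^*AU)^*.
\end{equation*}
This is a congruence of the positive semidefinite operator $R$, so it lies in $\mathrm{Pos}(\mathcal{X})$. Hence $APA^\#\in\mathrm{Pos}_U(\mathcal{X})$. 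Invertibility of $A$ is not needed for this step. It only ensures that the map preserves rank, and that it restricts to a bijection of the positive definite part $\mathrm{Pd}_U(\mathcal{X})=U\,\mathrm{Pd}(\mathcal{X})$.

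Next I verify the group action axioms for $(A,P)\mapsto APA^\#$. The identity acts trivially since $\mathbf{1}^\#=U^*U=\mathbf{1}$. Compatibility holds since
\begin{equation*}
(AB)^\#=U^*B^*A^*U=(U^*B^*U)(U^*A^*U)=B^\#A^\#,
\end{equation*}
so $(AB)P(AB)^\#=A(BPB^\#)A^\#$.

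For transitivity, the honest statement is transitivity on $\mathrm{Pd}_U(\mathcal{X})$, or more generally on each rank stratum of $\mathrm{Pos}_U(\mathcal{X})$. The action preserves the rank of $U^*P$, so it cannot be transitive on all of $\mathrm{Pos}_U(\mathcal{X})$; I would state this caveat explicitly. Given $P=UR$ and $Q=US$ with $R,S\in\mathrm{Pd}(\mathcal{X})$, the conjugation formula above says I need $B:=U^*AU$ to satisfy $BRB^*=S$. I take $B=S^{1/2}R^{-1/2}$, that is,
\begin{equation*}
A=U S^{1/2}R^{-1/2}U^*.
\end{equation*}
Then $APA^\#=U\,BRB^*=US=Q$.

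For equal rank $k$ in the semidefinite case, I diagonalize $R=V D_R V^*$ and $S=W D_S W^*$, with $D_R,D_S$ having $k$ positive entries. I choose an invertible diagonal $D$ with $DD_RD=D_S$ on the support and $1$ elsewhere, and set $B=WDV^*$. Then $A=UBU^*$ as before.

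The only real subtlety is this rank obstruction, which forces the transitivity claim to be read on $\mathrm{Pd}_U(\mathcal{X})$ or on rank strata. The rest is the one-line untwisting identity.
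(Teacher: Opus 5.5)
Your argument is correct. The paper states this proposition without proof. Your identity $U^*APA^\# = (U^*AU)(U^*P)(U^*AU)^*$ is exactly the ``untwisting'' by $U^*$ that the paper uses everywhere else. Together with $\mathbf{1}^\#=\mathbf{1}$ and $(AB)^\#=B^\#A^\#$, it fully establishes the first sentence, and, as you note, it does not need invertibility.

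You are also right that the second sentence is false as written. The action preserves the rank of $U^*P$, and $0\in\mathrm{Pos}_U(\mathcal{X})$ is a fixed point, so the action is not transitive on $\mathrm{Pos}_U(\mathcal{X})$. Transitivity holds only on $\mathrm{Pd}_U(\mathcal{X})$, and more generally on each rank stratum. Your explicit choices $A=US^{1/2}R^{-1/2}U^*$ and $A=UWDV^*U^*$ prove exactly this. For the second construction, first order the eigenvalues so that the supports of $D_R$ and $D_S$ coincide.
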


\begin{definition}
    We define the $U$-exponential as the map 
    $\exp_U(X)=U\exp(U^*X)$ which restricts to a bijection 
    \begin{equation*}\exp_U(X):\mathrm{Herm}_U(\mathcal{X})\to \mathrm{Pd}_U(\mathcal{X})\end{equation*}
\end{definition}
The definition of $\mathrm{Herm}_U(\mathcal{X})$ is obviously the $X\in L(\mathcal{X})$ invariant under $U$-conjugation. From the above we can also define $\log_U(X)$ to be the inverse of $\exp_U$ when we restrict $\exp_U$ to $\mathrm{Herm}_U(\mathcal{X})$. We can also consider the quantity \begin{equation*}\exp_U(t\log_U(X))=X_U^t\end{equation*}, as in the $J$ case, for $X\in \mathrm{Pd}_U(\mathcal{X})$
\subsection{Geodesics in \texorpdfstring{$\mathrm{Pd}_U(\mathcal{X})$}{PdU(X)}.}
In this subsection, we repeat some of the program carried out in \cite{franco2026conejhermitianmatricesgeometric}, but now in the $S$-space setting.

Let $\Phi_U:\mathrm{Pd}_U(\mathcal{X})\to \mathrm{Pd}(\mathcal{X})$ be the map $\Phi_U(X)=U^*X$ and define $\omega^U=\Phi^*_U(\langle\cdot,\cdot\rangle)$. This then satisfies for all $P\in \mathrm{Pd}_U(\mathcal{X})$
\begin{equation*}\omega^U_P(R,S)=\Phi^*(\langle R,S\rangle)=\langle U^*R,U^*S\rangle_{U^*P} \end{equation*} for $R,S \in \mathrm{Herm}_{U}(\mathcal{X})$. Indeed, we obtain
\begin{align*}
    \langle U^*R,U^*S\rangle_{U^*P}&=\mathrm{Tr}((U^*P)^{-1}U^*R(U^*P)^{-1}U^*S)\\
    &=\mathrm{Tr}(P^{-1}UU^*RP^{-1}UU^*S)\\
    &=\mathrm{Tr}(P^{-1}RP^{-1}S)
\end{align*}

\begin{theorem}
    Let $A,B\in \mathrm{Pd}_U(\mathcal{X})$. The map $\gamma:[0,1]\to \mathrm{Pd}_U(\mathcal{X})$ given by 
    \begin{equation*}\gamma(t)=A_U^{\frac{1}{2}}\bullet ((A_U^{-\frac{1}{2}})\bullet B\bullet (A_U^{-\frac{1}{2}}))_U^t\bullet A_U^{\frac{1}{2}}\end{equation*} is the equation of the geodesic between $A$ and $B$ in $\mathrm{Pd}_U(\mathcal{X})$ with respect to $\omega^U$. 
\end{theorem}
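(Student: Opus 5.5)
The plan is to transport everything through the diffeomorphism $\Phi_U:\mathrm{Pd}_U(\mathcal{X})\to\mathrm{Pd}(\mathcal{X})$, $X\mapsto U^*X$. By construction $\omega^U=\Phi_U^*(\langle\cdot,\cdot\rangle)$, so $\Phi_U$ is an isometry of Riemannian manifolds from $(\mathrm{Pd}_U(\mathcal{X}),\omega^U)$ onto $\mathrm{Pd}(\mathcal{X})$ with the trace metric $\langle U,V\rangle_P=\mathrm{Tr}(P^{-1}UP^{-1}V)$. Isometries carry geodesics to geodesics bijectively. Hence the unique geodesic from $A$ to $B$ in $\mathrm{Pd}_U(\mathcal{X})$ is $\Phi_U^{-1}\circ\tilde\gamma$, where $\tilde\gamma$ is the geodesic from $U^*A$ to $U^*B$ in $\mathrm{Pd}(\mathcal{X})$ given by the earlier lemma:
\[
\tilde\gamma(t)=\sqrt{U^*A}\,\bigl((U^*A)^{-\frac12}U^*B(U^*A)^{-\frac12}\bigr)^t\sqrt{U^*A}.
\]
It therefore suffices to show that $U^*\gamma(t)=\tilde\gamma(t)$ for every $t$, where $\gamma(t)$ is the expression in the statement.

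The key step is a dictionary of identities for the $\bullet$-calculus. First, $A\bullet B=AU^*B$ gives $U^*(A\bullet B)=(U^*A)(U^*B)$, so $U^*$ converts $\bullet$-products into ordinary products. Second, $U^*A_U^{1/2}=\sqrt{U^*A}$ by the definition of the $U$-square root. Third, for $X\in\mathrm{Pd}_U(\mathcal{X})$, $U^*X_U^t=U^*\exp_U(t\log_U X)=(U^*X)^t$, since $\exp_U=U\exp(U^*\cdot)$ and $\log_U=U\log(U^*\cdot)$. Fourth, we need $U^*A_U^{-1/2}=(U^*A)^{-1/2}$, interpreting $A_U^{-1/2}$ as the $\bullet$-inverse of $A_U^{1/2}$.

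The fourth identity is where I expect the main obstacle. The stated formula $B=UA^{-1}U$ for the $U$-inverse yields $U^*B=A^{-1}U=(U^*A)^{-1}$ only if $U^*A^{-1}U=A^{-1}U$ in the appropriate sense, which does not hold in general. So I will first recompute the $\bullet$-inverse directly from $A\bullet B=U$, that is $AU^*B=U$. This gives $B=UA^{-1}U$, and then $U^*B=A^{-1}U=(U^*A)^{-1}$. So the fourth identity does hold, and the apparent discrepancy is only in how the formula is written. Applying this with $A_U^{1/2}$ in place of $A$ gives $U^*A_U^{-1/2}=(\sqrt{U^*A})^{-1}$.

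I also need to check that the inner expression $C=A_U^{-1/2}\bullet B\bullet A_U^{-1/2}$ lies in $\mathrm{Pd}_U(\mathcal{X})$, so that $C_U^t$ is defined. This follows because $U^*C=(U^*A)^{-1/2}(U^*B)(U^*A)^{-1/2}$ is positive definite.

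With these identities in hand, apply $U^*$ to $\gamma(t)$ and push it through each $\bullet$ factor. This yields exactly $\tilde\gamma(t)$, which proves $U^*\gamma(t)=\tilde\gamma(t)$. The endpoint checks $\gamma(0)=A$ and $\gamma(1)=B$ then follow from the same identities, and the isometry argument completes the proof. Uniqueness and minimality of the geodesic transfer directly from $\mathrm{Pd}(\mathcal{X})$.
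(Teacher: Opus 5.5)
Your proposal is correct and takes essentially the same approach as the paper. Both transport along the isometry $X\mapsto U^*X$ (the paper phrases this as $\gamma(t)=U\delta(t)$) and use $X_U^t=U(U^*X)^t$ and the $\bullet$-calculus to identify the formula with the standard geodesic in $\mathrm{Pd}(\mathcal{X})$. Your worry about the $U$-inverse is unfounded: $U^*(UA^{-1}U)=A^{-1}U=(U^*A)^{-1}$ holds directly, so the fourth identity needs no detour.
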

\begin{proof}
    Let $\delta$ be the equation of the geodesic between $U^*A$ and $U^*B$ in $\mathrm{Pd}_{U}(\mathcal{X})$. This is the curve 
    \begin{equation*}\delta(t)=\sqrt{U^*A}((\sqrt{U^*A})^{-1}U^*B(\sqrt{U^*A})^{-1})^t\sqrt{U^*A}\end{equation*}
    Since multiplication by $U$ on $\mathrm{Pd}_{U}(\mathcal{X})$ is an isometry of Riemannian manifolds from $(\mathrm{Pd}_{U}(\mathcal{X}),\omega^U)$ to $\mathrm{Pd}(\mathcal{X})$, we need to show that $\gamma(t)=U\delta(t)$ for all $t\in [0,1]$. Since 
    \begin{align*}
        \exp_U(t\log_U(X))=U\exp(tU^*U\log(U^*X))=U\exp(t\log(U^*X))=U(U^*X)^t
    \end{align*}
    we have that $X_U^t=U(U^*X)^t$ for all $X\in \mathrm{Pd}_U(\mathcal{X})$. So
    \begin{equation*}{A_U^{-\frac{1}{2}}}\bullet B\bullet {A_U^{-\frac{1}{2}}}=(U(U^*A)^{-\frac{1}{2}})(U^*BU^*)(U(U^*A)^{-\frac{1}{2}})=U(U^*A)^{-\frac{1}{2}}(U^*B)(U^*A)^{-\frac{1}{2}})\end{equation*}
    Hence we have 
    \begin{align*}
        \gamma(t)&=A_U^{-\frac{1}{2}}\bullet ((A_U^{-\frac{1}{2}})^{-1}\bullet B\bullet (A_U^{-\frac{1}{2}})^{-1})_U^t\bullet A_U^{-\frac{1}{2}}\\
        &=U(U^*A)^{-\frac{1}{2}}U^*U(U^*U(U^*A)^{-\frac{1}{2}}(U^*B)(U^*A)^{-\frac{1}{2}})^{t}U^*U(U^*A)^{-\frac{1}{2}}\\
        &=U(U^*A)^{\frac{1}{2}}((U^*A)^{-\frac{1}{2}}(U^*B)(U^*A)^{-\frac{1}{2}})^{t}(U^*A)^{-\frac{1}{2}}\\
        &=U\delta(t)
    \end{align*}
\end{proof}

\subsection{States, channels and measurements for S-spaces}
In this subsection, we list and occasionally construct the analogous definitions from the $J$-fidelity case to $S$-spaces, much of this is either done or alluded to in \cite{bag2024quantumuchannelssspaces}.

\begin{definition}[$U$-quantum state]
A matrix $P\in L(\mathcal{X})$ is a $U$-quantum state if it is $U$-positive and $\mathrm{Tr}(U^*P)=1$. \cite{bag2024quantumuchannelssspaces}
\end{definition}
We provide a slight re-characterization of the definitions in \cite{bag2024quantumuchannelssspaces}
\begin{definition}[$U$-positive,-completely positive map and $U$-quantum channel.]
Given $U$-spaces $(\mathcal{X},U_1)$ and $(\mathcal{Y},U_2)$, $\Phi:L(\mathcal{X})\to L(\mathcal{Y})$, 
    \begin{enumerate}
        \item $\Phi$ is $(U_1,U_2)$-positive if $\Phi$ maps $U_1$-positive elements to $U_2$-positive elements, which is equivalent to  $U_2\Phi(U_1^*P)$ being positive semi-definite for all $P\in \mathrm{Pos}(\mathcal{X})$.
        \item $\Phi$ is $(U_1,U_2)$-completely positive if given another $U$-space $(\mathcal{Z},U_3)$, the map \begin{equation*}\Phi\otimes \mathbf{1}_{L(\mathcal{Z})}:L(\mathcal{X})\otimes L(\mathcal{Z})\to L(\mathcal{Y})\otimes L(\mathcal{Z})\end{equation*} is $(U_1\otimes U_3)-(U_2\otimes U_3)$-positive, which is equivalent to \begin{equation*}U_2\Phi(U_1^*\cdot)\otimes \mathbf{1}_{L(\mathcal{Z})}\end{equation*} being positive for any $U$-space $\mathcal{Z}$.
        \item $\Phi$ is a $(U_1,U_2)$ channel if it is $(U_1,U_2)$-completely positive and $U_2\Phi(U_1^*\cdot)$ is trace-preserving. 
    \end{enumerate}
\end{definition}
We have, as in the Krein case, the following commutative diagrams. Given a $(U_1,U_2)$-positive map, the following diagram commutes
\begin{equation*}
\begin{tikzcd}[column sep=huge]
\mathrm{Pos}_{U_1}(\mathcal{X})
  \arrow[r, "\Phi"]
  \arrow[d, shift left=1, "U_1^*(\cdot)"]
&
\mathrm{Pos}_{U_2}(\mathcal{Y})
  \arrow[d, shift left=1, "U_2^*(\cdot)"] \\
\mathrm{Pos}(\mathcal{X})
  \arrow[r, "\Psi"']
  \arrow[u, shift left=1, "U_1(\cdot)"]
&
\mathrm{Pos}(\mathcal{Y}\otimes \mathcal{Z})
  \arrow[u, shift left=1, "U_2(\cdot)"]
\end{tikzcd}
\end{equation*}
where $\Psi=U_2\Phi(U_1^*\cdot)$

and for a $(U_1,U_2)$-completely positive map, the following commutes for any $U_3$
\begin{equation*}
\begin{tikzcd}[column sep=huge]
\mathrm{Pos}_{U_1\otimes U_3}(\mathcal{X}\otimes \mathcal{Z})
  \arrow[r, "\Phi\otimes \mathbf{1}_{L(\mathcal{Z})}"]
  \arrow[d, shift left=1, "U_1^*\otimes U_3^*(\cdot)"]
&
\mathrm{Pos}_{U_2\otimes U_3}(\mathcal{Y}\otimes \mathcal{Z})
  \arrow[d, shift left=1, "U_2^*\otimes U_3^*(\cdot)"] \\
\mathrm{Pos}(\mathcal{X}\otimes \mathcal{Z})
  \arrow[r, "\Psi\otimes \mathbf{1}_{L(\mathcal{Z})}"']
  \arrow[u, shift left=1, "U_1\otimes U_3(\cdot)"]
&
\mathrm{Pos}(\mathcal{Y}\otimes \mathcal{Z})
  \arrow[u, shift left=1, "U_2\otimes U_3(\cdot)"]
\end{tikzcd}
\end{equation*}

In similar style to Section \ref{Krein section} and as alluded to in \cite{bag2024quantumuchannelssspaces}, we can consider the notion of $U$-measurements. 

\begin{definition}[$U$-measurement]
    Given an alphabet $\Sigma$, a complex Euclidean space $\mathcal{X}$ and a unitary operator $U\in U(\mathcal{X})$, a $U$-measurement is a function
    \begin{equation*}\mu:\Sigma\to \mathrm{Pos}_U(\mathcal{X})\end{equation*} such that $\sum_{a\in\Sigma}\mu(a)=U$.
\end{definition}
\subsection{Motivation of S-space fidelity.}
In \ref{Krein section}, the notion of the $J$-geometric mean is useful for computing the measurement basis for which $J$-fidelity is achieved. We will demonstrate that the analogous result holds for $U$-states.

Given two $U$-states, we would like to minimize the value 
\begin{equation*}B_U(P,Q|\mu)=\sum_{a\in\Sigma}\sqrt{\langle \mu(a),P\rangle}\sqrt{\langle \mu(a),Q\rangle}\end{equation*} across all possible $U$-measurements $\mu$. Again, since this value is equal to \begin{equation*}\sum_{a\in\Sigma}\sqrt{\langle U^*\mu(a),U^*P\rangle}\sqrt{\langle U^*\mu(a),U^*Q\rangle}\end{equation*} we have by the regular quantum theory 
that the minimization of this over all $\mu$ is given by 
\begin{equation*}F(U^*P,U^*Q)=\mathrm{Tr}(\sqrt{U^*Q}U^*P\sqrt{U^*Q})=\|\sqrt{U^*P}\sqrt{U^*Q}\|_1\end{equation*}
\begin{definition}
    Given two $U$-quantum states, $P$ and $Q$, the $U$-fidelity of $P$ and $Q$ to be 
    $F_U(P,Q)=F(U^*P,U^*Q)$
\end{definition}
We now verify for completeness that this corresponds to taking a $U$-measurement in the '$U$-geometric' mean of $P$ and $Q^{-1}$. Again, we are just following the analysis of Chapter 3 of \cite{Watrous_2018} and the proof contained in \cite{wilde2015lecture16}. Assuming that $P$ and $Q$ are invertible, we have that the $U$-geometric mean of $P$ and $Q^{-1}$ is $U$-positive definite. Multiplying by $U^*$, we obtain a positive definite operator
    \begin{equation*}
        R=\sum_{i}\lambda_ix_ix_i^*
    \end{equation*}
    We then show that the optimal measurement is with respect to $x_ix_i^*$. We note that $RU^*QR=U^*P$
     since 
     \begin{align*}
         UR&=Q_U^{-\frac{1}{2}}\bullet\sqrt[U]{Q_U^{\frac{1}{2}}\bullet P\bullet Q_U^{\frac{1}{2}}}\bullet Q_U^{-\frac{1}{2}}\\
         &= U(U^*Q)^{-\frac{1}{2}}(\sqrt{U^*U\sqrt{U^*Q}U^*PU^*U(\sqrt{U^*Q})})U^*U(U^*Q)^{-\frac{1}{2}}\\
         &=U(U^*Q)^{-\frac{1}{2}}(\sqrt{\sqrt{U^*Q}U^*P(\sqrt{U^*Q})})(U^*Q)^{-\frac{1}{2}}
     \end{align*}
     Hence multiplying by $U^*$, we obtain 
     \begin{equation*}R=(U^*Q)^{-\frac{1}{2}}(\sqrt{\sqrt{U^*Q}U^*P(\sqrt{U^*Q})})(U^*Q)^{-\frac{1}{2}}\end{equation*}
     We then have 
     \begin{align*}
         RU^*QR=(U^*Q)^{-\frac{1}{2}}(\sqrt{U^*Q}U^*P(\sqrt{U^*Q})(\sqrt{U^*Q})^{-\frac{1}{2}}=U^*P   
     \end{align*}
     We then have that 
    \begin{align*}
        &\sum_{i}\sqrt{\mathrm{Tr}((U^*x_ix_i^*)^*P)\mathrm{Tr}((U^*x_ix_i^*)^*Q)}\\=& \sum_{i}\sqrt{\mathrm{Tr}(x_ix_i^*U^*P)\mathrm{Tr}((x_ix_i^* U^*Q)}\\
        =& \sum_i \sqrt{(x_i^*RU^*QRx_i)(x_i^*U^*Qx_i)}\\
        =& \sum_i \sqrt{(x_i^*\lambda_i U^*Q\lambda_ix_i)(x_i^*U^*Qx_i)}\\
        =& \sum_i \lambda_ix_i^*U^*Qx_i\\
        =&\mathrm{Tr}\left(\sum_{i}\lambda_ix_ix_i^*U^*Q\right)\\=&\mathrm{Tr}(RU^*Q)\\
        =&\mathrm{Tr}\left(\sqrt{\sqrt{U^*Q}U^*P\sqrt{U^*Q}}\right)\\
        =&F(U^*P,U^*Q)\\
        =&F_U(P,Q)
    \end{align*}
    
    If $\mu:\Sigma\to \mathrm{Pos}_U(\mathcal{X})$ is a $U$-POVM, then $U^*\mu$ is a POVM and hence by the regular quantum theory \cite{Watrous_2018}, we have 
    \begin{align*}
        F(U^*P,U^*Q)\leq \sum_{a\in \Sigma}\sqrt{\langle U^*\mu(a), U^*P\rangle}\sqrt{\langle U^*\mu(a),JQ\rangle}=\sum_{a\in \Sigma}\sqrt{\langle \mu(a), P\rangle}\sqrt{\langle \mu(a),Q\rangle}
    \end{align*}
    Hence, in the case $P$ and $Q$ are invertible, we have the $U$-fidelity is defined as 
    \begin{equation*}F_U(P,Q)=\mathrm{min}_{\mu}\sum_{a}\sqrt{\langle \mu(a),P\rangle}\sqrt{\langle \mu(a), Q\rangle}=F(U^*P, U^*Q)\end{equation*}
    In the case that $P$ and $Q$ are not invertible, we just take the appropriate projections as in the $J$-case.

\subsection{Properties of S-space fidelity}
We list the immediate corollaries of the definition of $U$-fidelity. The following is largely a repeat of \ref{conseqeunces of definiton}, which in turn is all adapting the results contained in Chapter 3 of \cite{Watrous_2018} to the slightly different $J$-space case. Where a proof is not given it is immediate from either \ref{conseqeunces of definiton} or \cite{Watrous_2018}.
\begin{proposition}
   We have the following properties hold given $S$-space $(\mathcal{X},U)$;
    \begin{itemize}
    
        \item The $U$-fidelity function is continuous at $P,Q \in \mathrm{Pos}_U(\mathcal{X})$.
        \item $F_U(\lambda P,Q)=\sqrt{\lambda}F_U(P,Q)=F(P,\lambda Q)$ for all real $\lambda\geq 0$.
        \item $F_U(P,Q)=F(\Pi_{(\mathrm{im}U^*Q)}U^*P\Pi_{(\mathrm{im}U^*Q)}, U^*Q)=F(P,\Pi_{(\mathrm{im}(U^*P))}U^*Q\Pi_{(\mathrm{im}U^*P)})$.
        \item $F_U(P,Q)\geq 0$ with equality iff $PQ=0$
        \item $F_U(P,Q)^2\leq \mathrm{Tr}(U^*P)\mathrm{Tr}(U^*Q)$ with equality iff $P$ and $Q$ are linearly dependent.
    \end{itemize}
\end{proposition}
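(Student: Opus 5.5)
The whole plan rests on one reduction. By definition $F_U(P,Q)=F(U^*P,U^*Q)$, and $P\mapsto U^*P$ is a linear bijection from $\mathrm{Pos}_U(\mathcal{X})$ onto $\mathrm{Pos}(\mathcal{X})$, since $U$ is unitary. So I would prove each bullet by applying the corresponding property of the ordinary fidelity from \cite[p.~140]{Watrous_2018} to the pair $(U^*P,U^*Q)$. Then I translate each hypothesis and conclusion back through $U^*$. This is the same argument as Proposition~\ref{properties of J fidelity}, with $J$ replaced by $U^*$ on the left.

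\textbf{Continuity and scaling.} The map $(P,Q)\mapsto(U^*P,U^*Q)$ is linear, hence continuous, and $F$ is continuous on $\mathrm{Pos}(\mathcal{X})\times\mathrm{Pos}(\mathcal{X})$; composing gives continuity of $F_U$. For scaling, $U^*(\lambda P)=\lambda U^*P$, so
\[
F_U(\lambda P,Q)=F(\lambda U^*P,U^*Q)=\sqrt{\lambda}\,F(U^*P,U^*Q),
\]
and the same holds in the second slot. I read the typo $F(P,\lambda Q)$ in the statement as $F_U(P,\lambda Q)$.

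\textbf{Projection identity.} The ordinary identity $F(A,B)=F(\Pi_{\mathrm{im}B}A\Pi_{\mathrm{im}B},B)$ applied to $A=U^*P$, $B=U^*Q$ gives the first equality directly. For the second, I read the right-hand side as $F(U^*P,\Pi_{\mathrm{im}(U^*P)}U^*Q\Pi_{\mathrm{im}(U^*P)})$, the $U$-untwisted form, since otherwise the first argument is not positive semidefinite.

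\textbf{Vanishing.} Ordinary fidelity gives $F(A,B)=0$ if and only if $AB=0$, so $F_U(P,Q)=0$ if and only if $U^*PU^*Q=0$. This is where the literal statement needs care, because $U^*PU^*Q=0$ is not obviously the same as $PQ=0$ when $U$ does not commute with $P$. I would state the correct condition as $PU^*Q=0$ (equivalently $P\bullet Q=0$), which follows from $U^*PU^*Q=0$ on multiplying by $U$, mirroring the condition $PJQ=0$ in the $J$-case. I expect this to be the only step where the statement as printed is not literally a transplant, so I would record the correction explicitly.

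\textbf{Cauchy--Schwarz bound.} Ordinary fidelity gives $F(A,B)^2\le\mathrm{Tr}(A)\mathrm{Tr}(B)$, with equality if and only if $A$ and $B$ are linearly dependent. Applying this to $A=U^*P$, $B=U^*Q$ gives the inequality. Since $U^*$ is injective, $U^*P$ and $U^*Q$ are linearly dependent exactly when $P$ and $Q$ are, which gives the equality case.
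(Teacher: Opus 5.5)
Your proposal is correct and follows the paper's route: the paper's proof only cites Proposition 3.12 of \cite{Watrous_2018}, which amounts to applying each ordinary-fidelity property to the pair $(U^*P,U^*Q)$ via the bijection $P\mapsto U^*P$, exactly as you do. Your correction of the vanishing condition to $PU^*Q=0$ is genuinely needed, not just cautious. Take $U$ the swap on $\mathbb{C}^2$, $P=Ue_1e_1^*$ and $Q=Ue_2e_2^*$: then $F_U(P,Q)=F(e_1e_1^*,e_2e_2^*)=0$, while $PQ=e_2e_2^*\neq 0$.
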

\begin{proof}
    See Proposition 3.12 of \cite{Watrous_2018}.
\end{proof}

\begin{proposition}
    Let $\mathcal{X}$ be a complex Euclidean space and let $v\in \mathcal{X}$ be a vector and let $P\in \mathrm{Pos}_U(\mathcal{X})$. We have that 
    \begin{align*}
        F_U(P, U^*vv^*)=\sqrt{v^*U^*Pv}
    \end{align*}
    and in particular we have that $F(U^*uu^*, U^*vv^*)=|\langle u,v\rangle |$
\end{proposition}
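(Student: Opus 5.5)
The plan is to reduce the statement to the ordinary (untwisted) fidelity, exactly as was done for $F_J$ and in the definition $F_U(P,Q)=F(U^*P,U^*Q)$. One point of interpretation is needed first. For the second argument to lie in $\mathrm{Pos}_U(\mathcal{X})$, its untwisting must be the rank-one positive operator $vv^*$. So I read the second argument as the $U$-positive operator $Uvv^*$, whose image under $\Phi_U$ is $U^*(Uvv^*)=vv^*$, and likewise $F(U^*uu^*,U^*vv^*)$ as $F_U(Uuu^*,Uvv^*)$, which is the fidelity of $uu^*$ and $vv^*$. With this reading the definition gives
\begin{equation*}
F_U(P,Uvv^*)=F(U^*P,vv^*), \qquad R:=U^*P\in\mathrm{Pos}(\mathcal{X}).
\end{equation*}
It therefore suffices to prove the classical identity $F(R,vv^*)=\sqrt{v^*Rv}$ for $R\in\mathrm{Pos}(\mathcal{X})$.

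For the classical identity, if $v=0$ both sides vanish. Otherwise $\sqrt{vv^*}=vv^*/\|v\|$, since this operator is positive semidefinite and squares to $vv^*$. Using the trace-norm form of fidelity,
\begin{equation*}
F(R,vv^*)=\bigl\|\sqrt{R}\sqrt{vv^*}\bigr\|_1=\frac{1}{\|v\|}\bigl\|(\sqrt{R}v)v^*\bigr\|_1 .
\end{equation*}
The operator $(\sqrt{R}v)v^*$ has rank at most one. Its only possible nonzero singular value is $\|\sqrt{R}v\|\,\|v\|$, so its trace norm is that product. Dividing by $\|v\|$ leaves $\|\sqrt{R}v\|=\sqrt{v^*Rv}$. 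Substituting $R=U^*P$ gives $\sqrt{v^*U^*Pv}$, which is the claimed expression.

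The \emph{in particular} clause follows by applying this with $P=Uuu^*$, so that $R=uu^*$. Then $\sqrt{v^*uu^*v}=\sqrt{|\langle u,v\rangle|^2}=|\langle u,v\rangle|$.

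The main obstacle is not computational. It is making sure the arguments are genuinely $U$-positive, so that the $U$-fidelity is defined and the untwisting through $\Phi_U$ lands in $\mathrm{Pos}(\mathcal{X})$. This is why I fix the reading $Uvv^*$ (resp.\ $Uuu^*$) at the outset. Once that is done, the rank-one trace-norm computation is the only real step, and it is identical to the proof in \cite{Watrous_2018} and to the corresponding $J$-space proposition in Subsection \ref{conseqeunces of definiton}.
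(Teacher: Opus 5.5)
Your proposal is correct and takes essentially the paper's route: the paper simply untwists via $F_U(P,Q)=F(U^*P,U^*Q)$ and defers to Proposition 3.13 of \cite{Watrous_2018}, and your rank-one computation $\bigl\|\sqrt{R}\,vv^*\bigr\|_1/\|v\|=\sqrt{v^*Rv}$ is that classical argument. You were also right to read the second argument as $Uvv^*$, since the literal $U^*vv^*$ is not $U$-positive in general (for $U=i\mathbf{1}_{\mathcal{X}}$ it untwists to $-vv^*$); this reading matches the paper's own convention in the $U$-Uhlmann lemma, $F_U(UAA^*,UBB^*)=F(AA^*,BB^*)$.
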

\begin{proof}
    See Proposition 3.13 of \cite{Watrous_2018}.
\end{proof}

\begin{proposition}
    Let $P,Q \in\mathrm{Pos}_U(\mathcal{X})$, then 
    \begin{align*}
        F_U(P,QU^*PU^*Q)=\langle P,Q \rangle
    \end{align*}
\end{proposition}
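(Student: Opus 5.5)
The plan is to reduce the statement to the classical identity $F(A, BAB)=\langle A,B\rangle$ for $A,B\in\mathrm{Pos}(\mathcal{X})$, exactly as in the proof of the corresponding $J$-proposition, and to use the definition $F_U(P,Q)=F(U^*P,U^*Q)$. Put $A=U^*P$ and $B=U^*Q$, both positive semidefinite because $P,Q\in\mathrm{Pos}_U(\mathcal{X})$. Then
\begin{equation*}
U^*\bigl(QU^*PU^*Q\bigr)=(U^*Q)(U^*P)(U^*Q)=BAB .
\end{equation*}
This uses only that $U^*Q=B$ and that the middle factor $U^*P$ is already in place. Hence the second argument $QU^*PU^*Q$ is $U$-positive, because $BAB\geq 0$. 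Its $U$-fidelity with $P$ equals $F(A,BAB)$.

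Next I evaluate $F(A,BAB)$ following \cite[pp.~142--143]{Watrous_2018}:
\begin{equation*}
F(A,BAB)=\mathrm{Tr}\sqrt{\sqrt{A}BAB\sqrt{A}}=\mathrm{Tr}\sqrt{\bigl(\sqrt{A}B\sqrt{A}\bigr)^2}=\mathrm{Tr}\bigl(\sqrt{A}B\sqrt{A}\bigr)=\mathrm{Tr}(AB).
\end{equation*}
The key point is that $\sqrt{A}B\sqrt{A}$ is positive semidefinite, so the square root of its square is itself.

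Finally I convert $\mathrm{Tr}(AB)$ back into $\langle P,Q\rangle$. Since $A$ is Hermitian, $\mathrm{Tr}(AB)=\langle A,B\rangle=\langle U^*P,U^*Q\rangle=\mathrm{Tr}(P^*UU^*Q)=\mathrm{Tr}(P^*Q)=\langle P,Q\rangle$. This is the same unitary-invariance of the Hilbert--Schmidt inner product used in the $S$-space Bhattacharyya computation above.

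The step most in need of care is the bookkeeping of $U$ versus $U^*$. In the Krein case $J=J^*=J^{-1}$ hid this issue, and here it must be checked that the middle $U^*$ in $QU^*PU^*Q$ is exactly what makes $U^*(QU^*PU^*Q)$ factor as $BAB$. Beyond that, every step is a direct transplant, with no invertibility assumption needed.
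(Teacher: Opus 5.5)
Your proof is correct and is essentially the paper's argument: both rewrite $F_U(P,QU^*PU^*Q)$ as $F\bigl(U^*P,(U^*Q)(U^*P)(U^*Q)\bigr)$ and then run the classical computation $\mathrm{Tr}\sqrt{\sqrt{A}BAB\sqrt{A}}=\mathrm{Tr}(\sqrt{A}B\sqrt{A})=\mathrm{Tr}(AB)$. Your final step $\langle U^*P,U^*Q\rangle=\mathrm{Tr}(P^*UU^*Q)=\langle P,Q\rangle$, together with your check that $QU^*PU^*Q$ is $U$-positive, supplies what the paper leaves implicit, since its displayed chain stops at $\langle U^*P,U^*Q\rangle$.
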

 \begin{proof}We follow the proof of Proposition 3.14 \cite{Watrous_2018} with the appropriate changes.
     \begin{align*}
     F_U(P,QPQ)&=F(U^*P,U^*QU^*PU^*Q)\\
     =&\mathrm{Tr}(\sqrt{\sqrt{U^*P}U^*QU^*PU^*Q\sqrt{U^*P}})\\
     =&\mathrm{Tr}(\sqrt{U^*P}U^*Q\sqrt{U^*P})\\
     =&\mathrm{Tr}(U^*PU^*Q)\\
     =&\langle U^*P, U^*Q\rangle\\
     \end{align*}
 \end{proof}
\begin{lemma}[Winter's gentle measurement lemma for $U$-fidelity]
Let $\mathcal{X}$ be a complex Euclidean space with $U$-inner-product and let $\rho$ be a $U$-quantum state and let $P\in \mathrm{Pos}_U(\mathcal{X})$ be a positive semi-definite operator satisfying $U^*P\leq \mathbf{1}_{\mathcal{X}}$ and $\langle P,\rho\rangle>0$. Then
\begin{equation*}F_U\left(\rho, \frac{\sqrt{U^*P}U^*\rho U^*\sqrt{U^*P}}{\langle U^*P, U^*\rho\rangle}\right)\geq \langle P,\rho\rangle\end{equation*}
\end{lemma}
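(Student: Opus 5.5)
The plan is to reduce everything to the ordinary gentle measurement lemma through the untwisting map $X\mapsto U^*X$, exactly as in the $J$-case. Set $\sigma=U^*\rho\in D(\mathcal{X})$ and $M=U^*P\in\mathrm{Pos}(\mathcal{X})$ with $M\leq \mathbf{1}_{\mathcal{X}}$. Since $U$ is unitary, $\langle P,\rho\rangle=\mathrm{Tr}(P^*\rho)=\mathrm{Tr}(P^*UU^*\rho)=\langle U^*P,U^*\rho\rangle=\langle M,\sigma\rangle$. So the normalising constant is the positive real number $\langle M,\sigma\rangle$, the same one appearing in the denominator of the statement.

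\textbf{Main obstacle: reading the second argument as a $U$-positive operator.} Taken literally, $U^*$ applied to $\sqrt{U^*P}\,U^*\rho\, U^*\sqrt{U^*P}$ need not be positive semidefinite, so the statement only makes sense once its second argument is read as a $U$-positive operator. By analogy with the $J$-version, where the second argument equals $J\sqrt{JP}\,J\rho\,\sqrt{JP}$, I will read it as the operator $Q$ whose untwisting is
\begin{equation*}
U^*Q=\frac{\sqrt{M}\,\sigma\,\sqrt{M}}{\langle M,\sigma\rangle},
\end{equation*}
that is, $Q=U\sqrt{M}\sigma\sqrt{M}/\langle M,\sigma\rangle$. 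This is the $U$-analogue of the $J$-formula, and it is manifestly $U$-positive. I will state this reading explicitly. Everything else is mechanical.

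Under this reading, $F_U(\rho,Q)=F(\sigma,U^*Q)$. The scaling property in the preceding proposition (the $U$-version of Proposition~\ref{properties of J fidelity}) gives
\begin{equation*}
F(\sigma,U^*Q)=\frac{1}{\sqrt{\langle M,\sigma\rangle}}\,F(\sigma,\sqrt{M}\sigma\sqrt{M}).
\end{equation*}
For the remaining fidelity I compute directly:
\begin{equation*}
F(\sigma,\sqrt{M}\sigma\sqrt{M})=\mathrm{Tr}\sqrt{\sqrt\sigma\sqrt M\sigma\sqrt M\sqrt\sigma}=\mathrm{Tr}\bigl(\sqrt\sigma\sqrt M\sqrt\sigma\bigr)=\langle\sqrt M,\sigma\rangle.
\end{equation*}
The middle equality holds because $\sqrt\sigma\sqrt M\sqrt\sigma$ is positive semidefinite and its square is the operator under the root. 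This is the same step as in the proof of $F_U(P,QU^*PU^*Q)=\langle P,Q\rangle$ above.

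Finally, $0\leq M\leq\mathbf{1}$ implies $\sqrt M\geq M$, so $\langle\sqrt M,\sigma\rangle\geq\langle M,\sigma\rangle$. Hence $F_U(\rho,Q)\geq\sqrt{\langle M,\sigma\rangle}=\sqrt{\langle P,\rho\rangle}$. The statement claims the bound $\langle P,\rho\rangle$ without the square root, so one more step is needed. Since $M\leq\mathbf{1}$ and $\mathrm{Tr}\,\sigma=1$, we have $0<\langle P,\rho\rangle=\langle M,\sigma\rangle\leq 1$, and therefore $\sqrt{\langle P,\rho\rangle}\geq\langle P,\rho\rangle$. This gives the stated inequality, and in fact the stronger square-root form that matches the $J$-version.
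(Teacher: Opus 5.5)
Your proof is correct and is essentially the paper's argument: untwist by $U^*$, pull out the normalising constant with the scaling property, evaluate $F(\sigma,\sqrt{M}\sigma\sqrt{M})=\langle\sqrt{M},\sigma\rangle$, and finish with $\sqrt{M}\geq M$. Your reading of the second argument as $U\sqrt{U^*P}\,U^*\rho\sqrt{U^*P}/\langle U^*P,U^*\rho\rangle$ is exactly the operator the paper's proof uses, and your closing observation $\sqrt{\langle P,\rho\rangle}\geq\langle P,\rho\rangle$ supplies the step to the stated bound that the paper leaves implicit.
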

\begin{proof}
    We have (again following the proof structure in \cite[p.~143]{Watrous_2018} that
    \begin{align*}
        F_U\left(\rho,\frac{U\sqrt{U^*P}U^*\rho \sqrt{U^*P}}{\langle U^*P,U^*\rho\rangle}\right)=\frac{1}{\sqrt{\langle U^*P,U^*\rho\rangle}}F(U^*\rho, \sqrt{U^*P}U^*\rho \sqrt{U^*P})=\frac{\langle \sqrt{U^*P},U^*\rho\rangle}{\sqrt{\langle U^*P, U^*\rho\rangle}}
    \end{align*}
    and from the assumption on $U^*P$ we have the result follows. 
\end{proof}

\subsection{Characterizing S-space fidelity}
Let $\mathcal{X}$ be a complex Euclidean space with unitary inner-product $[\cdot,\cdot]_U$, and let $P, Q \in \mathrm{Pos}_U(\mathcal{X})$. We have
\begin{align*}
    F_U(P,Q)
    = \max \left\{ \left| \mathrm{Tr}(X) \right| : X \in L(\mathcal{X}), \;
    \begin{bmatrix}
        U^*P & X \\
        X^{*} & U^*Q
    \end{bmatrix}
     \in \mathrm{Pos}(\mathcal{X}\oplus\mathcal{X})
    \right\}.
\end{align*}

In the exact same way as seen in \cite[p.~147]{Watrous_2018}, we obtain the semi-definite program view of fidelity in that the $U$-fidelity $F_U(P,Q)$ gives the optimal value of the primal problem given by maximizing 
 \begin{align*}
     \frac{1}{2}\mathrm{Tr}(X)+\frac{1}{2}\mathrm{Tr}(X^*)
 \end{align*}
 subject to 
 \begin{align*}
     \begin{bmatrix}
        U^*P & X \\
        X^{*} & U^*Q
    \end{bmatrix}\geq 0
 \end{align*} for $X\in L(\mathcal{X})$.
We may also obtain the $U$-formulation of Alberti's theorem
\begin{theorem}[$U$ version of Alberti's theorem]
    For $P,Q\in \mathrm{Pos}_U(\mathcal{X})$ we have 
    \begin{align*}
        F_U(P,Q)^2=\inf\{ [P, Y]_U[Q,Y]_U^{-1}: Y\in \mathrm{Pd}_U(\mathcal{X})]\}
    \end{align*}
\end{theorem}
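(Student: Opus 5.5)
The plan is to reduce the statement to the ordinary Alberti theorem \citep[p.~148]{Watrous_2018} by untwisting with $U^*$, exactly as was done for $F_U$ itself. Recall that Alberti's theorem states that for $A,B\in\mathrm{Pos}(\mathcal{X})$,
\begin{equation*}
F(A,B)^2=\inf\{\langle A,Z\rangle\langle B,Z^{-1}\rangle : Z\in\mathrm{Pd}(\mathcal{X})\}.
\end{equation*}
By definition $F_U(P,Q)=F(U^*P,U^*Q)$, and $U^*P,U^*Q\in\mathrm{Pos}(\mathcal{X})$. So it suffices to show two things. First, that $Y\mapsto U^*Y$ (with inverse $Z\mapsto UZ$) is a bijection $\mathrm{Pd}_U(\mathcal{X})\to\mathrm{Pd}(\mathcal{X})$. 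Second, that under this bijection the quantity $[P,Y]_U\,[Q,Y]_U^{-1}$ in the statement becomes $\langle U^*P,Z\rangle\langle U^*Q,Z^{-1}\rangle$.

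The first point is immediate from the definition of $\mathrm{Pd}_U(\mathcal{X})$, namely $U^*Y\in\mathrm{Pd}(\mathcal{X})$. For the second I read the pairings as Hilbert–Schmidt pairings on $L(\mathcal{X})$ twisted by $U$: $[A,B]_U=\langle U^*A,U^*B\rangle=\langle A,B\rangle$ in the first slot. I read $[Q,Y]_U^{-1}$ as $[Q,Y_U^{-1}]_U$, where $Y_U^{-1}=UY^{-1}U$ is the $U$-inverse defined above. The one computation to carry out is
\begin{equation*}
U^*Y_U^{-1}=Y^{-1}U=(U^*Y)^{-1}=Z^{-1},
\end{equation*}
so that $[Q,Y_U^{-1}]_U=\langle U^*Q,U^*Y_U^{-1}\rangle=\langle U^*Q,Z^{-1}\rangle$. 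Similarly $[P,Y]_U=\langle U^*P,Z\rangle$. Substituting and taking the infimum over $Z=U^*Y\in\mathrm{Pd}(\mathcal{X})$ then gives $F(U^*P,U^*Q)^2=F_U(P,Q)^2$.

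The main obstacle is interpretive rather than computational. The statement writes $[Q,Y]_U^{-1}$, and literally taking the reciprocal of a scalar would give an infimum equal to $0$ (let $Y\to 0$ along scalar multiples). So the infimum must pair $Q$ with the $U$-inverse of $Y$, and I will state this reading explicitly. I will also check that the chosen twisting of the pairing is consistent with the $U$-measurement formalism, where probabilities are $\langle\mu(a),P\rangle=\langle U^*\mu(a),U^*P\rangle$. Since $U$ is unitary, left multiplication by $U^*$ preserves the Hilbert–Schmidt inner product, so any of the natural conventions gives the same scalars.

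Finally, I expect the singular case, where $P$ or $Q$ is not invertible, to need no separate treatment. Alberti's theorem holds for all positive semidefinite arguments, and the transfer above is a pure change of variables, so the identity holds on all of $\mathrm{Pos}_U(\mathcal{X})$.
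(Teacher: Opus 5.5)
Your proposal is correct and follows the paper's route: the paper just cites Watrous's Alberti theorem, implicitly through the same untwisting $Z=U^*Y$, and your check $U^*Y_U^{-1}=(U^*Y)^{-1}$ is what justifies that citation under the reading $[Q,Y]_U^{-1}:=[Q,Y_U^{-1}]_U$. Two side remarks need fixing: the literal scalar reading $[P,Y]_U/[Q,Y]_U$ is invariant under $Y\mapsto tY$, so it is not sent to $0$ by scaling (it fails because it is a generalized-eigenvalue ratio, not $F_U^2$); and the one-slot twist $\langle U^*A,B\rangle$, the direct analogue of $[x,y]_U$, does not give the same scalars as $\langle A,B\rangle$, so your choice of pairing is an interpretive decision, not something independent of convention.
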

\begin{proof}
    See \cite[pp.~148--149]{Watrous_2018}.
\end{proof}

We can also consider $U$-partial traces as in the $J$-case. If $P\in \mathrm{Pos}_U(\mathcal{X})$, then there exists a complex Euclidean space $\mathcal{Y}$ and a vector $u\in \mathcal{X}\otimes \mathcal{Y}$ such that \begin{equation*}\mathrm{Tr}_{\mathcal{Y}}((U\otimes I)(U^*\otimes I)uu^*))=U^*P\end{equation*} if and only if $\dim(\mathcal{Y})\geq \mathrm{rank}(P)$. We call the function $\mathrm{Tr}_{\mathcal{Y}}(U\cdot)$ the $U$-partial trace and similarly, we consider the $U$ trace to be $\mathrm{Tr}(U\cdot)$ (remembering $U\otimes I$ is still a unitary) and we will denote the $U$-partial trace $\mathrm{Tr}^U_{\mathcal{Y}}$ and the $U$-trace $\mathrm{Tr}^U$. 

\subsection{U-analog of Uhlmann's theorem}
Again, we are able to adapt Uhlmann's theorem to the $U$-case. 
\begin{lemma}
    Let $A,B\in L(\mathcal{Y},\mathcal{X})$ be operators for complex Euclidean spaces $\mathcal{X}$ and $\mathcal{Y}$ and endow $\mathcal{X}$ with unitary $U$. We have 
    \begin{align*}
        F_U(UAA^*,UBB^*)=\|A^*B\|_1
    \end{align*}
\end{lemma}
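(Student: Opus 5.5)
The plan is to reduce the statement to the ordinary fidelity identity $F(AA^*,BB^*)=\|A^*B\|_1$ by ``untwisting'' with $U^*$, as was done for the $J$-analog. By the definition of $U$-fidelity, $F_U(P,Q)=F(U^*P,U^*Q)$. With $P=UAA^*$ and $Q=UBB^*$ this gives $U^*P=U^*UAA^*=AA^*$ and $U^*Q=BB^*$. Both are positive semidefinite, so $P,Q\in\mathrm{Pos}_U(\mathcal{X})$ and the left-hand side is defined. Hence
\begin{equation*}
F_U(UAA^*,UBB^*)=F(AA^*,BB^*).
\end{equation*}

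Next I evaluate $F(AA^*,BB^*)$ directly, following \cite[p.~151]{Watrous_2018}. I use $F(P',Q')=\|\sqrt{P'}\sqrt{Q'}\|_1$ together with polar decompositions $A=\sqrt{AA^*}\,V$ and $B=\sqrt{BB^*}\,W$. Here $V,W\in L(\mathcal{Y},\mathcal{X})$ are partial isometries satisfying $VV^*=\Pi_{\mathrm{im}(A)}$ and $WW^*=\Pi_{\mathrm{im}(B)}$. Then
\begin{equation*}
A^*B=V^*\sqrt{AA^*}\sqrt{BB^*}\,W,
\end{equation*}
and $\sqrt{AA^*}\sqrt{BB^*}=VA^*BW^*$, because $VV^*$ fixes the range of $\sqrt{AA^*}$ and $WW^*$ fixes the range of $\sqrt{BB^*}$. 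The trace norm does not increase when we multiply by contractions, so each side bounds the other. Therefore
\begin{equation*}
\|A^*B\|_1=\left\|\sqrt{AA^*}\sqrt{BB^*}\right\|_1=F(AA^*,BB^*).
\end{equation*}

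The only delicate point is the possibly non-invertible, rectangular case. There $V$ and $W$ are partial isometries rather than unitaries, and I need the range identities such as $VV^*\sqrt{AA^*}=\sqrt{AA^*}$; these hold because $\mathrm{im}(\sqrt{AA^*})=\mathrm{im}(A)$. I would also remark that no projection convention for singular $U^*Q$ enters: the equality $F_U=F(U^*\cdot,U^*\cdot)$ is taken as the definition, and it agrees with the measurement minimization by the earlier argument.
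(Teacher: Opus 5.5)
Your proof is correct and takes the same approach as the paper. The definition $F_U(P,Q)=F(U^*P,U^*Q)$ reduces the claim to $F(AA^*,BB^*)=\|A^*B\|_1$; the paper cites this from Lemma 3.21 of Watrous's book, while you reprove it using polar decompositions, and your treatment of the rectangular, non-invertible case through the partial isometries $V,W$ and their range identities is sound.
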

\begin{proof} We have that 
    \begin{equation*}
    F_U(UAA^*,UBB^*)=F(AA^*,BB^*)
    \end{equation*}
    from which we may then apply the proof of Lemmma 3.21 of \cite{Watrous_2018} to the operators $AA^*$ and $BB^*$.
\end{proof}

\begin{theorem}
    Let $\mathcal{X}$ and $\mathcal{Y}$ be complex Euclidean spaces and let $U_1$ be unitary matrix for $\mathcal{X}$ and $U_2$ be a unitary matrix for $\mathcal{Y}$ and let $P,Q\in \mathrm{Pos}_U(\mathcal{X})$ be U-positive semi-definite operators having rank at most $\mathrm{dim}(\mathcal{Y})$ and let $u\in \mathcal{X}\otimes {Y}$ satisfy $\mathrm{Tr}^{U_1\otimes U_2}_{\mathcal{Y}}((U_1^*\otimes U_2^*uu^*))=U_1^*P$. It holds that 
    \begin{align*}
        F_U(P,Q)=\max\{|\langle u, v\rangle|:v\in \mathcal{X}\otimes \mathcal{Y}, \mathrm{Tr}_{\mathcal{Y}}^{U_1\otimes U_2}((U_1^*\otimes U_2^*)vv^*) \}
    \end{align*}
\end{theorem}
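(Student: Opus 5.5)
The plan is to reduce the statement to the ordinary Uhlmann theorem in \cite[pp.~151--152]{Watrous_2018}, following the $J$-analog proved earlier. The constraint in the maximum is presumably $\mathrm{Tr}_{\mathcal{Y}}^{U_1\otimes U_2}((U_1^*\otimes U_2^*)vv^*)=U_1^*Q$, mirroring the hypothesis on $u$. First I will simplify the $U$-partial trace conditions. Since $\mathrm{Tr}^{U_1\otimes U_2}_{\mathcal{Y}}(X)=\mathrm{Tr}_{\mathcal{Y}}((U_1\otimes U_2)X)$, the $U_1\otimes U_2$ and $U_1^*\otimes U_2^*$ factors cancel. The hypothesis on $u$ then reads $\mathrm{Tr}_{\mathcal{Y}}(uu^*)=U_1^*P$, and the constraint on $v$ reads $\mathrm{Tr}_{\mathcal{Y}}(vv^*)=U_1^*Q$. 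Both are genuine purification conditions for the positive semidefinite operators $U_1^*P$ and $U_1^*Q$. Their ranks equal those of $P$ and $Q$, hence are at most $\dim(\mathcal{Y})$, so purifications exist.

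Next, write $u=\mathrm{vec}(A)$ and choose $w=\mathrm{vec}(B)$ with $\mathrm{Tr}_{\mathcal{Y}}(ww^*)=U_1^*Q$, for $A,B\in L(\mathcal{Y},\mathcal{X})$. Then $AA^*=U_1^*P$ and $BB^*=U_1^*Q$. By unitary equivalence of purifications \cite[Ch.~2]{Watrous_2018}, every admissible $v$ has the form $(\mathbf{1}_{\mathcal{X}}\otimes W)w$ for some $W\in U(\mathcal{Y})$. This gives the chain
\begin{align*}
\max_v|\langle u,v\rangle| &= \max_{W\in U(\mathcal{Y})}|\langle A, BW^{T}\rangle| = \max_{W}|\langle \overline{W}, A^*B\rangle| = \|A^*B\|_1 .
\end{align*}
The last equality is the variational characterization of the trace norm.

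To conclude, I will invoke the preceding lemma, $F_{U_1}(U_1AA^*,U_1BB^*)=\|A^*B\|_1$. Here $U_1AA^*=U_1U_1^*P=P$ and likewise $U_1BB^*=Q$, so $\|A^*B\|_1=F_{U_1}(P,Q)$.

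The only delicate point is bookkeeping of where $U_1$ versus $U_1^*$ appears. In particular, I must check that the $U$-partial trace convention really cancels $U_1^*\otimes U_2^*$, so that $U_2$ plays no role. I must also check that the lemma is applied to $U_1AA^*$ and not to $U_1^*AA^*$. Everything else transfers verbatim from the $J$-case proof.
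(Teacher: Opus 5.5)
Your proposal is correct and follows the paper's proof essentially line for line: you vectorize $u=\mathrm{vec}(A)$ and a reference purification $w=\mathrm{vec}(B)$, use unitary equivalence of purifications to reduce to $\max_W|\langle \overline{W},A^*B\rangle|=\|A^*B\|_1$, and conclude with the lemma $F_{U_1}(U_1AA^*,U_1BB^*)=\|A^*B\|_1$. Your reading of the missing constraint as $=U_1^*Q$, and the cancellation $\mathrm{Tr}^{U_1\otimes U_2}_{\mathcal{Y}}((U_1^*\otimes U_2^*)vv^*)=\mathrm{Tr}_{\mathcal{Y}}(vv^*)$, are the intended ones; your bookkeeping is in fact cleaner than the paper's own definition of $w$, which has a stray $(U_1\otimes U_2)$ where $(U_1^*\otimes U_2^*)$ is meant.
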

\begin{proof} We adapt the proof contained in \cite{Watrous_2018} with the appropriate adjustments. 
    Let $A\in L(\mathcal{Y},\mathcal{X})$ be the operator such that $u=\mathrm{vec}(A)$ and let $w\in \mathcal{X}\otimes \mathcal{Y}$ be the vector satisfying $U^*_1Q=\mathrm{Tr}_{\mathcal{Y}}^{U_1\otimes U_2}((U_1\otimes U_2)ww^*)$ and let $B\in L(\mathcal{Y},\mathcal{X})$ be the operator for which $w=\mathrm{vec}(B)$. We then have 
    \begin{align*}
        &\mathrm{max}\{|\langle u,v\rangle|: v\in \mathcal{X}\otimes \mathcal{Y}, \mathrm{Tr}_{\mathcal{Y}}^{U_1\otimes U_2}((U_1^*\otimes U_2^*)vv^*)=U_1^*Q\}\\
        =& \max\{|\langle u, (\mathbf{1}_{\mathcal{X}}\otimes S)w\rangle| :S\in U(\mathcal{Y})\}\\
        =&\max{|\langle A,BS^T\rangle|:S\in U(\mathcal{Y})}\}\\
        =& \max\{|\langle \overline{S}, A^*B\rangle|:S\in U(\mathcal{Y})\}\\
        =&\|A^*B\|_1=F_{U}(U_1AA^*,U_1BB^*)=F_{U}(P,Q)
    \end{align*}
\end{proof}
\begin{corollary}
Let $u, v \in \mathcal{X} \otimes \mathcal{Y}$ be vectors for complex Euclidean spaces $\mathcal{X}$ and $\mathcal{Y}$ equipped with unitary matrices $U_1$ and $U_2$, so that $\mathcal{X} \otimes \mathcal{Y}$ is equipped with unitary matrix $U_1 \otimes U_2$. We have
\begin{align*}
F_{U_1}\Big(
\mathrm{Tr}_{\mathcal{Y}}^{U_1 \otimes U_2}\big((U^*_1 \otimes U^*_2)(uu^*)\big),
\mathrm{Tr}_{\mathcal{Y}}^{U_1 \otimes U_2}\big((U^*_1 \otimes U^*_2)(vv^*)\big)
\Big).
\end{align*}
\end{corollary}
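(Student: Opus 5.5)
The corollary as displayed is truncated: the right-hand side is missing. By analogy with the Krein corollary, I will prove the $S$-space analog of that identity,
\begin{align*}
F_{U_1}\Big(\mathrm{Tr}_{\mathcal{Y}}^{U_1 \otimes U_2}\big((U^*_1 \otimes U^*_2)(uu^*)\big),\mathrm{Tr}_{\mathcal{Y}}^{U_1 \otimes U_2}\big((U^*_1 \otimes U^*_2)(vv^*)\big)\Big)=\left\|\mathrm{Tr}_{\mathcal{X}}^{U_1 \otimes U_2}\big((U^*_1 \otimes U^*_2)(vu^*)\big)\right\|_1 .
\end{align*}
The plan is to copy the Krein proof verbatim, with $J_i$ replaced by the appropriate $U_i$ or $U_i^*$.

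\textbf{Step 1: reduce the $U$-partial traces to ordinary ones.} Write $u=\mathrm{vec}(A)$ and $v=\mathrm{vec}(B)$ with $A,B\in L(\mathcal{Y},\mathcal{X})$. Unwinding $\mathrm{Tr}^{U_1\otimes U_2}_{\mathcal{Y}}(\cdot)=\mathrm{Tr}_{\mathcal{Y}}((U_1\otimes U_2)\cdot)$, the unitaries cancel inside the argument, since $(U_1\otimes U_2)(U_1^*\otimes U_2^*)=\mathbf{1}$. This leaves $\mathrm{Tr}_{\mathcal{Y}}(uu^*)=AA^*$ and $\mathrm{Tr}_{\mathcal{Y}}(vv^*)=BB^*$.

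This cancellation is the step needing care. The paper's conventions are inconsistent: in the preceding theorem $U_1^*P$ is sometimes equated with $\mathrm{Tr}^{U_1\otimes U_2}_{\mathcal{Y}}$ of $(U_1^*\otimes U_2^*)uu^*$, and sometimes with that of $(U_1\otimes U_2)ww^*$. I will fix the reading under which the two arguments of $F_{U_1}$ are the $U_1$-positive operators $U_1AA^*$ and $U_1BB^*$, so that applying $U_1^*$ returns $AA^*$ and $BB^*$. Under any other consistent reading, one must check that the same unitaries appear on both arguments, so that they are removed in the next step.

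\textbf{Step 2: apply the $S$-space Uhlmann lemma.} The preceding lemma gives $F_{U_1}(U_1AA^*,U_1BB^*)=F(AA^*,BB^*)=\|A^*B\|_1$. Since the Schatten $1$-norm is invariant under transpose, this equals $\|(A^*B)^T\|_1$.

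\textbf{Step 3: identify the transpose with the partial trace over $\mathcal{X}$.} Use the standard vectorization identity $\mathrm{Tr}_{\mathcal{X}}(\mathrm{vec}(B)\mathrm{vec}(A)^*)=(A^*B)^T$ from \cite{Watrous_2018}. As in Step 1, the prefactor $U_1\otimes U_2$ of the $U$-partial trace cancels against $U_1^*\otimes U_2^*$ before the partial trace is taken, giving $\mathrm{Tr}_{\mathcal{X}}(vu^*)=(A^*B)^T$. This is the only other place the convention issue from Step 1 arises, and it is handled the same way. Chaining Steps 1 through 3 gives the claimed identity.
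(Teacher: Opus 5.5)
Your proposal is correct and follows the paper's proof line for line. The paper also writes $u=\mathrm{vec}(A)$ and $v=\mathrm{vec}(B)$, passes to $F_{U_1}(U_1AA^*,U_1BB^*)=\|A^*B\|_1$ via the preceding lemma, uses $\|A^*B\|_1=\|(A^*B)^T\|_1$, and ends with exactly your reconstructed right-hand side $\left\|\mathrm{Tr}_{\mathcal{X}}^{U_1\otimes U_2}\big((U_1^*\otimes U_2^*)(vu^*)\big)\right\|_1$. The paper identifies the arguments with $U_1AA^*$ and $U_1BB^*$ silently, even though the literal $U$-partial trace gives $AA^*$ and $BB^*$, so your explicit choice of reading is the same step stated openly; in Step 3, however, the literal cancellation already gives $(A^*B)^T$, so no reinterpretation is needed there.
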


\begin{proof} Again, we follow the method in \cite{Watrous_2018}.
Let $A,B \in L(\mathcal{Y},\mathcal{X})$ be such that $u=\mathrm{vec}(A)$ and $v=\mathrm{vec}(B)$.
\begin{align*}
&F_{U_1}\Big(
\mathrm{Tr}_{\mathcal{Y}}^{U_1 \otimes U_2}\big((U^*_1 \otimes U^*_2)(uu^*)\big),
\mathrm{Tr}_{\mathcal{Y}}^{U_1 \otimes U_2}\big((U_1^* \otimes U_2^*)(vv^*)\big)
\Big) \\
&= F_{U_1}(U_1A A^*,\, U_1 B B^*) \\
&= \|A^* B\|_1 \\
&= \|(A^* B)^T\|_1 \\
&= \left\|\mathrm{Tr}_{\mathcal{X}}^{U_1 \otimes U_2}\big((U_1^* \otimes U^*_2)(v u^*)\big)\right\|_1.
\end{align*}
\end{proof}
\section{Further questions:}
The cases of $J$-fidelity and $U$-fidelity are a little boring. We are performing the original analysis of quantum fidelity but in various stages we ``twist" or ``untwist" the cone of positive (semi)-definite matrices. A more interesting question is what if one relaxes the conditions on the map through which our inner-product is defined? Do we still obtain a natural notion of fidelity, which coincides both with the motivation of minimizing certain Bhattacharryya coefficients through measuring in the eigenbasis of some notion of the geometric mean?
\section*{Acknowledgements}
The author would like to thank Doctor Javier Alejandro Ch\'avez Dom\'inguez for continuing support and supervision on this project, and for the introduction to the world of quantum information. The author would also like to thank Doctor Keri Kornelson and Doctor Alex Grigo for early useful chats about possible directions and suggestions for topics to consider. The author also used ChatGPT 5.5 to search for references of basic results and to aid with the literature review stage, as well as fixing some LaTeX issues. 
\nocite{bhatia2009positive}
\nocite{bhatia2013matrix}
\nocite{johnston2021advanced}
\nocite{*}
\bibliographystyle{amsalpha}
\bibliography{Bibliography}
\end{document}